\titleformat{\subsubsection}[runin]
{\normalfont\bfseries}{\thesubsubsection}{1em}{}
\newcommand{\comments}[1]{}
 \theoremstyle{plain}	
 \newtheorem{thm}{Theorem}
 \theoremstyle{definition}	
 \newtheorem{rem}[thm]{Remark}
 \numberwithin{equation}{section}
\title{\Large{\textbf{Differential Complexes in Continuum Mechanics\footnote{To appear in \textit{Archive for Rational Mechanics and Analysis}.}}}}
\author{Arzhang Angoshtari\thanks{School of Civil and Environmental Engineering,
  Georgia Institute of Technology, Atlanta, GA 30332, USA. E-mail:  arzhang@gatech.edu.}
  \and Arash Yavari\thanks{School of Civil and Environmental Engineering \& The George W. Woodruff School of Mechanical Engineering,
  Georgia Institute of Technology, Atlanta, GA 30332, USA. E-mail:  arash.yavari@ce.gatech.edu.} }
\begin{document}

\maketitle

\begin{abstract}
We study some differential complexes in continuum mechanics that involve both symmetric and non-symmetric second-order tensors. In particular, we show that the tensorial analogue of the standard grad-curl-div complex can simultaneously describe the kinematics and the kinetics of motions of a continuum. The relation between this complex and the de Rham complex allows one to readily derive the necessary and sufficient conditions for the compatibility of the displacement gradient and the existence of stress functions on non-contractible bodies. We also derive the local compatibility equations in terms of the Green deformation tensor for motions of $2$D and $3$D bodies, and shells in curved ambient spaces with constant curvatures.  
\end{abstract}

\tableofcontents

\section{Introduction}

Differential complexes can provide valuable information for solving PDEs. The celebrated de Rham complex is a classical example. Let $\mathcal{B}$ be a $3$-manifold and let $\Omega^{k}(\mathcal{B})$ be the space of smooth\footnote{Throughout this paper, smooth means $C^{\infty}$.} $k$-forms on $\mathcal{B}$, i.e. $\boldsymbol{\alpha}\in\Omega^{k}(\mathcal{B})$ is an anti-symmetric $({}^{0}_{k})$-tensor with smooth components $\alpha_{i_{1}\cdots i_{k}}$. The exterior derivatives $d_{k}:\Omega^{k}(\mathcal{B})\rightarrow \Omega^{k+1}(\mathcal{B})$ are linear differential operators satisfying $d_{k+1}\circ d_{k}=0$, where $\circ$ denotes the composition of mappings.\footnote{When there is no danger of confusion, the subscript $k$ in $d_{k}$ is dropped.} Using the algebraic language, one can simply write the complex     
\begin{equation}
\scalebox{1}{\xymatrix@C=3ex{
0\ar[r] &\Omega^{0}(\mathcal{B}) \ar[r]^-{d} &\Omega^{1}(\mathcal{B}) \ar[r]^-{d} & \Omega^{2}(\mathcal{B}) \ar[r]^-{d} &\Omega^{3}(\mathcal{B}) \ar[r]  &0,  }} \nonumber
\end{equation}
to indicate that $d$ is linear and the composition of any two successive operators vanishes. Note that the first operator on the left sends $0$ to the zero function and the last operator on the right sends $\Omega^{3}(\mathcal{B})$ to zero. The  above complex is called the de Rham complex on $\mathcal{B}$ and is denoted by $\left(\Omega(\mathcal{B}), d\right)$.

The complex property $d\circ d=0$, implies that $\mathrm{im}\,d_{k}$ (the image of $d_{k}$) is a subset of $\ker d_{k+1}$ (the kernel of $d_{k+1}$). The complex $\left(\Omega(\mathcal{B}), d\right)$ is exact if $\mathrm{im}\, d_{k}=\ker d_{k+1}$. Given $\boldsymbol{\beta}\in\Omega^{k}(\mathcal{B})$, consider the PDE $d\boldsymbol{\alpha}=\boldsymbol{\beta}$. Clearly, $\boldsymbol{\beta}\in\mathrm{im}\,d$ is the necessary and sufficient condition for the existence of a solution. If $\left(\Omega(\mathcal{B}), d\right)$ is exact, then $d\boldsymbol{\beta}=0$ guarantees that $\boldsymbol{\beta}\in\mathrm{im}\,d$. In general, the de Rham cohomology groups $\mathrm{H}^{k}_{dR}(\mathcal{B})=\ker d_{k}/\mathrm{im}\,d_{k-1}$ quantify the deviation of $\left(\Omega(\mathcal{B}), d\right)$ from being exact, i.e. this complex is exact if and only if all $\mathrm{H}^{k}_{dR}(\mathcal{B})$ are the trivial group $\{0\}$. 

If $\mathrm{H}^{k}_{dR}(\mathcal{B})$ is finite dimensional, then the celebrated de Rham theorem tells us that $\dim \mathrm{H}^{k}_{dR}(\mathcal{B})= b_{k}(\mathcal{B})$, where the $k$-th Betti number $b_{k}(\mathcal{B})$ is a purely topological property of $\mathcal{B}$. For example, if $\mathcal{B}$ is contractible, i.e. it does not have any holes in any dimension, then $b_{k}(\mathcal{B})=0$, $k\geq1$, or if $\mathcal{B}$ is simply-connected, then $b_{1}(\mathcal{B})=0$. On contractible bodies, $d\boldsymbol{\beta}=0$ is the necessary and sufficient condition for the solvability of $d\boldsymbol{\alpha}=\boldsymbol{\beta}$. If $\mathcal{B}$ is non-contractible, then the de Rham theorem \citep[Theorem 18.14]{Lee2012} tells us that $\boldsymbol{\beta}\in\mathrm{im}\,d$ if and only if
\begin{equation}\label{ExactForm}
d\boldsymbol{\beta}=0, \text{ and } \int_{c_{k}}\boldsymbol{\beta}=0, 
\end{equation}
where for the purposes of this work, $c_{k}$ can be considered as an arbitrary closed (i.e. compact without boundary) $k$-dimensional $C^{0}$-manifold inside $\mathcal{B}$.\footnote{In fact, $c_{k}$ is a singular $k$-chain in $\mathcal{B}$ that can be identified with (a formal sum of) closed $k$-manifolds for integration, see standard texts such as \citep{Bredon1993,Lee2012} for the precise definition of $c_{k}$.} For more details on the de Rham complex, we refer the readers to the standard texts in differential geometry such as \citep{BottTu2010,Lee2012}. 

To summarize, we observe that the de Rham complex together with the de Rham theorem provide the required conditions for the solvability of $d\boldsymbol{\alpha}=\boldsymbol{\beta}$. Suppose that $\mathcal{B}$ is the interior of another manifold $\bar{\mathcal{B}}$. One can restrict $\left(\Omega(\mathcal{B}), d\right)$ to $\left(\Omega(\bar{\mathcal{B}}), d\right)$, where $\Omega^{k}(\bar{\mathcal{B}})$ is the space of those smooth forms in $\Omega^{k}(\mathcal{B})$ that can be continuously extended to the boundary $\partial\bar{\mathcal{B}}$ of $\bar{\mathcal{B}}$. If $\bar{\mathcal{B}}$ is compact, then $\left(\Omega(\bar{\mathcal{B}}), d\right)$ induces various Hodge-type decompositions on $\Omega^{k}(\bar{\mathcal{B}})$. Such decompositions allow one to study the above PDE subject to certain boundary conditions, e.g. see \citet{Schwarz1995,Gilkey1984}. On the other hand, it has been observed that differential complexes can be useful for obtaining stable numerical schemes. By properly discretizing the de Rham complex, one can develop stable mixed formulations for the Hodge-Laplacian \citep{Arnold2006,Arnold2010}.

Generalizing the above results for an arbitrary differential complex can be a difficult task, in general. This can be significantly simplified if one can establish a connection between a given complex and the de Rham complex. The grad-curl-div complex of vector analysis is a standard example. Let $C^{\infty}(\mathcal{B})$ and $\mathfrak{X}(\mathcal{B})$ be the spaces of smooth real-valued functions and smooth vector fields on $\mathcal{B}$. From elementary calculus, we know that for an open subset $\mathcal{B}\subset\mathbb{R}^{3}$, one can define the gradient operator $\mathrm{grad}:C^{\infty}(\mathcal{B})\rightarrow\mathfrak{X}(\mathcal{B})$, the curl operator $\mathrm{curl}:\mathfrak{X}(\mathcal{B})\rightarrow\mathfrak{X}(\mathcal{B})$, and the divergence operator $\mathrm{div}:\mathfrak{X}(\mathcal{B}) \rightarrow C^{\infty}(\mathcal{B})$. It is easy to show that $\mathrm{curl}\circ\mathrm{grad}=0$, and $\mathrm{div}\circ\mathrm{curl}=0$. These relations allow one to write the following complex    
\begin{equation}
\scalebox{1}{\xymatrix@C=3ex{
0\ar[r] &C^{\infty}(\mathcal{B}) \ar[r]^-{\mathrm{grad}} &\mathfrak{X}(\mathcal{B}) \ar[r]^-{\mathrm{curl}} & \mathfrak{X}(\mathcal{B}) \ar[r]^-{\mathrm{div}} &C^{\infty}(\mathcal{B}) \ar[r]  &0,  }} 
\end{equation}
that is called the grad-curl-div complex or simply the \textsf{gcd} complex. It turns out that the \textsf{gcd} complex is equivalent to the de Rham complex in the following sense. Let $\{X^{I}\}$ be the Cartesian coordinates of $\mathbb{R}^{3}$. We can define the following isomorphisms  
\begin{equation}\label{isogdc}
\begin{aligned}
\imath_{0}&:C^{\infty}(\mathcal{B})\rightarrow\Omega^{0}(\mathcal{B}), &&\imath_{0}(f)=f,  \\
\imath_{1}&:\mathfrak{X}(\mathcal{B})\rightarrow\Omega^{1}(\mathcal{B}),&&\left(\imath_{1}(\boldsymbol{Y})\right)_{I}=Y^{I},  \\
\imath_{2}&:\mathfrak{X}(\mathcal{B})\rightarrow\Omega^{2}(\mathcal{B}),&&\left(\imath_{2}(\boldsymbol{Y})\right)_{IJ}=\varepsilon_{IJK}Y^{K},  \\
\imath_{3}&:C^{\infty}(\mathcal{B})\rightarrow\Omega^{3}(\mathcal{B}), &&\left(\imath_{3}(f)\right)_{123}=f,    
\end{aligned}
\end{equation}
where $\varepsilon_{IJK}$ is the standard permutation symbol. Simple calculations show that 
\begin{equation}
\imath_{1}\circ\mathrm{grad}=d\circ\imath_{0},~~~~ \imath_{2}\circ\mathrm{curl}=d\circ\imath_{1}, 
~~~~\imath_{3}\circ\mathrm{div}=d\circ\imath_{2}. \nonumber
\end{equation}
These relations can be succinctly depicted by the following diagram. 
\begin{equation}
\begin{gathered}\label{gcdDeRhamCom}
\scalebox{1}{\xymatrix@C=3ex{0 \ar[r] &C^{\infty}(\mathcal{B}) \ar[r]^-{\mathrm{grad}} \ar[d]^{\imath_{0}} &\mathfrak{X}(\mathcal{B}) \ar[r]^-{\mathrm{curl}} \ar[d]^{\imath_{1}} &\mathfrak{X}(\mathcal{B}) \ar[r]^-{\mathrm{div}} \ar[d]^{\imath_{2}} &C^{\infty}(\mathcal{B})\ar[r] \ar[d]^{\imath_{3}} &0  \\
0\ar[r] &\Omega^{0}(\mathcal{B}) \ar[r]^-{d} &\Omega^{1}(\mathcal{B}) \ar[r]^-{d} & \Omega^{2}(\mathcal{B}) \ar[r]^-{d} &\Omega^{3}(\mathcal{B}) \ar[r] &0  } }
\end{gathered}
\end{equation}
Diagram (\ref{gcdDeRhamCom}) suggests that any result holding for the de Rham complex should have a counterpart for the \textsf{gcd} complex as well.\footnote{More precisely, isomorphisms $\imath_{0},\dots,\imath_{3}$ induce a complex isomorphism.} For example, diagram (\ref{gcdDeRhamCom}) implies that $\imath_{0},\dots,\imath_{3}$ also induce isomorphisms between the cohomology groups. This means that $\boldsymbol{Y}=\mathrm{grad}\,f$, if and only if $\imath_{1}(\boldsymbol{Y})=d(\imath_{0}(f))$, and similarly $\boldsymbol{Y}=\mathrm{curl}\,\boldsymbol{Z}$, if and only if $\imath_{2}(\boldsymbol{Y})=d(\imath_{1}(\boldsymbol{Z}))$. By using the de Rham theorem and (\ref{ExactForm}), one can show that $\boldsymbol{Y}$ is the gradient of a function if and only if   
\begin{equation}\label{Exactgrad}
\mathrm{curl}\,\boldsymbol{Y}=0, \text{ and }~~~ \int_{\ell}\imath_{1}(\boldsymbol{Y})=\int_{\ell}\boldsymbol{G}(\boldsymbol{Y},\boldsymbol{t}_{\ell})dS=0, ~\forall\ell\subset\mathcal{B},
\end{equation}
where $\ell$ is an arbitrary closed curve in $\mathcal{B}$, $\boldsymbol{t}_{\ell}$ is the unit tangent vector field along $\ell$, and $\boldsymbol{G}(\boldsymbol{Y},\boldsymbol{t}_{\ell})$ is the standard inner product of $\boldsymbol{Y}$ and $\boldsymbol{t}_{\ell}$ in $\mathbb{R}^{3}$. Similarly, one concludes that $\boldsymbol{Y}$ is the curl of a vector field if and only if   
\begin{equation}\label{Exactcurl}
\mathrm{div}\,\boldsymbol{Y}=0, \text{ and }~~~ \int_{\mathcal{C}}\imath_{2}(\boldsymbol{Y})=\int_{\mathcal{C}}\boldsymbol{G}(\boldsymbol{Y},\boldsymbol{N}_{\!\mathcal{C}})dA=0, ~\forall \mathcal{C}\subset\mathcal{B}, 
\end{equation}
where $\mathcal{C}$ is any closed surface in $\mathcal{B}$ and $\boldsymbol{N}_{\!\mathcal{C}}$ is the unit outward normal vector field of $\mathcal{C}$. If $\bar{\mathcal{B}}$ is compact and if we restrict $C^{\infty}(\mathcal{B})$ and $\mathfrak{X}(\mathcal{B})$ to smooth functions and vector fields over $\bar{\mathcal{B}}$, then by equipping the spaces in diagram (\ref{gcdDeRhamCom}) with appropriate $L^{2}$-inner products, $\imath_{0},\dots,\imath_{3}$ become isometries. Therefore, any orthogonal decomposition for $\Omega^{k}(\bar{\mathcal{B}})$, $k=1,2$, induces an equivalent decomposition for $\mathfrak{X}(\bar{\mathcal{B}})$ as well, e.g. see \citet[Corollary 3.5.2]{Schwarz1995}. Moreover, one can study solutions of the vector Laplacian $\boldsymbol{\Delta}=\mathrm{grad}\circ\mathrm{div}-\mathrm{curl}\circ\mathrm{curl}$, and develop stable numerical schemes for it by using the corresponding results for the Hodge-Laplacian \citep{Arnold2006,Arnold2010}. In summary, the diagram (\ref{gcdDeRhamCom}) allows one to extend all the standard results developed for the de Rham complex to the \textsf{gcd} complex.

The notion of a complex has been extensively used in linear elasticity. Motivated by the mechanics of distributed defects, and in particular incompatibility of plastic strains, \citet{Kroner1959} introduced the linear elasticity complex, also called the Kr\"{o}ner complex, which is equivalent to a complex in differential geometry due to \citet{Calabi1961}. \citet{Eastwood2000} derived a construction of the linear elasticity complex from the de Rham complex. \citet{Arnold2006} used the linear elasticity complex and obtained the first stable mixed formulation for linear elasticity. This complex can be used for deriving Hodge-type decompositions for linear elasticity as well \citep{GeymonatKrasucki2009}. To our best knowledge, there has not been any discussion on analogous differential complexes for general (nonlinear) continua, and in particular, nonlinear elasticity.

\paragraph{Contributions of this paper.}Introducing differential complexes for general continua is the main goal of this paper. We can summarize the main contributions as follows.
\begin{itemize}
\item We show that a tensorial analogue of the $\mathsf{gcd}$ complex called the $\boldsymbol{\mathsf{GCD}}$ complex, can describe both the kinematics and the kinetics of motions of continua. More specifically, the $\boldsymbol{\mathsf{GCD}}$ complex involves the displacement gradient and the first Piola-Kirchhoff stress tensor. We show that a diagram similar to (\ref{gcdDeRhamCom}) commutes for the $\boldsymbol{\mathsf{GCD}}$ complex as well, and therefore, the nonlinear compatibility equations in terms of the displacement gradient and the existence of stress functions for the first Piola-Kirchhoff stress tensor directly follow from (\ref{ExactForm}). Another tensorial version of the $\mathsf{gcd}$ complex is the $\boldsymbol{\mathsf{gcd}}$ complex that involves non-symmetric second-order tensors. This complex allows one to introduce stress functions for non-symmetric Cauchy stress and the second Piola-Kirchhoff stress tensors. By using the Cauchy and the second Piola-Kirchhoff stresses, one obtains complexes that only describe the kinetics of motion.

\item It has been mentioned in several references in the literature that the linear elasticity complex is equivalent to the Calabi complex, e.g. see \citep{Eastwood2000}. Although this equivalence is trivial for the kinematics part of the linear elasticity complex, in our opinion, it is not trivial at all for the kinetics part. Therefore, we include a discussion on the equivalence of these complexes by using a diagram similar to (\ref{gcdDeRhamCom}). Another reason for studying the above equivalence is that it helps us understand the relation between the linear elasticity complex and the $\boldsymbol{\mathsf{GCD}}$ complex. In particular, the linear elasticity complex is not the linearization of the $\boldsymbol{\mathsf{GCD}}$ complex. The Calabi complex also provides a coordinate-free expression for the linear compatibility equations. Using the above complexes one observes that on a $3$-manifold, the linear and nonlinear compatibility problems, and the existence of stress functions are related to $\mathrm{H}^{1}_{dR}(\mathcal{B})$ and $\mathrm{H}^{2}_{dR}(\mathcal{B})$, respectively.
 
\item Using the ideas underlying the Calabi complex, we derive the nonlinear compatibility equations in terms of the Green deformation tensor for motions of bodies (with the same dimensions as ambient spaces) and shells in curved ambient spaces with constant curvatures.

\end{itemize}

\paragraph{Notation.}In this paper, we use the pair of smooth Riemannian manifolds $(\mathcal{B},\boldsymbol{G})$ with local coordinates $\{X^{I}\}$ and $(\mathcal{S},\boldsymbol{g})$ with local coordinates $\{x^{i}\}$ to denote a general continuum and its ambient space, respectively. If $\mathcal{B}\subset\mathbb{R}^{n}$, then $\bar{\mathcal{B}}$ denotes the closure of $\mathcal{B}$ in $\mathbb{R}^{n}$. Unless stated otherwise, we assume the summation convention on repeated indices. The space of smooth real-valued functions on $\mathcal{B}$ is denoted by $C^{\infty}(\mathcal{B})$. We use $\Gamma(\mathcal{V})$ to indicate smooth sections of a vector bundle $\mathcal{V}$. Thus, $\Gamma(\text{\large{$\otimes$}}^{2}T\mathcal{B})$ and $\Gamma(\text{\large{$\otimes$}}^{2}T^{\ast}\mathcal{B})$ are the spaces of $({}^{2}_{0})$- and $({}^{0}_{2})$-tensors on $\mathcal{B}$. The space of symmetric $({}^{0}_{2})$-tensors is denoted by $\Gamma(S^{2}T^{\ast}\mathcal{B})$. It is customary to write $\mathfrak{X}(\mathcal{B}):=\Gamma(T\mathcal{B})$, and $\Omega^{k}(\mathcal{B}):=\Gamma(\Lambda^{k}T^{\ast}\mathcal{B})$, i.e. $\Omega^{k}(\mathcal{B})$ is the space of anti-symmetric $({}^{0}_{k})$-tensors or simply $k$-forms. Tensors are indicated by bold letters, e.g. $\boldsymbol{T}\in\Gamma(\text{\large{$\otimes$}}^{2}T\mathcal{B})$ and its components are denoted by $T^{IJ}$ or $(\boldsymbol{T})^{IJ}$. The space of $k$-forms with values in $\mathbb{R}^{n}$ is denoted by $\Omega^{k}(\mathcal{B};\mathbb{R}^{n})$, i.e. if $\boldsymbol{\alpha}\in\Omega^{k}(\mathcal{B};\mathbb{R}^{n})$, and $\mathbf{X}_{1},\dots,\mathbf{X}_{k}\in T_{X}\mathcal{B}$, then $\boldsymbol{\alpha}(\mathbf{X}_{1},\dots,\mathbf{X}_{k})\in \mathbb{R}^{n}$, and $\boldsymbol{\alpha}$ is anti-symmetric. Let $\varphi:\mathcal{B}\rightarrow\mathcal{S}$ be a smooth mapping. The space of two-point tensors over $\varphi$ with components $F^{iI}$ is denoted by $\Gamma(T\varphi(\mathcal{B})\otimes T\mathcal{B})$.

\section{Differential Complexes for Second-Order Tensors}

In this section, we study some differential complexes for $2$D and $3$D flat manifolds that contain second-order tensors. These complexes fall into two categories: Those induced by the de Rham complex and those induced by the Calabi complex.  Complexes induced by the de Rham complex include arbitrary second-order tensors and can be considered as tensorial versions of the $\mathsf{gcd}$ complex. Complexes induced by the Calabi complex involve only symmetric second-order tensors. In \S3, we study the applications of these complexes to some classical problems in continuum mechanics.

\subsection{Complexes Induced by the de Rham Complex}

Complexes for second-order tensors that are induced by the de Rham complex only contain first-order differential operators. We begin our discussion by considering $3$-manifolds and will later study $2$-manifolds separately.

\subsubsection{Complexes for flat $3$-manifolds}

Let $\mathcal{B}\subset\mathbb{R}^{3}$ be an open subset and suppose $\{X^{I}\}$ is the Cartesian coordinates on $\mathcal{B}$. We equip $\mathcal{B}$ with metric $\boldsymbol{G}$, which is the Euclidean metric of $\mathbb{R}^{3}$. The gradient of vector fields and the curl and the divergence of $({}^{2}_{0})$-tensors are defined as
\begin{align}
\mathbf{grad}&:\mathfrak{X}(\mathcal{B})\rightarrow\Gamma(\text{\large{$\otimes$}}^{2}T\mathcal{B}), &&(\mathbf{grad}\,\boldsymbol{Y})^{IJ}=Y^{I}{}_{,J}, \nonumber\\
\mathbf{curl}&:\Gamma(\text{\large{$\otimes$}}^{2}T\mathcal{B})\rightarrow\Gamma(\text{\large{$\otimes$}}^{2}T\mathcal{B}), &&(\mathbf{curl}\,\boldsymbol{T})^{IJ}=\varepsilon_{IKL}T^{JL}{}_{,K}, \nonumber \\
\mathbf{div}&:\Gamma(\text{\large{$\otimes$}}^{2}T\mathcal{B})\rightarrow\mathfrak{X}(\mathcal{B}), &&(\mathbf{div}\,\boldsymbol{T})^{I}=T^{IJ}{}_{,J}, \nonumber   
\end{align}
where ``$_{,J}$'' indicates $\partial/\partial X^{J}$. We also define the operator 
\begin{equation}
\mathbf{curl}^{\mathsf{T}}:\Gamma(\text{\large{$\otimes$}}^{2}T\mathcal{B})\rightarrow\Gamma(\text{\large{$\otimes$}}^{2}T\mathcal{B}), ~(\mathbf{curl}^{\mathsf{T}}\boldsymbol{T})^{IJ}=(\mathbf{curl}\,\boldsymbol{T})^{JI}. \nonumber
\end{equation}
It is straightforward to show that $\mathbf{curl}^{\mathsf{T}}\circ\mathbf{grad}=0$, and $\mathbf{div}\circ\mathbf{curl}^{\mathsf{T}}=0$. Thus, we obtain the following complex
\begin{equation}
\scalebox{1}{\xymatrix@C=3ex{
0\ar[r] &\mathfrak{X}(\mathcal{B}) \ar[r]^-{\mathbf{grad}} &\Gamma(\text{\large{$\otimes$}}^{2}T\mathcal{B}) \ar[r]^-{\mathbf{curl}^{\mathsf{T}}} & \Gamma(\text{\large{$\otimes$}}^{2}T\mathcal{B}) \ar[r]^-{\mathbf{div}} &\mathfrak{X}(\mathcal{B}) \ar[r]  &0,  }} 
\end{equation}
that, due to its resemblance with the $\mathsf{gcd}$ complex, is called the $\boldsymbol{\mathsf{gcd}}$ complex. Interestingly, similar to the $\mathsf{gcd}$ complex, useful properties of the $\boldsymbol{\mathsf{gcd}}$ complex also follow from the de Rham complex. This can be described via the $\mathbb{R}^{3}$-valued de Rham complex as follows. Let $d:\Omega^{k}(\mathcal{B})\rightarrow\Omega^{k+1}(\mathcal{B})$ be the standard exterior derivative given by
\begin{equation}\label{LocExDer}
(d\boldsymbol{\beta})_{I_{0}\cdots I_{k}}=\sum_{i=0}^{k}(-1)^{i}\beta_{I_{0}\cdots \widehat{I}_{i}\cdots I_{k},I_{i}}, \nonumber  
\end{equation}
where the hat over an index implies the elimination of that index. Any $\boldsymbol{\alpha}\in\Omega^{k}(\mathcal{B};\mathbb{R}^{3})$ can be considered as $\boldsymbol{\alpha}=(\boldsymbol{\alpha}^{1},\boldsymbol{\alpha}^{2},\boldsymbol{\alpha}^{3})$, with $\boldsymbol{\alpha}^{i}\in\Omega^{k}(\mathcal{B})$, $i=1,2,3$. One can define the exterior derivative $\boldsymbol{d}:\Omega^{k}(\mathcal{B};\mathbb{R}^{3})\rightarrow\Omega^{k+1}(\mathcal{B};\mathbb{R}^{3})$ by $\boldsymbol{d}\boldsymbol{\alpha}=(d\boldsymbol{\alpha}^{1},d\boldsymbol{\alpha}^{2},d\boldsymbol{\alpha}^{3})$. Since $d\circ d =0$, we also conclude that $\boldsymbol{d}\circ\boldsymbol{d}=0$, which leads to the $\mathbb{R}^{3}$-valued de Rham complex $\left( \Omega(\mathcal{B};\mathbb{R}^{3}),\boldsymbol{d} \right)$. Given $\boldsymbol{\alpha}\in\Omega^{k}(\mathcal{B};\mathbb{R}^{3})$, let $[\boldsymbol{\alpha}]^{i}{}_{I_{1}\cdots I_{k}}$ denote the components of $\boldsymbol{\alpha}^{i}\in\Omega^{k}(\mathcal{B})$. By using the global orthonormal coordinate system $\{X^{I}\}$, one can define the following isomorphisms
\begin{align}
\boldsymbol{\imath}_{0}&:\mathfrak{X}(\mathcal{B})\rightarrow\Omega^{0}(\mathcal{B};\mathbb{R}^{3}), & &[\boldsymbol{\imath}_{0}(\boldsymbol{Y})]^{i}=\text{\textdelta}_{iI}Y^{I}, \nonumber\\
\boldsymbol{\imath}_{1}&:\Gamma(\text{\large{$\otimes$}}^{2}T\mathcal{B})\rightarrow\Omega^{1}(\mathcal{B};\mathbb{R}^{3}), & &[\boldsymbol{\imath}_{1}(\boldsymbol{T})]^{i}{}_{J}=\text{\textdelta}_{iI}T^{IJ}, \nonumber \\
\boldsymbol{\imath}_{2}&:\Gamma(\text{\large{$\otimes$}}^{2}T\mathcal{B})\rightarrow\Omega^{2}(\mathcal{B};\mathbb{R}^{3}), & &[\boldsymbol{\imath}_{2}(\boldsymbol{T})]^{i}{}_{JK}=\text{\textdelta}_{iI}\varepsilon_{JKL}T^{IL}, \nonumber \\
\boldsymbol{\imath}_{3}&:\mathfrak{X}(\mathcal{B})\rightarrow\Omega^{3}(\mathcal{B};\mathbb{R}^{3}), & &[\boldsymbol{\imath}_{3}(\boldsymbol{Y})]^{i}{}_{123}=\text{\textdelta}_{iI}Y^{I}, \nonumber 
\end{align}
where $\text{\textdelta}_{iI}$ is the Kronecker delta. Let $\boldsymbol{T}^{\mathsf{T}}$ be the transpose of $\boldsymbol{T}$, i.e. $\left(\boldsymbol{T}^{\mathsf{T}}\right)^{IJ}= T^{JI}$, and let $\{\mathbf{E}_{I}\}$ be the standard basis of $\mathbb{R}^{3}$. For $\boldsymbol{T}\in\Gamma(\text{\large{$\otimes$}}^{2}T\mathcal{B})$, we define $\overrightarrow{\boldsymbol{T}}_{\!\!\mathbf{N}}$ to be the traction of $\boldsymbol{T}^{\mathsf{T}}$ in the direction of unit vector $\mathbf{N}=N^{I}\mathbf{E}_{I}\in\EuScript{S}^{2}$, where $\EuScript{S}^{2}\subset\mathbb{R}^{3}$ is the unit $2$-sphere. Thus, $\overrightarrow{\boldsymbol{T}}_{\!\!\mathbf{N}}=N^{I}T^{IJ}\mathbf{E}_{J}$. By using (\ref{isogdc}), we can write 
\begin{equation}\label{isocompo}
\boldsymbol{\imath}_{k}(\boldsymbol{T})= \left(\imath_{k}\!\!\left(\overrightarrow{\boldsymbol{T}}_{\!\!\mathbf{E}_{1}}\right) , \imath_{k}\!\!\left(\overrightarrow{\boldsymbol{T}}_{\!\!\mathbf{E}_{2}}\right) , \imath_{k}\!\!\left(\overrightarrow{\boldsymbol{T}}_{\!\!\mathbf{E}_{3}}\right)  \right), ~k=1,2.  
\end{equation}
It is easy to show that
\begin{equation}\label{IsoComplexesTenVecVal}
\boldsymbol{\imath}_{1}\circ\mathbf{grad}=\boldsymbol{d}\circ\boldsymbol{\imath}_{0},~ \boldsymbol{\imath}_{2}\circ\mathbf{curl}^{\mathsf{T}}=\boldsymbol{d}\circ\boldsymbol{\imath}_{1}, ~\boldsymbol{\imath}_{3}\circ\mathbf{div}=\boldsymbol{d}\circ\boldsymbol{\imath}_{2}. \nonumber 
\end{equation}
Therefore, the following diagram, which is the tensorial analogue of the diagram (\ref{gcdDeRhamCom}) commutes for the $\boldsymbol{\mathsf{gcd}}$ complex.
\begin{equation}
\begin{gathered}\label{gcddeRh}
\scalebox{1}{\xymatrix@C=3ex{0 \ar[r] &\mathfrak{X}(\mathcal{B}) \ar[r]^-{\mathbf{grad}} \ar[d]^{\boldsymbol{\imath}_{0}} &\Gamma(\text{\large{$\otimes$}}^{2}T\mathcal{B}) \ar[r]^-{\mathbf{curl}^{\!\mathsf{T}}} \ar[d]^{\boldsymbol{\imath}_{1}} &\Gamma(\text{\large{$\otimes$}}^{2}T\mathcal{B}) \ar[r]^-{\mathbf{div}} \ar[d]^{\boldsymbol{\imath}_{2}} &\mathfrak{X}(\mathcal{B}) \ar[r] \ar[d]^{\boldsymbol{\imath}_{3}} &0  \\
0\ar[r] &\Omega^{0}(\mathcal{B};\mathbb{R}^{3}) \ar[r]^-{\boldsymbol{d}} & \Omega^{1}(\mathcal{B};\mathbb{R}^{3}) \ar[r]^-{\boldsymbol{d}} &\Omega^{2}(\mathcal{B};\mathbb{R}^{3}) \ar[r]^-{\boldsymbol{d}} & \Omega^{3}(\mathcal{B};\mathbb{R}^{3}) \ar[r] &0 } } 
\end{gathered}
\end{equation}

\begin{rem} Diagram (\ref{gcdDeRhamCom}) is valid for any $3$-manifold, see \citet[\S 3.5]{Schwarz1995} for the definitions of $\mathrm{grad}$, $\mathrm{curl}$, and $\mathrm{div}$ on arbitrary $3$-manifolds. However, we require a global orthonormal coordinate system for defining $\mathbf{curl}$ and the isomorphisms $\boldsymbol{\imath}_{k}$. Thus, the $\boldsymbol{\mathsf{gcd}}$ complex and the diagram (\ref{gcddeRh}) are valid merely on flat $3$-manifolds.
\end{rem}
  
The contraction $\langle \boldsymbol{T},\boldsymbol{Y}\rangle$ of $\boldsymbol{T}\in\Gamma(\text{\large{$\otimes$}}^{2}T\mathcal{B})$ and $\boldsymbol{Y}\in\mathfrak{X}(\mathcal{B})$ is a vector field that in the orthonormal coordinate system $\{X^{I}\}$ reads $\langle \boldsymbol{T},\boldsymbol{Y}\rangle=T^{IJ}Y^{J}\mathbf{E}_{I}$. Clearly, if $\boldsymbol{N}_{\!\mathcal{C}}$ is the unit outward normal vector field of a closed surface $\mathcal{C}\subset\mathcal{B}$, then $\langle \boldsymbol{T},\boldsymbol{N}_{\!\mathcal{C}}\rangle$ is the traction of $\boldsymbol{T}$ on $\mathcal{C}$. Suppose $\mathrm{H}^{k}_{\boldsymbol{\mathsf{gcd}}}(\mathcal{B})$ is the $k$-th cohomology group of the $\boldsymbol{\mathsf{gcd}}$ complex. Diagram (\ref{gcddeRh}) implies that $\boldsymbol{\imath}_{k}$ also induces the isomorphism $\mathrm{H}^{k}_{\boldsymbol{\mathsf{gcd}}}(\mathcal{B})\approx \text{\large{$\oplus$}}^{3}_{i=1}\mathrm{H}^{k}_{dR}(\mathcal{B})$ between the cohomology groups. Using this fact and (\ref{Exactgrad}), we can prove the following theorem.

\begin{thm}\label{ExactgradTen} An arbitrary tensor $\boldsymbol{T}\in\Gamma(\text{\large{$\otimes$}}^{2}T\mathcal{B})$ is the gradient of a vector field if and only if 
\begin{equation}\label{ExactgradTenEq}
\mathbf{curl}^{\mathsf{T}}\boldsymbol{T}=0, \text{ and }~~~\int_{\ell}\langle\boldsymbol{T},\boldsymbol{t}_{\ell}\rangle dS=0, ~\forall\ell\subset\mathcal{B}, 
\end{equation}
where $\ell$ is an arbitrary closed curve in $\mathcal{B}$ and $\boldsymbol{t}_{\ell}$ is the unit tangent vector field along $\ell$. 
\end{thm}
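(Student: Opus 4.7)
The plan is to lift the theorem to three copies of the scalar gradient-existence statement (\ref{Exactgrad}) via the commutative diagram (\ref{gcddeRh}). Since every vertical map $\boldsymbol{\imath}_{k}$ is an isomorphism and the squares commute, $\boldsymbol{T}=\mathbf{grad}\,\boldsymbol{Y}$ for some $\boldsymbol{Y}\in\mathfrak{X}(\mathcal{B})$ if and only if $\boldsymbol{\imath}_{1}(\boldsymbol{T})=\boldsymbol{d}\boldsymbol{\alpha}$ for some $\boldsymbol{\alpha}\in\Omega^{0}(\mathcal{B};\mathbb{R}^{3})$. Because $\boldsymbol{d}$ acts componentwise and, by (\ref{isocompo}), $\boldsymbol{\imath}_{1}(\boldsymbol{T})=(\imath_{1}(\overrightarrow{\boldsymbol{T}}_{\!\!\mathbf{E}_{1}}),\imath_{1}(\overrightarrow{\boldsymbol{T}}_{\!\!\mathbf{E}_{2}}),\imath_{1}(\overrightarrow{\boldsymbol{T}}_{\!\!\mathbf{E}_{3}}))$, this is in turn equivalent to each of the three vector fields $\overrightarrow{\boldsymbol{T}}_{\!\!\mathbf{E}_{i}}$ ($i=1,2,3$) being a gradient in the scalar sense.

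Next, I would apply (\ref{Exactgrad}) separately to each $\overrightarrow{\boldsymbol{T}}_{\!\!\mathbf{E}_{i}}$: that vector field is a gradient iff $\mathrm{curl}\,\overrightarrow{\boldsymbol{T}}_{\!\!\mathbf{E}_{i}}=0$ and $\int_{\ell}\boldsymbol{G}(\overrightarrow{\boldsymbol{T}}_{\!\!\mathbf{E}_{i}},\boldsymbol{t}_{\ell})\,dS=0$ for every closed curve $\ell\subset\mathcal{B}$. To repackage these three scalar conditions into the tensorial conditions of (\ref{ExactgradTenEq}), I would perform a short index computation in the Cartesian coordinates $\{X^{I}\}$ using $(\overrightarrow{\boldsymbol{T}}_{\!\!\mathbf{E}_{i}})^{J}=T^{iJ}$, yielding
\[
\bigl(\mathrm{curl}\,\overrightarrow{\boldsymbol{T}}_{\!\!\mathbf{E}_{i}}\bigr)^{J}=\varepsilon_{JKL}T^{iL}{}_{,K}=(\mathbf{curl}^{\mathsf{T}}\boldsymbol{T})^{iJ},
\]
so that the simultaneous vanishing of the three vector curls is precisely $\mathbf{curl}^{\mathsf{T}}\boldsymbol{T}=0$. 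Similarly, the three scalar line integrals are nothing but the Cartesian components of the $\mathbb{R}^{3}$-valued integral $\int_{\ell}\langle\boldsymbol{T},\boldsymbol{t}_{\ell}\rangle\,dS$, so their simultaneous vanishing is exactly the integral condition in (\ref{ExactgradTenEq}).

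For the reverse implication, once each scalar $1$-form admits a primitive $f_{i}\in C^{\infty}(\mathcal{B})$, the vector field $\boldsymbol{Y}:=\boldsymbol{\imath}_{0}^{-1}(f_{1},f_{2},f_{3})$ satisfies $\mathbf{grad}\,\boldsymbol{Y}=\boldsymbol{T}$ by chasing the leftmost square of (\ref{gcddeRh}). The only step demanding any care is the bookkeeping that identifies $\mathbf{curl}^{\mathsf{T}}$ (rather than $\mathbf{curl}$) with the collection of componentwise scalar curls of the row vectors $\overrightarrow{\boldsymbol{T}}_{\!\!\mathbf{E}_{i}}$ -- the transpose being exactly what is needed, since $\mathbf{curl}^{\mathsf{T}}$ differentiates along the second slot of $\boldsymbol{T}$ while the scalar $\mathrm{curl}$ differentiates the components $T^{iJ}$ in $J$. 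Apart from this index check, the argument is a transparent diagram chase combined with three invocations of (\ref{Exactgrad}).
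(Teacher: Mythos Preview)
Your proposal is correct and follows essentially the same route as the paper: a diagram chase through (\ref{gcddeRh}) together with (\ref{isocompo}) reduces the problem to three scalar instances of (\ref{Exactgrad}), whose curl and line-integral conditions are then reassembled into $\mathbf{curl}^{\mathsf{T}}\boldsymbol{T}=0$ and the vector-valued integral in (\ref{ExactgradTenEq}). If anything, you spell out the index check identifying the componentwise scalar curls with $\mathbf{curl}^{\mathsf{T}}$ more explicitly than the paper does.
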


\begin{proof} By using (\ref{isocompo}) and diagram (\ref{gcddeRh}), we conclude that $\boldsymbol{T}=\mathbf{grad}\,\boldsymbol{Y}$, if and only if $\boldsymbol{\imath}_{1}(\boldsymbol{T})=\boldsymbol{d}(\boldsymbol{\imath}_{0}(\boldsymbol{Y}))$, if and only if $\imath_{1}\left(\overrightarrow{\boldsymbol{T}}_{\!\!\mathbf{E}_{I}}\right)=d\, Y^{I}$, $I=1,2,3$. The condition (\ref{Exactgrad}) implies that in addition to $\mathbf{curl}^{\mathsf{T}}\boldsymbol{T}=0$, $\boldsymbol{T}$ should also satisfy   
\begin{equation}\label{ExactgradTenEq2}
\int_{\ell}\imath_{1}\!\!\left(\overrightarrow{\boldsymbol{T}}_{\!\!\mathbf{E}_{I}}\right)=\int_{\ell}\boldsymbol{G}(\overrightarrow{\boldsymbol{T}}_{\!\!\mathbf{E}_{I}},\boldsymbol{t}_{\ell})dS=0, ~\forall\ell\subset\mathcal{B}, ~I=1,2,3, \nonumber
\end{equation}
which is equivalent to the integral condition in (\ref{ExactgradTenEq}). 
\end{proof}

Similarly, one can use (\ref{Exactcurl}) for deriving the necessary and sufficient conditions for the existence of a potential for $\boldsymbol{T}$ induced by $\mathbf{curl}^{\mathsf{T}}$. The upshot is the following theorem.  

\begin{thm}\label{ExactcurlTTen} Given $\boldsymbol{T}\in\Gamma(\text{\large{$\otimes$}}^{2}T\mathcal{B})$, there exists $\boldsymbol{W}\in\Gamma(\text{\large{$\otimes$}}^{2}T\mathcal{B})$ such that $\boldsymbol{T}=\mathbf{curl}^{\mathsf{T}}\boldsymbol{W}$, if and only if 
\begin{equation}\label{ExactcurlTTenEq}
\mathbf{div}\,\boldsymbol{T}=0, \text{ and }~~~\int_{\mathcal{C}}\langle\boldsymbol{T},\boldsymbol{N}_{\!\mathcal{C}}\rangle dA=0, ~\forall \mathcal{C} \subset\mathcal{B},
\end{equation}
where $\mathcal{C}$ is an arbitrary closed surface in $\mathcal{B}$ and $\boldsymbol{N}_{\!\mathcal{C}}$ is its unit outward normal vector field. 
\end{thm}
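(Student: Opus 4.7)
The plan is to mimic the proof of Theorem \ref{ExactgradTen} one step to the right in the commutative diagram (\ref{gcddeRh}), reducing the tensorial statement to three scalar-valued instances of the vector calculus criterion (\ref{Exactcurl}).

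First, I would translate the equation $\boldsymbol{T}=\mathbf{curl}^{\mathsf{T}}\boldsymbol{W}$ into the $\mathbb{R}^{3}$-valued de Rham complex via $\boldsymbol{\imath}_{1}$ and $\boldsymbol{\imath}_{2}$. By commutativity of (\ref{gcddeRh}), the existence of $\boldsymbol{W}\in\Gamma(\text{\large{$\otimes$}}^{2}T\mathcal{B})$ with $\boldsymbol{T}=\mathbf{curl}^{\mathsf{T}}\boldsymbol{W}$ is equivalent to the existence of $\boldsymbol{\alpha}\in\Omega^{1}(\mathcal{B};\mathbb{R}^{3})$ with $\boldsymbol{\imath}_{2}(\boldsymbol{T})=\boldsymbol{d}\boldsymbol{\alpha}$, since $\boldsymbol{\imath}_{1}$ is an isomorphism. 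Using the decomposition (\ref{isocompo}), this last equation is in turn equivalent to the three scalar conditions
\begin{equation}
\imath_{2}\!\!\left(\overrightarrow{\boldsymbol{T}}_{\!\!\mathbf{E}_{I}}\right)=d\,\imath_{1}\!\!\left(\overrightarrow{\boldsymbol{W}}_{\!\!\mathbf{E}_{I}}\right),\quad I=1,2,3. \nonumber
\end{equation}

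Next, I would apply the vector-calculus criterion (\ref{Exactcurl}) to each row $\overrightarrow{\boldsymbol{T}}_{\!\!\mathbf{E}_{I}}\in\mathfrak{X}(\mathcal{B})$: such a potential exists if and only if $\mathrm{div}\,\overrightarrow{\boldsymbol{T}}_{\!\!\mathbf{E}_{I}}=0$ and $\int_{\mathcal{C}}\boldsymbol{G}(\overrightarrow{\boldsymbol{T}}_{\!\!\mathbf{E}_{I}},\boldsymbol{N}_{\!\mathcal{C}})\,dA=0$ for every closed surface $\mathcal{C}\subset\mathcal{B}$ and each $I=1,2,3$. Finally, I would reassemble these three scalar conditions into the single tensorial pair in (\ref{ExactcurlTTenEq}). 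In Cartesian coordinates, $\overrightarrow{\boldsymbol{T}}_{\!\!\mathbf{E}_{I}}=T^{IJ}\mathbf{E}_{J}$, so $\mathrm{div}\,\overrightarrow{\boldsymbol{T}}_{\!\!\mathbf{E}_{I}}=T^{IJ}{}_{,J}=(\mathbf{div}\,\boldsymbol{T})^{I}$, which gives $\mathbf{div}\,\boldsymbol{T}=0$; similarly $\boldsymbol{G}(\overrightarrow{\boldsymbol{T}}_{\!\!\mathbf{E}_{I}},\boldsymbol{N}_{\!\mathcal{C}})$ is precisely the $I$-th component of the vector field $\langle\boldsymbol{T},\boldsymbol{N}_{\!\mathcal{C}}\rangle=T^{IJ}N^{J}_{\mathcal{C}}\mathbf{E}_{I}$, so the simultaneous vanishing of all three surface integrals is exactly the vanishing of the vector-valued integral $\int_{\mathcal{C}}\langle\boldsymbol{T},\boldsymbol{N}_{\!\mathcal{C}}\rangle\,dA$.

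Since each step is a genuine equivalence (isomorphism, componentwise decomposition, and the biconditional (\ref{Exactcurl})), both directions of the theorem follow at once. There is no real obstacle here; the proof is essentially a bookkeeping exercise, and the only point to be careful about is consistent use of the convention in the definition of $\overrightarrow{\boldsymbol{T}}_{\!\!\mathbf{N}}$ (traction of $\boldsymbol{T}^{\mathsf{T}}$, not of $\boldsymbol{T}$) so that the rows of $\boldsymbol{T}$ feed into the de Rham complex in the slots compatible with $\mathbf{curl}^{\mathsf{T}}$ rather than $\mathbf{curl}$, and that $\mathbf{div}$ acts on the second index. The argument is the exact analogue of the one used for Theorem \ref{ExactgradTen}, one cohomological degree up.
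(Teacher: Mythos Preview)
Your proposal is correct and follows exactly the approach the paper indicates: the paper does not write out a separate proof for this theorem but simply says one can use (\ref{Exactcurl}) ``similarly'' to the proof of Theorem~\ref{ExactgradTen}, and your argument is precisely that analogue, shifted one step to the right in diagram (\ref{gcddeRh}).
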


\begin{rem}\label{IndNumCo}If the Betti numbers $b_{k}(\mathcal{B})$, $k=1,2$, are finite, then it suffices to check (\ref{ExactgradTenEq}) and (\ref{ExactcurlTTenEq}) for $b_{1}(\mathcal{B})$ and $b_{2}(\mathcal{B})$ ``independent'' closed curves and closed surfaces, respectively. In particular, one concludes that if $\mathcal{B}$ is simply-connected, then any $\mathbf{curl}^{\mathsf{T}}$-free $({}^{2}_{0})$-tensor is the gradient of a vector field and if $\mathcal{B}$ is contractible, then any $\mathbf{div}$-free $({}^{2}_{0})$-tensor admits a $\mathbf{curl}^{\mathsf{T}}$-potential. If $\bar{\mathcal{B}}\subset\mathbb{R}^{3}$ is compact, i.e. $\bar{\mathcal{B}}$ is closed and bounded, then all $b_{k}(\mathcal{B})$'s are finite. The calculation of $b_{k}(\mathcal{B})$ for some physically interesting bodies and the selection of independent closed loops and closed surfaces are discussed in \citep{AngoshtariYavari2014II}.
\end{rem}

We can also write an analogue of the $\boldsymbol{\mathsf{gcd}}$ complex for two-point tensors. Let $\mathcal{S}=\mathbb{R}^{3}$ with coordinate system $\{x^{i}\}$, which is the Cartesian coordinates of $\mathbb{R}^{3}$. Suppose $\varphi:\mathcal{B}\rightarrow\mathcal{S}$ is a smooth mapping and let $T_{X}\varphi(\mathcal{B}):=T_{\varphi(X)}\mathcal{S}$. Note that although $\varphi$ is not necessarily an embedding, the dimension of $T_{X}\varphi(\mathcal{B})$ is always equal to $\dim\mathcal{S}$. We can define the following operators for two-point tensors that belong to $\Gamma(T\varphi(\mathcal{B}))$ and $\Gamma(T\varphi(\mathcal{B})\otimes T\mathcal{B})$:
\begin{align}
\mathbf{Grad}&:\Gamma(T\varphi(\mathcal{B}))\rightarrow\Gamma(T\varphi(\mathcal{B})\otimes T\mathcal{B}), &&(\mathbf{Grad}\,\boldsymbol{U})^{iI}=U^{i}{}_{,I}, \nonumber \\  
\mathbf{Curl}^{\mathsf{T}}&:\Gamma(T\varphi(\mathcal{B})\otimes T\mathcal{B})\rightarrow\Gamma(T\varphi(\mathcal{B})\otimes T\mathcal{B}), &&(\mathbf{Curl}^{\mathsf{T}}\boldsymbol{F})^{iI}=\varepsilon_{IKL}F^{iL}{}_{,K}, \nonumber\\
\mathbf{Div}&:\Gamma(T\varphi(\mathcal{B})\otimes T\mathcal{B})\rightarrow\Gamma(T\varphi(\mathcal{B})), &&(\mathbf{Div}\,\boldsymbol{F})^{i}=F^{iI}{}_{,I}. \nonumber  
\end{align}
We have $\mathbf{Curl}^{\mathsf{T}}\circ\mathbf{Grad}=0$, and $\mathbf{Div}\circ\mathbf{Curl}^{\mathsf{T}}=0$. Thus, the $\boldsymbol{\mathsf{GCD}}$ complex is written as:
\begin{equation}
\scalebox{.97}{\xymatrix@C=3ex{
0\ar[r] &\Gamma(T\varphi(\mathcal{B})) \ar[r]^-{\mathbf{Grad}} &\Gamma(T\varphi(\mathcal{B})\otimes T\mathcal{B}) \ar[r]^-{\mathbf{Curl}^{\mathsf{T}}} & \Gamma(T\varphi(\mathcal{B})\otimes T\mathcal{B}) \ar[r]^-{\mathbf{Div}} &\Gamma(T\varphi(\mathcal{B})) \ar[r]  &0.  }} \nonumber
\end{equation}
By using the following isomorphisms
\begin{align}
\boldsymbol{I}_{0}&:\Gamma(T\varphi(\mathcal{B}))\rightarrow\Omega^{0}(\mathcal{B};\mathbb{R}^{3}), & &[\boldsymbol{I}_{0}(\boldsymbol{U})]^{i}=U^{i}, \nonumber\\
\boldsymbol{I}_{1}&:\Gamma(T\varphi(\mathcal{B})\otimes T\mathcal{B})\rightarrow \Omega^{1}(\mathcal{B};\mathbb{R}^{3}), & &[\boldsymbol{I}_{1}(\boldsymbol{F})]^{i}{}_{J}=F^{iJ}, \nonumber \\   
\boldsymbol{I}_{2}&:\Gamma(T\varphi(\mathcal{B})\otimes T\mathcal{B})\rightarrow \Omega^{2}(\mathcal{B};\mathbb{R}^{3}), & &[\boldsymbol{I}_{2}(\boldsymbol{F})]^{i}{}_{JK}=\varepsilon_{JKL}F^{iL}, \nonumber \\ 
\boldsymbol{I}_{3}&:\Gamma(T\varphi(\mathcal{B}))\rightarrow\Omega^{3}(\mathcal{B};\mathbb{R}^{3}), & &[\boldsymbol{I}_{3}(\boldsymbol{U})]^{i}{}_{123}=U^{i}, \nonumber
\end{align}
one concludes that the following diagram commutes.  
\begin{equation}
\begin{gathered}\label{gcdTdeRh}
\scalebox{.97}{\xymatrix@C=3ex{0 \ar[r] &\Gamma(T\varphi(\mathcal{B})) \ar[r]^-{\mathbf{Grad}} \ar[d]^{\boldsymbol{I}_{0}} &\Gamma(T\varphi(\mathcal{B})\otimes T\mathcal{B}) \ar[r]^-{\mathbf{Curl}^{\!\mathsf{T}}} \ar[d]^{\boldsymbol{I}_{1}} &\Gamma(T\varphi(\mathcal{B})\otimes T\mathcal{B}) \ar[r]^-{\mathbf{Div}} \ar[d]^{\boldsymbol{I}_{2}} &\Gamma(T\varphi(\mathcal{B})) \ar[r] \ar[d]^{\boldsymbol{I}_{3}} &0  \\
0\ar[r] &\Omega^{0}(\mathcal{B};\mathbb{R}^{3}) \ar[r]^-{\boldsymbol{d}} & \Omega^{1}(\mathcal{B};\mathbb{R}^{3}) \ar[r]^-{\boldsymbol{d}} &\Omega^{2}(\mathcal{B};\mathbb{R}^{3}) \ar[r]^-{\boldsymbol{d}} & \Omega^{3}(\mathcal{B};\mathbb{R}^{3}) \ar[r] &0 } } \nonumber 
\end{gathered}
\end{equation}
The above isomorphisms also induce an isomorphism $\mathrm{H}^{k}_{\boldsymbol{\mathsf{GCD}}}(\mathcal{B})\approx \text{\large{$\oplus$}}^{3}_{i=1}\mathrm{H}^{k}_{dR}(\mathcal{B})$, where $\mathrm{H}^{k}_{\boldsymbol{\mathsf{GCD}}}(\mathcal{B})$ is the $k$-th cohomology group of the $\boldsymbol{\mathsf{GCD}}$ complex. Let $\{\mathbf{E}_{I}\}$ and $\{\mathbf{e}_{i}\}$ be two copies of the standard basis of $\mathbb{R}^{3}$. For $\boldsymbol{F}\in\Gamma(T\varphi(\mathcal{B})\otimes T\mathcal{B})$, and $\mathbf{n}=n^{i}\mathbf{e}_{i}\in\EuScript{S}^{2}$, let $\overrightarrow{\boldsymbol{F}}_{\!\!\mathbf{n}}=n^{i}F^{iJ}\mathbf{E}_{J}\in\mathfrak{X}(\mathcal{B})$. Then, one can write
\begin{equation}
\boldsymbol{I}_{k}(\boldsymbol{F})= \left(\imath_{k}\!\!\left(\overrightarrow{\boldsymbol{F}}_{\!\!\mathbf{e}_{1}}\right) , \imath_{k}\!\!\left(\overrightarrow{\boldsymbol{F}}_{\!\!\mathbf{e}_{2}}\right) , \imath_{k}\!\!\left(\overrightarrow{\boldsymbol{F}}_{\!\!\mathbf{e}_{3}}\right)  \right), ~k=1,2. \nonumber  
\end{equation}
Let $\langle\boldsymbol{F},\boldsymbol{Y}\rangle:=F^{iI}Y^{I}\mathbf{e}_{i}$. The above relations for the $\boldsymbol{\mathsf{GCD}}$ complex allow us to obtain the following results that can be proved similarly to Theorems \ref{ExactgradTen} and \ref{ExactcurlTTen}.

\begin{thm}\label{ExactGradCurlTPTen} Given $\boldsymbol{F}\in\Gamma(T\varphi(\mathcal{B})\otimes T\mathcal{B})$, there exists $\boldsymbol{U}\in\Gamma(T\varphi(\mathcal{B}))$ such that $\boldsymbol{F}=\mathbf{Grad}\,\boldsymbol{U}$, if and only if
\begin{equation}\label{ExactGradCurlTenEq1}
\mathbf{Curl}^{\mathsf{T}}\boldsymbol{F}=0, \text{ and }~~~\int_{\ell}\langle\boldsymbol{F},\boldsymbol{t}_{\ell}\rangle dS=0, ~\forall\ell\subset\mathcal{B}. \nonumber 
\end{equation}
Moreover, there exists $\boldsymbol{\Psi}\in\Gamma(T\varphi(\mathcal{B})\otimes T\mathcal{B})$ such that $\boldsymbol{F}=\mathbf{Curl}^{\mathsf{T}}\boldsymbol{\Psi}$, if and only if 
\begin{equation}\label{ExactGradCurlTenEq2}
\mathbf{Div}\,\boldsymbol{F}=0, \text{ and }~~~\int_{\mathcal{C}}\langle\boldsymbol{F},\boldsymbol{N}_{\!\mathcal{C}}\rangle dA=0, ~\forall \mathcal{C} \subset\mathcal{B}. \nonumber
\end{equation}
\end{thm}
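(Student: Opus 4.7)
The plan is to imitate the proofs of Theorems \ref{ExactgradTen} and \ref{ExactcurlTTen} almost verbatim, with the three key ingredients being the commutative two-point diagram established immediately above the theorem, the decomposition
\[
\boldsymbol{I}_{k}(\boldsymbol{F})=\bigl(\imath_{k}(\overrightarrow{\boldsymbol{F}}_{\!\!\mathbf{e}_{1}}),\,\imath_{k}(\overrightarrow{\boldsymbol{F}}_{\!\!\mathbf{e}_{2}}),\,\imath_{k}(\overrightarrow{\boldsymbol{F}}_{\!\!\mathbf{e}_{3}})\bigr),\qquad k=1,2,
\]
and the exactness criteria (\ref{Exactgrad})--(\ref{Exactcurl}) for the scalar de Rham complex. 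The whole point of the machinery set up just before the statement is that, together, these three ingredients let one translate any $\boldsymbol{\mathsf{GCD}}$-type solvability question on two-point tensors back to three independent instances of the corresponding scalar \textsf{gcd}-complex statement.

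For the first assertion, invertibility of $\boldsymbol{I}_{0}$ and $\boldsymbol{I}_{1}$ together with commutativity of the diagram yields $\boldsymbol{F}=\mathbf{Grad}\,\boldsymbol{U}$ if and only if $\boldsymbol{I}_{1}(\boldsymbol{F})=\boldsymbol{d}\,\boldsymbol{I}_{0}(\boldsymbol{U})$. Reading this equality component-by-component via the decomposition above turns it into the three independent scalar equations $\imath_{1}(\overrightarrow{\boldsymbol{F}}_{\!\!\mathbf{e}_{i}})=dU^{i}$, $i=1,2,3$, so (\ref{Exactgrad}) applied to each $i$ gives the necessary and sufficient conditions $\mathrm{curl}\,\overrightarrow{\boldsymbol{F}}_{\!\!\mathbf{e}_{i}}=0$ and $\int_{\ell}\boldsymbol{G}(\overrightarrow{\boldsymbol{F}}_{\!\!\mathbf{e}_{i}},\boldsymbol{t}_{\ell})\,dS=0$ for every closed curve $\ell\subset\mathcal{B}$. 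The coordinate formula $(\mathbf{Curl}^{\mathsf{T}}\boldsymbol{F})^{iI}=\varepsilon_{IKL}F^{iL}{}_{,K}$ shows that the $i$-th row of $\mathbf{Curl}^{\mathsf{T}}\boldsymbol{F}$ is exactly $\mathrm{curl}\,\overrightarrow{\boldsymbol{F}}_{\!\!\mathbf{e}_{i}}$, so the three curl-free conditions package into $\mathbf{Curl}^{\mathsf{T}}\boldsymbol{F}=0$; likewise, since $\langle\boldsymbol{F},\boldsymbol{Y}\rangle=F^{iI}Y^{I}\mathbf{e}_{i}$, the $i$-th $\mathbb{R}^{3}$-component of the vector-valued integral $\int_{\ell}\langle\boldsymbol{F},\boldsymbol{t}_{\ell}\rangle\,dS$ is precisely $\int_{\ell}\boldsymbol{G}(\overrightarrow{\boldsymbol{F}}_{\!\!\mathbf{e}_{i}},\boldsymbol{t}_{\ell})\,dS$, so the three scalar integral conditions collapse into the stated vector-valued integral condition.

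The second assertion is obtained by the same procedure, replacing $(\mathbf{Grad},\mathbf{Curl}^{\mathsf{T}},\boldsymbol{I}_{0},\boldsymbol{I}_{1})$ by $(\mathbf{Curl}^{\mathsf{T}},\mathbf{Div},\boldsymbol{I}_{1},\boldsymbol{I}_{2})$ and invoking (\ref{Exactcurl}) in place of (\ref{Exactgrad}); one uses that the $i$-th row of $\mathbf{Div}\,\boldsymbol{F}$ is $\mathrm{div}\,\overrightarrow{\boldsymbol{F}}_{\!\!\mathbf{e}_{i}}$ and the analogous identification of the $i$-th component of $\int_{\mathcal{C}}\langle\boldsymbol{F},\boldsymbol{N}_{\!\mathcal{C}}\rangle\,dA$ with $\int_{\mathcal{C}}\boldsymbol{G}(\overrightarrow{\boldsymbol{F}}_{\!\!\mathbf{e}_{i}},\boldsymbol{N}_{\!\mathcal{C}})\,dA$ to reassemble the three scalar conditions into the divergence-free equation and the closed-surface flux condition. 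There is essentially no obstacle; the only thing to keep straight is that the target-space basis $\{\mathbf{e}_{i}\}$ here plays the role that the source-space basis $\{\mathbf{E}_{I}\}$ played for ordinary $({}^{2}_{0})$-tensors in Theorems \ref{ExactgradTen}--\ref{ExactcurlTTen}, so that the first leg of the two-point tensor (which labels rows) is exactly the leg that takes values in $\mathbb{R}^{3}$ under $\boldsymbol{I}_{k}$. Once that identification is made explicit through the factorization displayed above, the two-point statement reduces term by term to the scalar case and no new analysis is required.
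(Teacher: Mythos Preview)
Your proposal is correct and follows exactly the approach the paper indicates: the paper does not give a separate proof of this theorem but simply remarks that it ``can be proved similarly to Theorems \ref{ExactgradTen} and \ref{ExactcurlTTen},'' and your argument carries out precisely that transcription, using the commutative diagram for the $\boldsymbol{\mathsf{GCD}}$ complex, the componentwise decomposition $\boldsymbol{I}_{k}(\boldsymbol{F})=\bigl(\imath_{k}(\overrightarrow{\boldsymbol{F}}_{\!\!\mathbf{e}_{1}}),\imath_{k}(\overrightarrow{\boldsymbol{F}}_{\!\!\mathbf{e}_{2}}),\imath_{k}(\overrightarrow{\boldsymbol{F}}_{\!\!\mathbf{e}_{3}})\bigr)$, and the scalar criteria (\ref{Exactgrad})--(\ref{Exactcurl}). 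Your explicit note that the ambient-space basis $\{\mathbf{e}_{i}\}$ replaces the material basis $\{\mathbf{E}_{I}\}$ is the only bookkeeping subtlety, and you have handled it correctly.
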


\begin{rem} Note that for writing the $\boldsymbol{\mathsf{GCD}}$ complex, only $\mathcal{S}$ needs to be flat and admit a global orthonormal coordinate system. This observation is useful for deriving a complex for motions of $2$D surfaces (shells) in $\mathbb{R}^{3}$. By using the natural isomorphism $\flat:\mathfrak{X}(\mathcal{B})\rightarrow\Omega^{1}(\mathcal{B})$ induced by $\boldsymbol{G}$ and the Hodge star operator $\ast:\Omega^{k}(\mathcal{B})\rightarrow\Omega^{n-k}(\mathcal{B})$, where $n=\dim\mathcal{B}$, we can write
\begin{equation}
\begin{aligned}
\boldsymbol{I}_{1}(\boldsymbol{F})&=\big( (\boldsymbol{F}(dx^{1}))^{\flat} , (\boldsymbol{F}(dx^{2}))^{\flat} , (\boldsymbol{F}(dx^{3}))^{\flat} \big), \nonumber \\
\boldsymbol{I}_{2}(\boldsymbol{F})&=\big( \ast(\boldsymbol{F}(dx^{1}))^{\flat} , \ast(\boldsymbol{F}(dx^{2}))^{\flat} , \ast(\boldsymbol{F}(dx^{3}))^{\flat} \big).
\end{aligned}
\end{equation}
\end{rem}

\subsubsection{Complexes for $2$-manifolds}

Let $\mathcal{B}\subset\mathbb{R}^{2}$ be a $2$-manifold and suppose $\{X^{I}\}$ is the Cartesian coordinate system. For $2$-manifolds, instead of $\mathbf{curl}^{\mathsf{T}}$, we define the operator
\begin{equation}
\mathbf{c}:\Gamma(\text{\large{$\otimes$}}^{2}T\mathcal{B})\rightarrow\mathfrak{X}(\mathcal{B}), ~(\mathbf{c}(\boldsymbol{T}))^{I}=T^{I2}{}_{,1}-T^{I1}{}_{,2}, \nonumber   
\end{equation}
that satisfies $\mathbf{c}\circ\mathbf{grad}=0$. Also consider the following isomorphisms
\begin{align}
\boldsymbol{\jmath}_{0}&:\mathfrak{X}(\mathcal{B})\rightarrow\Omega^{0}(\mathcal{B};\mathbb{R}^{2}), & &[\boldsymbol{\jmath}_{0}(\boldsymbol{Y})]^{i}=\text{\textdelta}_{iI}Y^{I}, \nonumber\\
\boldsymbol{\jmath}_{1}&:\Gamma(\text{\large{$\otimes$}}^{2}T\mathcal{B})\rightarrow\Omega^{1}(\mathcal{B};\mathbb{R}^{2}), & &[\boldsymbol{\jmath}_{1}(\boldsymbol{T})]^{i}{}_{J}=\text{\textdelta}_{iI}T^{IJ}, \nonumber \\
\boldsymbol{\jmath}_{2}&:\mathfrak{X}(\mathcal{B})\rightarrow\Omega^{2}(\mathcal{B};\mathbb{R}^{2}), & &[\boldsymbol{\jmath}_{2}(\boldsymbol{Y})]^{i}{}_{12}=\text{\textdelta}_{iI}Y^{I}. \nonumber 
\end{align}
It is straightforward to show that the following diagram commutes.
\begin{equation}
\begin{gathered}\label{gcdeRh}
\scalebox{1}{\xymatrix@C=3ex{0 \ar[r] &\mathfrak{X}(\mathcal{B}) \ar[r]^-{\mathbf{grad}} \ar[d]^{\boldsymbol{\jmath}_{0}} &\Gamma(\text{\large{$\otimes$}}^{2}T\mathcal{B}) \ar[r]^-{\mathbf{c}} \ar[d]^{\boldsymbol{\jmath}_{1}} &\mathfrak{X}(\mathcal{B}) \ar[r] \ar[d]^{\boldsymbol{\jmath}_{2}} &0  \\
0\ar[r] &\Omega^{0}(\mathcal{B};\mathbb{R}^{2}) \ar[r]^-{\boldsymbol{d}} & \Omega^{1}(\mathcal{B};\mathbb{R}^{2}) \ar[r]^-{\boldsymbol{d}} &\Omega^{2}(\mathcal{B};\mathbb{R}^{2}) \ar[r] &0 } } 
\end{gathered}
\end{equation}
The complex in the first row of (\ref{gcdeRh}) is called the $\boldsymbol{\mathsf{gc}}$ complex. This diagram implies that $\mathrm{H}^{k}_{\boldsymbol{\mathsf{gc}}}(\mathcal{B})\approx \text{\large{$\oplus$}}^{2}_{i=1}\mathrm{H}^{k}_{dR}(\mathcal{B})$, where $\mathrm{H}^{k}_{\boldsymbol{\mathsf{gc}}}(\mathcal{B})$ is the $k$-th cohomology group of the $\boldsymbol{\mathsf{gc}}$ complex and we obtain the following result.

\begin{thm}\label{ExactgradTen2D} A tensor $\boldsymbol{T}\in\Gamma(\text{\large{$\otimes$}}^{2}T\mathcal{B})$ on a $2$-manifold $\mathcal{B}\subset\mathbb{R}^{2}$ is the gradient of a vector field if and only if 
\begin{equation}\label{ExactgradTen2DEq}
\mathbf{c}(\boldsymbol{T})=0, \text{ and }~~~\int_{\ell}\langle\boldsymbol{T},\boldsymbol{t}_{\ell}\rangle dS=0, ~\forall\ell\subset\mathcal{B}. \nonumber
\end{equation}
\end{thm}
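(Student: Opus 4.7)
The plan is to mimic the argument given for Theorem \ref{ExactgradTen}, with the three-dimensional $\mathbb{R}^{3}$-valued de Rham complex replaced by its two-dimensional $\mathbb{R}^{2}$-valued counterpart appearing as the bottom row of diagram (\ref{gcdeRh}). The only thing that needs checking is that the bookkeeping still works out cleanly when we drop a dimension, since the operator $\mathbf{c}$ looks formally different from $\mathbf{curl}^{\mathsf{T}}$.

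First I would use the commutativity of (\ref{gcdeRh}) to translate the assertion $\boldsymbol{T}=\mathbf{grad}\,\boldsymbol{Y}$ into the equivalent assertion $\boldsymbol{\jmath}_{1}(\boldsymbol{T})=\boldsymbol{d}(\boldsymbol{\jmath}_{0}(\boldsymbol{Y}))$ in $\Omega^{1}(\mathcal{B};\mathbb{R}^{2})$. Splitting into components, this is equivalent to the two scalar assertions that each 1-form $[\boldsymbol{\jmath}_{1}(\boldsymbol{T})]^{i}=T^{iJ}dX^{J}\in\Omega^{1}(\mathcal{B})$ is exact, i.e.\ $[\boldsymbol{\jmath}_{1}(\boldsymbol{T})]^{i}=dY^{i}$ for $i=1,2$. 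As in the 3D proof, the isomorphisms $\boldsymbol{\jmath}_{k}$ identify $\mathrm{H}^{k}_{\boldsymbol{\mathsf{gc}}}(\mathcal{B})$ with $\text{\large{$\oplus$}}^{2}_{i=1}\mathrm{H}^{k}_{dR}(\mathcal{B})$, so the question of exactness for each component is governed by the de Rham theorem exactly as in (\ref{Exactgrad}).

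Next I would read off the two conditions from (\ref{Exactgrad}) applied componentwise. Closedness of $[\boldsymbol{\jmath}_{1}(\boldsymbol{T})]^{i}$ in 2D reads $T^{i2}{}_{,1}-T^{i1}{}_{,2}=0$; taking the free index to range over $i=1,2$ this is precisely $\mathbf{c}(\boldsymbol{T})=0$, confirming that $\mathbf{c}$ is the correct 2D shadow of $\mathbf{curl}^{\mathsf{T}}$ under $\boldsymbol{\jmath}_{1}$ and $\boldsymbol{\jmath}_{2}$. For the integral condition, I would compute directly in Cartesian coordinates:
\[
\int_{\ell}[\boldsymbol{\jmath}_{1}(\boldsymbol{T})]^{i}=\int_{\ell}T^{iJ}(t_{\ell})^{J}\,dS,
\]
which is the $i$-th component of the vector-valued integral $\int_{\ell}\langle\boldsymbol{T},\boldsymbol{t}_{\ell}\rangle\,dS$. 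Requiring both components to vanish for every closed curve $\ell\subset\mathcal{B}$ is exactly the integral condition appearing in the statement.

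The main (and essentially only) obstacle is the verification that the new operator $\mathbf{c}$ really is the avatar of $\boldsymbol{d}\colon\Omega^{1}(\mathcal{B};\mathbb{R}^{2})\to\Omega^{2}(\mathcal{B};\mathbb{R}^{2})$ under the $\boldsymbol{\jmath}_{k}$'s; once this is in place the proof is identical to that of Theorem \ref{ExactgradTen}. This verification is immediate from the local formula for $d$ on $\Omega^{1}(\mathcal{B})$ in 2D, together with the fact that $\boldsymbol{\jmath}_{2}$ records a vector field via its value in the single $12$-slot. Combining the two conditions identified above then yields (\ref{ExactgradTen2DEq}).
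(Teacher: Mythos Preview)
Your proposal is correct and follows precisely the route the paper intends: the paper does not spell out a separate proof for this 2D statement but presents it as an immediate consequence of diagram (\ref{gcdeRh}) and the resulting isomorphism $\mathrm{H}^{k}_{\boldsymbol{\mathsf{gc}}}(\mathcal{B})\approx \text{\large{$\oplus$}}^{2}_{i=1}\mathrm{H}^{k}_{dR}(\mathcal{B})$, which is exactly the componentwise reduction to (\ref{Exactgrad}) you carry out, mirroring the explicit proof of Theorem \ref{ExactgradTen}.
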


For $2$-manifolds, we can write a second complex that contains $\mathbf{div}$. In an orthonormal coordinate system $\{X^{I}\}$, the codifferential operator $\delta_{k}:\Omega^{k}(\mathcal{B})\rightarrow\Omega^{k-1}(\mathcal{B})$ reads
\begin{equation}\label{LocCoDif}
(\delta\boldsymbol{\beta})_{I_{1}\cdots I_{k-1}}=-\beta_{JI_{1}\cdots  I_{k-1},J}. \nonumber  
\end{equation}
We have $\delta\circ\delta=0$, that gives rise to the complex $\left(\Omega(\mathcal{B}),\delta\right)$ with the cohomology groups $\mathrm{H}^{k}_{co}(\mathcal{B}):=\ker\delta_{k}/\mathrm{im}\,\delta_{k+1}$. Using the Hodge star operator $\ast:\Omega^{k}(\mathcal{B})\rightarrow\Omega^{n-k}(\mathcal{B})$, it is straightforward to show that $\mathrm{H}^{k}_{co}(\mathcal{B})\approx\mathrm{H}^{n-k}_{dR}(\mathcal{B})$. One can also write the complex $\left(\Omega(\mathcal{B};\mathbb{R}^{2}),\boldsymbol{\delta}\right)$, where $\boldsymbol{\delta}\boldsymbol{\alpha}=(\delta\boldsymbol{\alpha}^{1},\delta\boldsymbol{\alpha}^{2})$. By defining the operator  
\begin{equation}
\mathbf{s}:\mathfrak{X}(\mathcal{B})\rightarrow\Gamma(\text{\large{$\otimes$}}^{2}T\mathcal{B}), ~(\mathbf{s}(\boldsymbol{Y}))^{IJ}=\text{\textdelta}^{1J}Y^{I}{}_{,2}-\text{\textdelta}^{2J}Y^{I}{}_{,1}, \nonumber   
\end{equation}
we obtain the following diagram. 
\begin{equation}
\begin{gathered}\label{sdCdeRh}
\scalebox{1}{\xymatrix@C=3ex{0 &\mathfrak{X}(\mathcal{B}) \ar[l]  \ar[d]^{-\boldsymbol{\jmath}_{0}} &\Gamma(\text{\large{$\otimes$}}^{2}T\mathcal{B}) \ar[l]_-{\mathbf{div}} \ar[d]^{\boldsymbol{\jmath}_{1}} &\mathfrak{X}(\mathcal{B}) \ar[l]_-{\mathbf{s}}  \ar[d]^{\boldsymbol{\jmath}_{2}} &0 \ar[l]  \\
0 &\Omega^{0}(\mathcal{B};\mathbb{R}^{2}) \ar[l] & \Omega^{1}(\mathcal{B};\mathbb{R}^{2}) \ar[l]_-{\boldsymbol{\delta}} &\Omega^{2}(\mathcal{B};\mathbb{R}^{2}) \ar[l]_-{\boldsymbol{\delta}} &0 \ar[l] } }  
\end{gathered}
\end{equation}
We call the first row of (\ref{sdCdeRh}) the $\boldsymbol{\mathsf{sd}}$ complex and denote its homology groups by $\mathrm{H}^{k}_{\boldsymbol{\mathsf{sd}}}(\mathcal{B})$. We have $\mathrm{H}^{k}_{\boldsymbol{\mathsf{sd}}}(\mathcal{B})\approx \text{\large{$\oplus$}}^{2}_{i=1}\mathrm{H}^{n-k}_{dR}(\mathcal{B})$. Let $\{\mathbf{E}_{I}\}$ be the standard basis of $\mathbb{R}^{2}$ and let $\boldsymbol{N}_{\!\ell}$ be a unit vector field along a closed curve $\ell$, which is normal to the tangent vector field $\boldsymbol{t}_{\ell}$, such that $\{\boldsymbol{t}_{\ell},\boldsymbol{N}_{\!\ell}\}$ has the same orientation as $\{\mathbf{E}_{1},\mathbf{E}_{2}\}$ does. The following theorem is the analogue of Theorem \ref{ExactgradTen2D} for the $\boldsymbol{\mathsf{sd}}$ complex.

\begin{thm}\label{ExactsTen2D} On a $2$-manifold $\mathcal{B}\subset\mathbb{R}^{2}$, there exists $\boldsymbol{Y}\in\mathfrak{X}(\mathcal{B})$ for $\boldsymbol{T}\in\Gamma(\text{\large{$\otimes$}}^{2}T\mathcal{B})$ such that $\boldsymbol{T}=\mathbf{s}(\boldsymbol{Y})$, if and only if 
\begin{equation}\label{ExactsTen2DEq}
\mathbf{div}\,\boldsymbol{T}=0, \text{ and }~~~\int_{\ell}\langle\boldsymbol{T},\boldsymbol{N}_{\!\ell}\rangle dS=0, ~\forall\ell\subset\mathcal{B}. 
\end{equation}
\end{thm}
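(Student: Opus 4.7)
The plan is to mirror the proofs of Theorems \ref{ExactgradTen} and \ref{ExactcurlTTen}. I would use the commutativity of diagram (\ref{sdCdeRh}) to transport the problem to the $\mathbb{R}^{2}$-valued codifferential complex $\left(\Omega(\mathcal{B};\mathbb{R}^{2}),\boldsymbol{\delta}\right)$, and then invoke the Hodge star isomorphism $\mathrm{H}^{k}_{co}(\mathcal{B})\approx \mathrm{H}^{n-k}_{dR}(\mathcal{B})$, already noted in the excerpt, to reduce further to the ordinary de Rham complex, where (\ref{ExactForm}) applies componentwise.

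Concretely, by the commutativity of (\ref{sdCdeRh}), $\boldsymbol{T}=\mathbf{s}(\boldsymbol{Y})$ is equivalent to $[\boldsymbol{\jmath}_{1}(\boldsymbol{T})]^{i}=\delta\,[\boldsymbol{\jmath}_{2}(\boldsymbol{Y})]^{i}$ for $i=1,2$, and $\mathbf{div}\,\boldsymbol{T}=0$ is equivalent to $\delta\,[\boldsymbol{\jmath}_{1}(\boldsymbol{T})]^{i}=0$. The necessity of both conditions in (\ref{ExactsTen2DEq}) then follows from $\delta\circ\delta=0$ and from the fact that periods of exact forms vanish. For sufficiency, I would apply the Hodge star componentwise: in an orthonormal coordinate system on a flat $2$-manifold one has $\delta=-\ast d\,\ast$, so each $1$-form $\beta^{i}:=\ast[\boldsymbol{\jmath}_{1}(\boldsymbol{T})]^{i}\in\Omega^{1}(\mathcal{B})$ lies in the image of $\delta_{2}:\Omega^{2}(\mathcal{B})\to\Omega^{1}(\mathcal{B})$ if and only if $\beta^{i}$ is exact. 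By (\ref{ExactForm}), this occurs if and only if $d\beta^{i}=0$ together with $\int_{\ell}\beta^{i}=0$ for every closed curve $\ell\subset\mathcal{B}$. Taking $\ast$ of the first condition and using the diagram back-translates it to $\mathbf{div}\,\boldsymbol{T}=0$.

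It then remains to identify the scalar periods $\int_{\ell}\beta^{i}$, $i=1,2$, with the components of $\int_{\ell}\langle\boldsymbol{T},\boldsymbol{N}_{\!\ell}\rangle\,dS$. A short computation using $\ast dX^{1}=dX^{2}$ and $\ast dX^{2}=-dX^{1}$, together with the orientation convention that $\{\boldsymbol{t}_{\ell},\boldsymbol{N}_{\!\ell}\}$ agrees with $\{\mathbf{E}_{1},\mathbf{E}_{2}\}$, shows that the integrand of $\int_{\ell}\beta^{i}$ equals, up to a global sign, the $i$-th component of $\langle\boldsymbol{T},\boldsymbol{N}_{\!\ell}\rangle$; the overall sign is immaterial for the vanishing condition, yielding (\ref{ExactsTen2DEq}). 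The main obstacle, such as it is, is pure bookkeeping: tracking the signs arising from $\delta=-\ast d\,\ast$, from $\ast\ast=(-1)^{k(n-k)}$, and from the orientation of $\boldsymbol{N}_{\!\ell}$, so that the two period conditions line up with (\ref{ExactsTen2DEq}) exactly. All of the conceptual content is supplied by the de Rham theorem applied componentwise to $\left(\Omega(\mathcal{B};\mathbb{R}^{2}),\boldsymbol{d}\right)$, precisely as in Theorems \ref{ExactgradTen} and \ref{ExactcurlTTen}.
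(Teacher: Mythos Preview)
Your proposal is correct and follows essentially the same route as the paper: pass through diagram (\ref{sdCdeRh}) to the codifferential complex, apply the Hodge star componentwise to land in the de Rham complex, invoke (\ref{ExactForm}), and then identify $\int_{\ell}\ast[\boldsymbol{\jmath}_{1}(\boldsymbol{T})]^{i}$ with the $i$-th component of $\int_{\ell}\langle\boldsymbol{T},\boldsymbol{N}_{\!\ell}\rangle\,dS$ via the rotation $\boldsymbol{t}_{\ell}\mapsto\boldsymbol{N}_{\!\ell}$. One small slip: in your sufficiency step you wrote that $\beta^{i}=\ast[\boldsymbol{\jmath}_{1}(\boldsymbol{T})]^{i}$ should lie in $\mathrm{im}\,\delta_{2}$, but what you actually need (and clearly intend) is that $[\boldsymbol{\jmath}_{1}(\boldsymbol{T})]^{i}\in\mathrm{im}\,\delta_{2}$, which is equivalent to $\beta^{i}$ being $d$-exact.
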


\begin{proof} We know that $\boldsymbol{T}=\mathbf{s}(\boldsymbol{Y})$, if and only if $\boldsymbol{\jmath}_{1}(\boldsymbol{T})=\boldsymbol{\delta}(\boldsymbol{\jmath}_{2}(\boldsymbol{Y}))$, if and only if $(\overrightarrow{\boldsymbol{T}}_{\!\!\mathbf{E}_{I}})^{\flat}=\delta Y^{I}$, $I=1,2$. The Hodge star operator induces an isomorphism between the cohomology groups of $\left(\Omega(\mathcal{B}),d\right)$ and $\left(\Omega(\mathcal{B}),\delta\right)$, and therefore, the last condition is equivalent to $\ast(\overrightarrow{\boldsymbol{T}}_{\!\!\mathbf{E}_{I}})^{\flat}=(\overrightarrow{\boldsymbol{T}}^{\perp}_{\!\!\mathbf{E}_{I}})^{\flat}=d Y^{I}$, where $\overrightarrow{\boldsymbol{T}}^{\perp}_{\!\!\mathbf{E}_{I}}=(T^{I2},-T^{I1})$, $I=1,2$. Since $\boldsymbol{G}(\overrightarrow{\boldsymbol{T}}^{\perp}_{\!\!\mathbf{E}_{I}},\boldsymbol{t}_{\ell})=\boldsymbol{G}(\overrightarrow{\boldsymbol{T}}_{\!\!\mathbf{E}_{I}},\boldsymbol{N}_{\ell})$, one obtains (\ref{ExactsTen2DEq}).  
\end{proof}

Next, suppose $\varphi:\mathcal{B}\rightarrow\mathbb{R}^{2}$ is a smooth mapping and let $\{x^{i}\}$ be the Cartesian coordinates of $\mathbb{R}^{2}$ with $\{\mathbf{e}_{i}\}$ being its standard basis. Consider the following isomorphisms
\begin{align}
\boldsymbol{J}_{0}&:\Gamma(T\varphi(\mathcal{B}))\rightarrow\Omega^{0}(\mathcal{B};\mathbb{R}^{2}), & &[\boldsymbol{J}_{0}(\boldsymbol{U})]^{i}=U^{i}, \nonumber\\
\boldsymbol{J}_{1}&:\Gamma(T\varphi(\mathcal{B})\otimes T\mathcal{B})\rightarrow \Omega^{1}(\mathcal{B};\mathbb{R}^{2}), & &[\boldsymbol{J}_{1}(\boldsymbol{F})]^{i}{}_{J}=F^{iJ}, \nonumber \\   
\boldsymbol{J}_{2}&:\Gamma(T\varphi(\mathcal{B}))\rightarrow\Omega^{2}(\mathcal{B};\mathbb{R}^{2}), & &[\boldsymbol{J}_{2}(\boldsymbol{U})]^{i}{}_{12}=U^{i}, \nonumber
\end{align}
together with the operators
\begin{align}
\mathbf{C}&:\Gamma(T\varphi(\mathcal{B})\otimes T\mathcal{B})\rightarrow\Gamma(T\varphi(\mathcal{B})), &&(\mathbf{C}(\boldsymbol{F}))^{i}=F^{i2}{}_{,1}-F^{i1}{}_{,2}, \nonumber  \\ 
\mathbf{S}&:\Gamma(T\varphi(\mathcal{B}))\rightarrow\Gamma(T\varphi(\mathcal{B})\otimes T\mathcal{B}), &&(\mathbf{S}(\boldsymbol{U}))^{iI}=\text{\textdelta}^{1I}U^{i}{}_{,2}-\text{\textdelta}^{2I}U^{i}{}_{,1}. \nonumber   
\end{align}
Replacing $\boldsymbol{\jmath}_{0}$, $\boldsymbol{\jmath}_{1}$, $\boldsymbol{\jmath}_{2}$, $\mathbf{c}$, and $\mathbf{s}$ with $\boldsymbol{J}_{0}$, $\boldsymbol{J}_{1}$, $\boldsymbol{J}_{2}$, $\mathbf{C}$, and $\mathbf{S}$, respectively, in diagrams (\ref{gcdeRh}) and (\ref{sdCdeRh}) gives us the corresponding diagrams for two-point tensors. The associated complexes are called the $\boldsymbol{\mathsf{GC}}$ and the $\boldsymbol{\mathsf{SD}}$ complexes and we have the following result.

\begin{thm}\label{ExactGradS2pTen2D} Let $\varphi:\mathcal{B}\rightarrow\mathbb{R}^{2}$ be a smooth mapping and $\boldsymbol{F}\in\Gamma(T\varphi(\mathcal{B})\otimes T\mathcal{B})$. We have $\boldsymbol{F}=\mathbf{Grad}\,\boldsymbol{U}$, if and only if
\begin{equation}\label{ExactGrad2pTen2DEq}
\mathbf{C}(\boldsymbol{F})=0, \text{ and }~~~\int_{\ell}\langle\boldsymbol{F},\boldsymbol{t}_{\ell}\rangle dS=0, ~\forall\ell\subset\mathcal{B}. \nonumber
\end{equation}
Moreover, we can write $\boldsymbol{F}=\mathbf{S}(\boldsymbol{U})$, if and only if 
\begin{equation}\label{ExactS2pTen2DEq}
\mathbf{Div}\,\boldsymbol{F}=0, \text{ and }~~~\int_{\ell}\langle\boldsymbol{F},\boldsymbol{N}_{\!\ell}\rangle dS=0, ~\forall\ell\subset\mathcal{B}. \nonumber
\end{equation}
\end{thm}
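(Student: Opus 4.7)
The plan is to mirror the proofs of Theorems~\ref{ExactgradTen2D} and \ref{ExactsTen2D}, simply replacing $(\boldsymbol{\jmath}_{0},\boldsymbol{\jmath}_{1},\boldsymbol{\jmath}_{2},\mathbf{c},\mathbf{s})$ by $(\boldsymbol{J}_{0},\boldsymbol{J}_{1},\boldsymbol{J}_{2},\mathbf{C},\mathbf{S})$. As a preliminary step I would verify from the coordinate definitions that the diagrams analogous to (\ref{gcdeRh}) and (\ref{sdCdeRh}), with $\boldsymbol{J}_{k}$, $\mathbf{C}$, and $\mathbf{S}$ substituted for their lowercase counterparts, actually commute. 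Since $\boldsymbol{J}_{k}$ just reads off the components $F^{iJ}$ with the spatial index $i$ playing the role of the $\mathbb{R}^{2}$-valued component, the commutation is immediate from $(\mathbf{C}(\boldsymbol{F}))^{i}=F^{i2}{}_{,1}-F^{i1}{}_{,2}=[\boldsymbol{d}\boldsymbol{J}_{1}(\boldsymbol{F})]^{i}{}_{12}$ and a similar identity for $\mathbf{S}$ using $\boldsymbol{\delta}$. The commuting diagrams yield $\mathrm{H}^{k}_{\boldsymbol{\mathsf{GC}}}(\mathcal{B})\approx\bigoplus_{i=1}^{2}\mathrm{H}^{k}_{dR}(\mathcal{B})$ and $\mathrm{H}^{k}_{\boldsymbol{\mathsf{SD}}}(\mathcal{B})\approx\bigoplus_{i=1}^{2}\mathrm{H}^{n-k}_{dR}(\mathcal{B})$.

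For the first assertion, the diagram shows that $\boldsymbol{F}=\mathbf{Grad}\,\boldsymbol{U}$ if and only if $\boldsymbol{J}_{1}(\boldsymbol{F})=\boldsymbol{d}(\boldsymbol{J}_{0}(\boldsymbol{U}))$, and componentwise this is $(\overrightarrow{\boldsymbol{F}}_{\!\!\mathbf{e}_{i}})^{\flat}=dU^{i}$ for $i=1,2$, where $\overrightarrow{\boldsymbol{F}}_{\!\!\mathbf{e}_{i}}=F^{iJ}\mathbf{E}_{J}$. Applying the de Rham theorem (\ref{ExactForm}) to each of the two $\mathbb{R}$-valued $1$-forms $(\overrightarrow{\boldsymbol{F}}_{\!\!\mathbf{e}_{i}})^{\flat}$ gives the closedness conditions $d(\overrightarrow{\boldsymbol{F}}_{\!\!\mathbf{e}_{i}})^{\flat}=0$, which unpack to $(\mathbf{C}(\boldsymbol{F}))^{i}=0$, together with the period conditions $\int_{\ell}\boldsymbol{G}(\overrightarrow{\boldsymbol{F}}_{\!\!\mathbf{e}_{i}},\boldsymbol{t}_{\ell})\,dS=0$. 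Summing the latter against the basis $\{\mathbf{e}_{i}\}$ yields precisely $\int_{\ell}\langle\boldsymbol{F},\boldsymbol{t}_{\ell}\rangle\,dS=0$, as desired.

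For the second assertion, the analogue of diagram (\ref{sdCdeRh}) gives $\boldsymbol{F}=\mathbf{S}(\boldsymbol{U})$ if and only if $(\overrightarrow{\boldsymbol{F}}_{\!\!\mathbf{e}_{i}})^{\flat}=\delta U^{i}$ for $i=1,2$. I would then use the Hodge star isomorphism $\ast:\Omega^{k}(\mathcal{B})\to\Omega^{n-k}(\mathcal{B})$ to convert this into the equivalent statement $\ast(\overrightarrow{\boldsymbol{F}}_{\!\!\mathbf{e}_{i}})^{\flat}=(\overrightarrow{\boldsymbol{F}}^{\perp}_{\!\!\mathbf{e}_{i}})^{\flat}=dU^{i}$, where $\overrightarrow{\boldsymbol{F}}^{\perp}_{\!\!\mathbf{e}_{i}}=(F^{i2},-F^{i1})$, exactly as in the proof of Theorem~\ref{ExactsTen2D}. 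Another application of (\ref{ExactForm}) produces the closedness condition $(\mathbf{Div}\,\boldsymbol{F})^{i}=0$ together with the line integrals of $\overrightarrow{\boldsymbol{F}}^{\perp}_{\!\!\mathbf{e}_{i}}$ against $\boldsymbol{t}_{\ell}$, and the pointwise identity $\boldsymbol{G}(\overrightarrow{\boldsymbol{F}}^{\perp}_{\!\!\mathbf{e}_{i}},\boldsymbol{t}_{\ell})=\boldsymbol{G}(\overrightarrow{\boldsymbol{F}}_{\!\!\mathbf{e}_{i}},\boldsymbol{N}_{\!\ell})$, coming from the orientation convention on $\{\boldsymbol{t}_{\ell},\boldsymbol{N}_{\!\ell}\}$, converts these into the stated normal-traction integrals $\int_{\ell}\langle\boldsymbol{F},\boldsymbol{N}_{\!\ell}\rangle\,dS=0$.

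The only step that requires real care is tracking signs and orientations when passing between the $\delta$-complex and the $d$-complex via the Hodge star in two dimensions, and making sure the two scalar period conditions for $i=1,2$ package cleanly into a single vectorial integral condition; beyond this bookkeeping, the argument is a direct transcription of the proofs of Theorems~\ref{ExactgradTen2D} and \ref{ExactsTen2D} into the two-point tensor setting, because neither $\mathbf{Grad}$, $\mathbf{Div}$, $\mathbf{C}$, nor $\mathbf{S}$ couples the spatial index $i$ to the material index $I$.
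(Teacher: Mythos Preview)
Your proposal is correct and follows precisely the route the paper indicates: the paper does not write out a separate proof of this theorem but simply notes that replacing $\boldsymbol{\jmath}_{k}$, $\mathbf{c}$, $\mathbf{s}$ by $\boldsymbol{J}_{k}$, $\mathbf{C}$, $\mathbf{S}$ in diagrams (\ref{gcdeRh}) and (\ref{sdCdeRh}) gives the corresponding commuting diagrams for two-point tensors, after which the argument is identical to that of Theorems~\ref{ExactgradTen2D} and \ref{ExactsTen2D}. Your write-up faithfully carries out this transcription, including the Hodge-star step and the orientation identity $\boldsymbol{G}(\overrightarrow{\boldsymbol{F}}^{\perp}_{\!\!\mathbf{e}_{i}},\boldsymbol{t}_{\ell})=\boldsymbol{G}(\overrightarrow{\boldsymbol{F}}_{\!\!\mathbf{e}_{i}},\boldsymbol{N}_{\!\ell})$, exactly as in the proof of Theorem~\ref{ExactsTen2D}.
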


As was mentioned earlier, the complexes for two-point tensors do not require $\mathcal{B}$ to be flat. This allows one to obtain a complex describing motions of $2$D surfaces (shells) in $\mathbb{R}^{3}$. Let $(\mathcal{B},\boldsymbol{G})$ be a $2$D surface in $\mathbb{R}^{3}$ with an arbitrary local coordinate system $\{X^{I}\}$, $I=1,2$, and let $\{x^{i}\}$ and $\{\mathbf{e}_{i}\}$, $i=1,2,3$, be the Cartesian coordinates and the standard basis of $\mathbb{R}^{3}$, respectively. The local basis for $T\mathcal{B}$ induced by $\{X^{I}\}$ is denoted by $\{\boldsymbol{E}_{I}\}$. Suppose $\varphi:\mathcal{B}\rightarrow\mathbb{R}^{3}$ is a smooth mapping and consider the following isomorphisms   
\begin{align}
\mathbf{J}_{0}&:\Gamma(T\varphi(\mathcal{B}))\rightarrow\Omega^{0}(\mathcal{B};\mathbb{R}^{3}), & &[\mathbf{J}_{0}(\boldsymbol{U})]^{i}=U^{i}, \nonumber\\
\mathbf{J}_{1}&:\Gamma(T\varphi(\mathcal{B})\otimes T\mathcal{B})\rightarrow \Omega^{1}(\mathcal{B};\mathbb{R}^{3}), & &[\mathbf{J}_{1}(\boldsymbol{F})]^{i}{}_{J}=G_{JI}F^{iI}, \nonumber \\   
\mathbf{J}_{2}&:\Gamma(T\varphi(\mathcal{B}))\rightarrow\Omega^{2}(\mathcal{B};\mathbb{R}^{3}), & &[\mathbf{J}_{2}(\boldsymbol{U})]^{i}{}_{12}=\sqrt{\det G_{IJ}}\,U^{i}, \nonumber
\end{align}
where $G_{IJ}$ are the components of $\boldsymbol{G}$ and $\det G_{IJ}$ is the determinant of the matrix $[G_{IJ}]_{2\times2}$. Let $G^{IJ}$ be the components of the inverse of $[G_{IJ}]_{2\times2}$. We define the operators $\mathsf{Grad}:\Gamma(T\varphi(\mathcal{B}))\rightarrow\Gamma(T\varphi(\mathcal{B})\otimes T\mathcal{B})$ and $\mathsf{C}:\Gamma(T\varphi(\mathcal{B})\otimes T\mathcal{B})\rightarrow\Gamma(T\varphi(\mathcal{B}))$ by 
\begin{equation}
(\mathsf{Grad}\,\boldsymbol{U})^{iI}=G^{IJ}U^{i}{}_{,J}, ~~  
(\mathsf{C}(\boldsymbol{F}))^{i}=\frac{\left( G_{2K}F^{iK}\right)_{,1}-\left(G_{1K}F^{iK}\right)_{,2}}{\sqrt{\det G_{IJ} }}. \nonumber    
\end{equation}
Using the above operators, one obtains the following diagram for the \emph{\textsf{GC}} complex.
\begin{equation}
\begin{gathered}\label{GCSerdeRh}
\scalebox{1}{\xymatrix@C=3ex{0 \ar[r] &\Gamma(T\varphi(\mathcal{B})) \ar[r]^-{\mathsf{Grad}} \ar[d]^{\mathbf{J}_{0}} &\Gamma(T\varphi(\mathcal{B})\otimes T\mathcal{B}) \ar[r]^-{\mathsf{C}} \ar[d]^{\mathbf{J}_{1}} &\Gamma(T\varphi(\mathcal{B})) \ar[r] \ar[d]^{\mathbf{J}_{2}} &0  \\
0\ar[r] &\Omega^{0}(\mathcal{B};\mathbb{R}^{3}) \ar[r]^-{\boldsymbol{d}} & \Omega^{1}(\mathcal{B};\mathbb{R}^{3}) \ar[r]^-{\boldsymbol{d}} &\Omega^{2}(\mathcal{B};\mathbb{R}^{3}) \ar[r] &0 } } \nonumber
\end{gathered}
\end{equation}
Thus, the following result holds.

\begin{thm}\label{ExactGradSSurf2pTen2D} Let $\mathcal{B}$ be a $2$D surface and let $\varphi:\mathcal{B}\rightarrow\mathbb{R}^{3}$ be a smooth mapping. Then, $\boldsymbol{F}\in\Gamma(T\varphi(\mathcal{B})\otimes T\mathcal{B})$ can be written as $\boldsymbol{F}=\mathsf{Grad}\,\boldsymbol{U}$, if and only if
\begin{equation}\label{ExactGradSSurf2pTen2DEq}
\mathsf{C}(\boldsymbol{F})=0, \text{ and }~~~\int_{\ell}\langle\boldsymbol{F},\boldsymbol{t}_{\ell}\rangle dS=0, ~\forall\ell\subset\mathcal{B}, \nonumber
\end{equation}
where $\langle\boldsymbol{F},\boldsymbol{t}_{\ell}\rangle= G_{IJ}F^{iJ}(\boldsymbol{t}_{\ell})^{I}\mathbf{e}_{i}$.
\end{thm}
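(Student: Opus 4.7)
The plan is to mimic the arguments used for Theorems \ref{ExactgradTen} and \ref{ExactGradCurlTPTen}, transferring the question about $\boldsymbol{F}$ to the $\mathbb{R}^{3}$-valued de Rham complex on $\mathcal{B}$ via the commutative diagram (\ref{GCSerdeRh}), and then applying the de Rham theorem componentwise. Since $\mathbf{J}_{0},\mathbf{J}_{1},\mathbf{J}_{2}$ are isomorphisms, the claim $\boldsymbol{F}=\mathsf{Grad}\,\boldsymbol{U}$ is equivalent to $\mathbf{J}_{1}(\boldsymbol{F})=\boldsymbol{d}(\mathbf{J}_{0}(\boldsymbol{U}))$, which in turn splits into three scalar equations $\mathbf{J}_{1}(\boldsymbol{F})^{i}=d\,U^{i}$ for $i=1,2,3$.

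Before using the diagram I would verify that it actually commutes, since the surface case uses a non-trivial metric rather than Cartesian coordinates. A short direct calculation gives $[\mathbf{J}_{1}(\mathsf{Grad}\,\boldsymbol{U})]^{i}{}_{J}=G_{JI}G^{IK}U^{i}{}_{,K}=U^{i}{}_{,J}=(dU^{i})_{J}$, and $[\mathbf{J}_{2}(\mathsf{C}(\boldsymbol{F}))]^{i}{}_{12}=\sqrt{\det G_{IJ}}\,(\mathsf{C}(\boldsymbol{F}))^{i}=(G_{2K}F^{iK})_{,1}-(G_{1K}F^{iK})_{,2}=(\boldsymbol{d}\mathbf{J}_{1}(\boldsymbol{F}))^{i}{}_{12}$. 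Thus the factors $G_{JI}$ in $\mathbf{J}_{1}$ and $\sqrt{\det G_{IJ}}$ in $\mathbf{J}_{2}$ are precisely what is needed to absorb the curvature of $\mathcal{B}$ into the standard coordinate-based exterior derivative on the ambient side.

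Next, I would apply the de Rham theorem in the form (\ref{ExactForm}) to each component: $\mathbf{J}_{1}(\boldsymbol{F})^{i}$ is exact if and only if $d\,\mathbf{J}_{1}(\boldsymbol{F})^{i}=0$ and $\int_{\ell}\mathbf{J}_{1}(\boldsymbol{F})^{i}=0$ for every closed curve $\ell\subset\mathcal{B}$. Aggregating the three closedness conditions gives $\boldsymbol{d}\,\mathbf{J}_{1}(\boldsymbol{F})=0$, which by the right square of the diagram and the fact that $\mathbf{J}_{2}$ is an isomorphism is equivalent to $\mathsf{C}(\boldsymbol{F})=0$. For the integral conditions, the pull-back formula for a $1$-form along a curve gives $\int_{\ell}\mathbf{J}_{1}(\boldsymbol{F})^{i}=\int_{\ell}G_{JI}F^{iI}(\boldsymbol{t}_{\ell})^{J}\,dS$, and assembling these over $i=1,2,3$ with basis $\{\mathbf{e}_{i}\}$ reproduces exactly $\int_{\ell}\langle\boldsymbol{F},\boldsymbol{t}_{\ell}\rangle\,dS$ with $\langle\boldsymbol{F},\boldsymbol{t}_{\ell}\rangle=G_{IJ}F^{iJ}(\boldsymbol{t}_{\ell})^{I}\mathbf{e}_{i}$.

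The main obstacle, if any, is bookkeeping rather than conceptual: one must check carefully that the metric factors in the definitions of $\mathbf{J}_{1}$, $\mathbf{J}_{2}$, $\mathsf{Grad}$, and $\mathsf{C}$ conspire to make diagram (\ref{GCSerdeRh}) commute, because unlike the flat $3$-manifold case there is no global orthonormal frame on the surface $\mathcal{B}$. Once this commutativity is established, the equivalence of cohomology groups $\mathrm{H}^{k}_{\boldsymbol{\mathsf{GC}}}(\mathcal{B})\approx\text{\large{$\oplus$}}^{3}_{i=1}\mathrm{H}^{k}_{dR}(\mathcal{B})$ follows automatically and the theorem reduces to three copies of the scalar de Rham statement (\ref{Exactgrad}), concluding the proof.
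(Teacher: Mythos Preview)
Your proposal is correct and follows essentially the same approach as the paper: the paper simply states the commutative diagram (\ref{GCSerdeRh}) and then writes ``Thus, the following result holds,'' implicitly invoking the same componentwise de Rham argument used in Theorems \ref{ExactgradTen} and \ref{ExactGradCurlTPTen}. Your version is more explicit in verifying the commutativity and unpacking the integral condition, but the strategy is identical.
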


\subsection{Complexes Induced by the Calabi Complex}

A differential complex suitable for symmetric second-order tensors was introduced by \citet{Calabi1961}. It is well-known that the Calabi complex in $\mathbb{R}^{3}$ is equivalent to the linear elasticity complex \citep{Eastwood2000}. In this section, we study the Calabi complex and its connection with the linear elasticity complex in some details. As we will see later, this study provides a framework for writing the nonlinear compatibility equations in curved ambient spaces and comparing stress functions induced by the Calabi complex with those induced by the $\boldsymbol{\mathsf{gcd}}$ or the $\boldsymbol{\mathsf{GCD}}$ complexes. Moreover, the Calabi complex provides a coordinate-free expression for the linear compatibility equations.

The Calabi complex is valid on any Riemannian manifold with constant (sectional) curvature (also called a Clifford-Klein space). These spaces are defined as follows. Let $\nabla$ be the Levi-Civita connection of $(\mathcal{B},\boldsymbol{G})$ and let $\boldsymbol{X}_{i}\in\mathfrak{X}(\mathcal{B})$, $i=1,\dots,5$. The curvature $\boldsymbol{R}$ and the Riemannian curvature $\boldsymbol{\mathcal{R}}$ induced by $\boldsymbol{G}$ are given by $\boldsymbol{R}(\boldsymbol{X}_{\!1},\boldsymbol{X}_{\!2})\boldsymbol{X}_{\!3}=\nabla_{\boldsymbol{X}_{\!1}}\!\nabla_{\boldsymbol{X}_{\!2}}\boldsymbol{X}_{\!3}-\nabla_{\boldsymbol{X}_{\!2}}\!\nabla_{\boldsymbol{X}_{\!1}}\boldsymbol{X}_{\!3}-\nabla_{[\boldsymbol{X}_{\!1},\boldsymbol{X}_{\!2}]}\boldsymbol{X}_{\!3}$, and $\boldsymbol{\mathcal{R}}(\boldsymbol{X}_{\!1},\boldsymbol{X}_{\!2},\boldsymbol{X}_{\!3},\boldsymbol{X}_{\!4})=\boldsymbol{G}(\boldsymbol{R}(\boldsymbol{X}_{\!1},\boldsymbol{X}_{\!2})\boldsymbol{X}_{\!3},\boldsymbol{X}_{\!4})$. Let $\boldsymbol{\Sigma}_{X}$ be a 2-dimensional subspace of $T_{X}\mathcal{B}$ and let $\mathbf{X}_{1},\mathbf{X}_{2}\in \boldsymbol{\Sigma}_{X}$ be two arbitrary linearly independent vectors. The sectional curvature of $\boldsymbol{\Sigma}_{X}$ is defined as  
\begin{equation}
K(\boldsymbol{\Sigma}_{X})=\frac{\boldsymbol{\mathcal{R}}(\mathbf{X}_{1},\mathbf{X}_{2},\mathbf{X}_{2},\mathbf{X}_{1})}{\left(\boldsymbol{G}(\mathbf{X}_{1},\mathbf{X}_{1})\boldsymbol{G}(\mathbf{X}_{2},\mathbf{X}_{2})\right)^{2}-\left(\boldsymbol{G}(\mathbf{X}_{1},\mathbf{X}_{2})\right)^{2}}. \nonumber
\end{equation}
Sectional curvature $K(\boldsymbol{\Sigma}_{X})$ is independent of the choice of $\mathbf{X}_{1}$ and $\mathbf{X}_{2}$ \citep{Docarmo1992}. A manifold $\mathcal{B}$ has a constant curvature $\mathrm{k}\in\mathbb{R}$ if and only if $K(\boldsymbol{\Sigma}_{X})=\mathrm{k}$, $\forall X\in\mathcal{B}$ and $\forall \boldsymbol{\Sigma}_{X}\subset T_{X}\mathcal{B}$. If $\mathcal{B}$ is complete and simply-connected, it is isometric to: (i) the $n$-sphere with radius $1/\sqrt{\mathrm{k}}$, if $\mathrm{k}>0$, (ii) $\mathbb{R}^{n}$, if $\mathrm{k}=0$, and (iii) the hyperbolic space, if $\mathrm{k}<0$ \citep{KN1963}. An arbitrary Riemannian manifold with constant curvature is locally isometric to one of the above manifolds depending on the sign of $\mathrm{k}$. For example, the sectional curvature of a cylinder in $\mathbb{R}^{3}$ is zero and the cylinder is locally isometric to $\mathbb{R}^{2}$. Discussions on the classification of Riemannian manifolds with constant curvatures can be found in \citet{Wolf2011}. One can show that $(\mathcal{B},\boldsymbol{G})$ has constant curvature $\mathrm{k}$ if and only if    
\begin{equation}\label{com_con_sct}
\boldsymbol{R}(\boldsymbol{X}_{\!1},\boldsymbol{X}_{\!2})\boldsymbol{X}_{\!3}=\mathrm{k}\big(\boldsymbol{G}(\boldsymbol{X}_{\!3},\boldsymbol{X}_{\!2})\boldsymbol{X}_{\!1} - \boldsymbol{G}(\boldsymbol{X}_{\!3},\boldsymbol{X}_{\!1})\boldsymbol{X}_{\!2}\big). 
\end{equation}
Similar to the de Rham complex, the Calabi complex on $n$-manifolds terminates after $n$ non-trivial operators. For $n=3$, these operators are: the Killing operator $\mathrm{D}_{0}$, the linearized curvature operator $\mathrm{D}_{1}$, and the Bianchi operator $\mathrm{D}_{2}$.

The first operator in the Calabi complex on $(\boldsymbol{B},\boldsymbol{G})$ is the Killing operator $\mathrm{D}_{0}:\mathfrak{X}(\mathcal{B})\rightarrow\Gamma(S^{2}T^{\ast}\mathcal{B})$ defined as
\begin{equation}
(\mathrm{D}_{0}\boldsymbol{U})(\boldsymbol{X}_{\!1},\boldsymbol{X}_{\!2})= \frac{1}{2}\Big(\boldsymbol{G}(\boldsymbol{X}_{\!1},\nabla_{\boldsymbol{X}_{\!2}}\boldsymbol{U}) + \boldsymbol{G}(\nabla_{\boldsymbol{X}_{\!1}}\boldsymbol{U},\boldsymbol{X}_{\!2})\Big).\nonumber
\end{equation}
Note that $\mathrm{D}_{0}\boldsymbol{U}=\frac{1}{2}\EuScript{L}_{\boldsymbol{U}}\boldsymbol{G}$, where $\EuScript{L}_{\boldsymbol{U}}$ is the Lie derivative. The kernel of $\mathrm{D}_{0}$ coincides with the space of Killing vector fields $\text{\textTheta}(\mathcal{B})$ on $\mathcal{B}$.\footnote{A Killing vector field $\boldsymbol{U}\in\text{\textTheta}(\mathcal{B})$ is also called an infinitesimal isometry in the sense that its flow $\mathrm{Fl}^{\boldsymbol{U}}$ induces an isometry $\mathrm{Fl}^{\boldsymbol{U}}_{t}:=\mathrm{Fl}^{\boldsymbol{U}}(t,\cdot):U\subset\mathcal{B}\rightarrow\mathcal{B}$ \citep{Docarmo1992}.} If an $n$-manifold $\mathcal{B}$ is a subset of $\mathbb{R}^{n}$ with Cartesian coordinates $\{X^{I}\}$, any $\boldsymbol{U}\in \text{\textTheta}(\mathcal{B})$ at $X=(X^{1},\dots,X^{n})\in\mathbb{R}^{n}$ can be written as $\boldsymbol{U}(X)=\mathbf{v}+A\cdot X$, where $\mathbf{v}\in\mathbb{R}^{n}$, $A\in\mathfrak{so}(\mathbb{R}^{n}):=\{A\in\mathbb{R}^{n\times n}:A+A^{\mathsf{T}}=0\}$, with $\mathbb{R}^{n\times n}$ being the space of real $n\times n$ matrices.\footnote{This implies that $\text{\textTheta}(\mathcal{B})$ is isomorphic to $\mathfrak{euc}(\mathbb{R}^{n})$, which is the Lie algebra of the group of rigid body motions $Euc(\mathbb{R}^{n})$.} Therefore, we conclude that $\dim \text{\textTheta}(\mathcal{B})=n(n+1)/2$.
 
The second operator of the Calabi complex can be obtained by linearizing the Riemannian curvature. Let $\boldsymbol{A}$ be a Riemannian metric on $\mathcal{B}$ and let $\nabla^{\boldsymbol{A}}$ and $\boldsymbol{\mathcal{R}}^{\boldsymbol{A}}$ be the corresponding Levi-Civita connection and Riemannian curvature, respectively. The tensor $\boldsymbol{\mathcal{R}}^{\boldsymbol{A}}$ has the following symmetries.
\begin{equation}\label{Bia}
\begin{aligned}
& \boldsymbol{\mathcal{R}}^{\boldsymbol{A}}(\!\boldsymbol{X}_{1},\boldsymbol{X}_{2},\boldsymbol{X}_{3},\boldsymbol{X}_{4}) + \boldsymbol{\mathcal{R}}^{\boldsymbol{A}}(\boldsymbol{X}_{2},\boldsymbol{X}_{3},\boldsymbol{X}_{1},\boldsymbol{X}_{4}) \\
& + 
\boldsymbol{\mathcal{R}}^{\boldsymbol{A}}(\boldsymbol{X}_{3},\boldsymbol{X}_{1},\boldsymbol{X}_{2},\boldsymbol{X}_{4})=0,
\end{aligned}
\end{equation}
\begin{equation}\label{SymRie2} 
\begin{aligned}
& \boldsymbol{\mathcal{R}}^{\boldsymbol{A}}(\boldsymbol{X}_{1},\boldsymbol{X}_{2},\boldsymbol{X}_{3},\boldsymbol{X}_{4}) = -\boldsymbol{\mathcal{R}}^{\boldsymbol{A}}(\boldsymbol{X}_{2},\boldsymbol{X}_{1},\boldsymbol{X}_{3},\boldsymbol{X}_{4}) \\
&= 
-\boldsymbol{\mathcal{R}}^{\boldsymbol{A}}(\boldsymbol{X}_{1},\boldsymbol{X}_{2},\boldsymbol{X}_{4},\boldsymbol{X}_{3}). 
\end{aligned}
\end{equation}
Equivalently, the components $\mathcal{R}^{\boldsymbol{A}}_{I_{1}I_{2}I_{3}I_{4}}$ of $\boldsymbol{\mathcal{R}}^{\boldsymbol{A}}$ satisfy
\begin{equation}
\begin{aligned}
\mathcal{R}^{\boldsymbol{A}}_{I_{1}I_{2}I_{3}I_{4}}&+\mathcal{R}^{\boldsymbol{A}}_{I_{2}I_{3}I_{1}I_{4}}+\mathcal{R}^{\boldsymbol{A}}_{I_{3}I_{1}I_{2}I_{4}}=0, \\   \nonumber
\mathcal{R}^{\boldsymbol{A}}_{I_{1}I_{2}I_{3}I_{4}}&=-\mathcal{R}^{\boldsymbol{A}}_{I_{2}I_{1}I_{3}I_{4}}=-\mathcal{R}^{\boldsymbol{A}}_{I_{1}I_{2}I_{4}I_{3}}.
\end{aligned}
\end{equation}
The identity (\ref{Bia}) is called the first Bianchi identity. The above symmetries imply that $\boldsymbol{\mathcal{R}}^{\boldsymbol{A}}$ has $n^{2}(n^{2}-1)/12$ independent components \citep{Srivastava2008}. The relations (\ref{Bia}) and (\ref{SymRie2}) also induce the symmetry 
\begin{equation}
\boldsymbol{\mathcal{R}}^{\boldsymbol{A}}(\boldsymbol{X}_{\!1},\boldsymbol{X}_{\!2},\boldsymbol{X}_{\!3},\boldsymbol{X}_{\!4}) = \boldsymbol{\mathcal{R}}^{\boldsymbol{A}}(\boldsymbol{X}_{\!3},\boldsymbol{X}_{\!4},\boldsymbol{X}_{\!1},\boldsymbol{X}_{\!2}),\label{SymRie3}
\end{equation}
i.e. $\mathcal{R}^{\boldsymbol{A}}_{I_{1}I_{2}I_{3}I_{4}}=\mathcal{R}^{\boldsymbol{A}}_{I_{3}I_{4}I_{1}I_{2}}$. For $n=2,3$, (\ref{SymRie2}) and (\ref{SymRie3}) determine all the symmetries of $\boldsymbol{\mathcal{R}}^{\boldsymbol{A}}$, and therefore, the space of tensors with the symmetries of the Riemannian curvature is $\Gamma(S^{2}(\Lambda^{2}T^{\ast}\mathcal{B}))$.\footnote{Tensors in $\Gamma(S^{2}(\Lambda^{2}T^{\ast}\mathcal{B}))$ have $(n^{2}-n+2)(n^{2}-n)/8$ independent components. For $n\geq4$, (\ref{SymRie2}) and (\ref{SymRie3}) do not imply (\ref{Bia}), and thus, tensors with the symmetries of the Riemannian curvature belong to a subspace of $\Gamma(S^{2}(\Lambda^{2}T^{\ast}\mathcal{B}))$. If $T^{\ast}\mathcal{B}$ is induced by a representation, i.e. it is a homogeneous vector bundle corresponding to an irreducible representation, the representation theory provides some tools to specify tensors with complicated symmetries such as those of the Riemannian curvature \citep{BastonEastwood1989,PenroseRindler1984}.} Let $\boldsymbol{e}\in\Gamma(S^{2}T^{\ast}\mathcal{B})$ be an arbitrary symmetric $({}^{0}_{2})$-tensor. The linearization of the operator $\boldsymbol{A}\mapsto\boldsymbol{\mathcal{R}}^{\boldsymbol{A}}$ is the linear operator $\mathbf{r}^{\boldsymbol{A}}:\Gamma(S^{2}T^{\ast}\mathcal{B})\rightarrow \Gamma(S^{2}(\Lambda^{2}T^{\ast}\mathcal{B}))$ defined by $\mathbf{r}^{\boldsymbol{A}}(\boldsymbol{e}):=\frac{d}{dt}\big|_{t=0}\boldsymbol{\mathcal{R}}^{\boldsymbol{A}+t\boldsymbol{e}}$ \citep{GaGo2004,GaGo1983}. One can write       
\begin{align}
	 2\mathbf{r}^{\boldsymbol{A}}(\boldsymbol{e})(&\boldsymbol{X}_1,\boldsymbol{X}_2,\boldsymbol{X}_3,\boldsymbol{X}_4)
	=  \mathbf{L}^{\boldsymbol{A}}(\boldsymbol{e})(\boldsymbol{X}_1,\boldsymbol{X}_2,\boldsymbol{X}_3,\boldsymbol{X}_4)   
	\nonumber \\
	&+ \boldsymbol{e}(\boldsymbol{R}^{\boldsymbol{A}}(\boldsymbol{X}_1,\!\boldsymbol{X}_2)\boldsymbol{X}_3,\boldsymbol{X}_4) -\boldsymbol{e}(\boldsymbol{R}^{\boldsymbol{A}}(\boldsymbol{X}_1,\boldsymbol{X}_2)\boldsymbol{X}_4,\boldsymbol{X}_3), \nonumber
\end{align}
with
\begin{equation}
\begin{aligned}
&\mathbf{L}^{\boldsymbol{A}}(\boldsymbol{e})(\boldsymbol{X}_{1},\boldsymbol{X}_{2},\boldsymbol{X}_{3},\boldsymbol{X}_{4})= 
 \left(\nabla^{\boldsymbol{A}}_{\boldsymbol{X}_{1}}\nabla^{\boldsymbol{A}}_{\boldsymbol{X}_{3}}\boldsymbol{e}\right)(\boldsymbol{X}_{2},\boldsymbol{X}_{4})+\! \left(\nabla^{\boldsymbol{A}}_{\boldsymbol{X}_{2}}\nabla^{\boldsymbol{A}}_{\boldsymbol{X}_{4}}\boldsymbol{e}\right)(\boldsymbol{X}_{1},\boldsymbol{X}_{3})  \\
&    -\! \left(\nabla^{\boldsymbol{A}}_{\boldsymbol{X}_{1}}\nabla^{\boldsymbol{A}}_{\boldsymbol{X}_{4}}\boldsymbol{e}\right)(\boldsymbol{X}_{2},\boldsymbol{X}_{3}) -\! \left(\nabla^{\boldsymbol{A}}_{\boldsymbol{X}_{2}}\nabla^{\boldsymbol{A}}_{\boldsymbol{X}_{3}}\boldsymbol{e}\right)(\boldsymbol{X}_{1},\boldsymbol{X}_{4})
-\! \left(\nabla^{\boldsymbol{A}}_{\nabla^{\boldsymbol{A}}_{\boldsymbol{X}_{1}}\boldsymbol{X}_{3}}\boldsymbol{e}\right)(\boldsymbol{X}_{2},\boldsymbol{X}_{4}) \\
&  - \left(\nabla^{\boldsymbol{A}}_{\nabla^{\boldsymbol{A}}_{\boldsymbol{X}_{2}}\boldsymbol{X}_{4}}\boldsymbol{e}\right)(\boldsymbol{X}_{1},\boldsymbol{X}_{3}) +\! \left(\nabla^{\boldsymbol{A}}_{\nabla^{\boldsymbol{A}}_{\boldsymbol{X}_{1}}\boldsymbol{X}_{4}}\boldsymbol{e}\right)(\boldsymbol{X}_{2},\boldsymbol{X}_{3}) +\! \left(\nabla^{\boldsymbol{A}}_{\nabla^{\boldsymbol{A}}_{\boldsymbol{X}_{2}}\boldsymbol{X}_{3}}\boldsymbol{e}\right)(\boldsymbol{X}_{1},\boldsymbol{X}_{4}), \nonumber
\end{aligned}
\end{equation}
where $\nabla^{\boldsymbol{A}}\boldsymbol{T}$ for $({}^{0}_{k})$-tensor $\boldsymbol{T}$ is defined as  
\begin{equation}
\begin{aligned}
\left(\nabla^{\boldsymbol{A}}_{\boldsymbol{X}_{0}}\boldsymbol{T}\right)(\boldsymbol{X}_{1},\dots,\boldsymbol{X}_{k}) =& 		
	\boldsymbol{X}_{0}\left(\boldsymbol{T}(\boldsymbol{X}_{1},\dots,\boldsymbol{X}_{k})\right) \\
&	- \sum_{i=1}^{k}\boldsymbol{T}(\boldsymbol{X}_{1},\dots,\nabla^{\boldsymbol{A}}_{\boldsymbol{X}_{0}}\boldsymbol{X}_{i},\dots,\boldsymbol{X}_{k}). \nonumber
\end{aligned}
\end{equation}
Note that $\mathbf{r}^{\boldsymbol{A}}(\boldsymbol{e})$ inherits the symmetries of the Riemannian curvature. If $(\mathcal{B},\boldsymbol{G})$ has constant curvature $\mathrm{k}$, by using (\ref{com_con_sct}), one obtains the operator $\mathrm{D}_{1}:\Gamma(S^{2}T^{\ast}\mathcal{B})\rightarrow \Gamma(S^{2}(\Lambda^{2}T^{\ast}\mathcal{B}))$, $\mathrm{D}_{1}:=2\mathbf{r}^{\boldsymbol{G}}$, which can be written as
\begin{equation}\label{Lin_Comp_thm}
\begin{aligned}
	(\mathrm{D}_{1}\boldsymbol{e})(&\boldsymbol{X}_{1},\boldsymbol{X}_{2},\boldsymbol{X}_{3},\boldsymbol{X}_{4})
	=\mathbf{L}^{\boldsymbol{G}}(\boldsymbol{e})(\boldsymbol{X}_{1},\boldsymbol{X}_{2},\boldsymbol{X}_{3},\boldsymbol{X}_{4})  \\  
	& + \mathrm{k}\Big\{\boldsymbol{G}(\boldsymbol{X}_{2},\boldsymbol{X}_{3})\boldsymbol{e}(\boldsymbol{X}_{1},\boldsymbol{X}_{4})- \boldsymbol{G}(\!\boldsymbol{X}_{1},\boldsymbol{X}_{3})\boldsymbol{e}(\boldsymbol{X}_{2},\boldsymbol{X}_{4})  \\
	& -\boldsymbol{G}(\boldsymbol{X}_{2},\boldsymbol{X}_{4})\boldsymbol{e}(\boldsymbol{X}_{1},\boldsymbol{X}_{3}) + \boldsymbol{G}(\boldsymbol{X}_{1},\boldsymbol{X}_{4})\boldsymbol{e}(\boldsymbol{X}_{2},\boldsymbol{X}_{3})\Big\}. 
\end{aligned}
\end{equation}
One can show that $\mathrm{D}_{1}\circ\mathrm{D}_{0}=0$. The Calabi complex for a $2$-manifold $\mathcal{B}$ reads
\begin{equation}
\begin{gathered}\label{2DCaComp}
\scalebox{1}{\xymatrix@C=3ex{0 \ar[r] &\mathfrak{X}(\mathcal{B}) \ar[r]^-{\mathrm{D}_{0}} &\Gamma(S^{2}T^{\ast}\mathcal{B})\ar[r]^-{\mathrm{D}_{1}} &\Gamma(S^{2}(\Lambda^{2}T^{\ast}\mathcal{B}))  \ar[r] &0 . } } 
\end{gathered}
\end{equation}

The last non-trivial operator of the Calabi complex for $3$-manifolds is defined as follows. Let $\Gamma(V^{5}T^{\ast}\mathcal{B})$ be the space of $({}^{0}_{5})$-tensors such that $\boldsymbol{h}\in\Gamma(V^{5}T^{\ast}\mathcal{B})$ admits the following symmetries.
\begin{equation}
\begin{aligned}
h_{I_1I_2I_3I_4I_5}&=-h_{I_2I_1I_3I_4I_5}=-h_{I_1I_3I_2I_4I_5}, \\   \nonumber
h_{I_{1}I_{2}I_{3}I_{4}I_{5}}&+h_{I_{1}I_{3}I_{4}I_{2}I_{5}}+h_{I_{1}I_{4}I_{2}I_{3}I_{5}}=0, \\
h_{I_{1}I_{2}I_{3}I_{4}I_{5}}&=-h_{I_{1}I_{3}I_{2}I_{4}I_{5}}=-h_{I_{1}I_{2}I_{3}I_{5}I_{4}},
\end{aligned}
\end{equation}
i.e. $\boldsymbol{h}$ is anti-symmetric in the first three entries and has the symmetries of the Riemannian curvature in the last four entries. For $n=3$, $\boldsymbol{h}$ has $3$ independent components that can be represented by $h_{12323}$, $h_{21313}$, and $h_{31212}$. The operator $\mathrm{D}_{2}:\Gamma(S^{2}(\Lambda^{2}T^{\ast}\mathcal{B}))\rightarrow\Gamma(V^{5}T^{\ast}\mathcal{B})$ is defined by
\begin{equation}
\begin{aligned}
(\mathrm{D}_{2}\boldsymbol{s})(&\boldsymbol{X}_1,\dots,\boldsymbol{X}_5) =  \left(\nabla_{\boldsymbol{X}_1}\boldsymbol{s}\right)(\boldsymbol{X}_2,\boldsymbol{X}_3,\boldsymbol{X}_4,\boldsymbol{X}_5) \\
&+ \left(\nabla_{\boldsymbol{X}_2}\boldsymbol{s}\right)(\boldsymbol{X}_3,\boldsymbol{X}_1,\boldsymbol{X}_4,\boldsymbol{X}_5) + \left(\nabla_{\boldsymbol{X}_3}\boldsymbol{s}\right)(\boldsymbol{X}_1,\boldsymbol{X}_2,\boldsymbol{X}_4,\boldsymbol{X}_5). \nonumber
\end{aligned}
\end{equation}
By using $\mathrm{D}_{2}$, the second Bianchi identity for the Riemannian curvature $\boldsymbol{\mathcal{R}}$ can be expressed as $\mathrm{D}_{2}(\boldsymbol{\mathcal{R}})=0$. We have $\mathrm{D}_{2}\circ\mathrm{D}_{1}=0$. Thus, the Calabi complex on a $3$-manifold $\mathcal{B}$ is written as
\begin{equation}
\begin{gathered}\label{3DCaComp}
\scalebox{1}{\xymatrix@C=3ex{0 \ar[r] &\mathfrak{X}(\mathcal{B}) \ar[r]^-{\mathrm{D}_{0}} &\Gamma(S^{2}T^{\ast}\mathcal{B}) \ar[r]^-{\mathrm{D}_{1}} &\Gamma(S^{2}(\Lambda^{2}T^{\ast}\mathcal{B})) \ar[r]^-{\mathrm{D}_{2}} & \Gamma(V^{5}T^{\ast}\mathcal{B})  \ar[r] &0 . } } 
\end{gathered}
\end{equation}
\citet{Calabi1961} showed that there is a systematic way for constructing operators $\mathrm{D}_{i}$, $2\leq i\leq n-1$, for $n$-manifolds. Let $\mathrm{H}^{k}_{C}(\mathcal{B}):=\ker\mathrm{D}_{k}/\mathrm{im}\,\mathrm{D}_{k-1}$ be the $k$-th cohomology group of the Calabi complex. Calabi also showed that the Calabi complex induces a fine resolution of the sheaf of germs of Killing vector fields. Thus, the dimension of $\text{\textTheta}(\mathcal{B})$ determines the dimension of $\mathrm{H}^{k}_{C}(\mathcal{B})$. In particular, if an $n$-manifold $\mathcal{B}\subset\mathbb{R}^{n}$ has finite-dimensional de Rham cohomology groups, one can write  
\begin{equation}\label{CalCohom}
\dim \mathrm{H}^{k}_{C}(\mathcal{B})=\frac{n(n+1)}{2}\dim \mathrm{H}^{k}_{dR}(\mathcal{B})=\frac{n(n+1)}{2}b_{k}(\mathcal{B}). 
\end{equation}
Next, we separately consider $2$- and $3$-submanifolds of the Euclidean space.

\subsubsection{The linear elasticity complex for $3$-manifolds}

Let $\mathcal{B}\subset\mathbb{R}^{3}$ be an open subset and let $\boldsymbol{G}$ and $\{X^{I}\}$ be the Euclidean metric and the Cartesian coordinates of $\mathbb{R}^{3}$, respectively. Consider the following operator  
\begin{equation}
\mathbf{grad}^{s}:\mathfrak{X}(\mathcal{B})\rightarrow\Gamma(S^{2}T\mathcal{B}),~\left(\mathbf{grad}^{s}\boldsymbol{Y}\right)^{IJ}=\frac{1}{2}\left( Y^{I}{}_{,J} + Y^{J}{}_{,I}\right). \nonumber
\end{equation}
It is straightforward to show that $\mathbf{curl}\circ\mathbf{curl}\circ\mathbf{grad}^{s}=0$. If $\boldsymbol{T}\in\Gamma(S^{2}T\mathcal{B})$, then $\mathbf{curl}\circ\mathbf{curl}\,\boldsymbol{T}$ is symmetric as well. Therefore, one obtains the following operator
\begin{equation}
\mathbf{curl}\circ\mathbf{curl}:\Gamma(S^{2}T\mathcal{B})\rightarrow\Gamma(S^{2}T\mathcal{B}),~\left(\mathbf{curl}\circ\mathbf{curl}\,\boldsymbol{T}\right)^{IJ}=\varepsilon_{IKL}\varepsilon_{JMN}T^{LN}{}_{,KM}. \nonumber
\end{equation}
We have $\mathbf{div}\circ\mathbf{curl}\circ\mathbf{curl}=0$. Let $\boldsymbol{\iota}_{0}:\mathfrak{X}(\mathcal{B})\rightarrow\mathfrak{X}(\mathcal{B})$ be the identity map. The global orthonormal coordinate system $\{X^{I}\}$ allows one to define the following three isomorphisms: 
\begin{equation}
\boldsymbol{\iota}_{1}:\Gamma(S^{2}T\mathcal{B})\rightarrow\Gamma(S^{2}T^{\ast}\mathcal{B}),~\left(\boldsymbol{\iota}_{1}(\boldsymbol{T})\right)_{IJ}=T^{IJ}, \nonumber
\end{equation}
the isomorphism $\boldsymbol{\iota}_{2}:\Gamma(S^{2}T\mathcal{B})\rightarrow\Gamma(S^{2}(\Lambda^{2}T^{\ast}\mathcal{B}))$ defined by 
\begin{equation}
\begin{aligned}
&\left(\boldsymbol{\iota}_{2}(\boldsymbol{T})\right)_{2323}=T^{11},~\left(\boldsymbol{\iota}_{2}(\boldsymbol{T})\right)_{3123}=T^{12},~\left(\boldsymbol{\iota}_{2}(\boldsymbol{T})\right)_{1223}=T^{13}, \\&\left(\boldsymbol{\iota}_{2}(\boldsymbol{T})\right)_{1313}=T^{22},~\left(\boldsymbol{\iota}_{2}(\boldsymbol{T})\right)_{2113}=T^{23},~ \left(\boldsymbol{\iota}_{2}(\boldsymbol{T})\right)_{1212} =T^{33}, \nonumber
\end{aligned}
\end{equation}
and $\boldsymbol{\iota}_{3}:\mathfrak{X}(\mathcal{B})\rightarrow\Gamma(V^{5}T^{\ast}\mathcal{B}))$ given by
\begin{equation}
\left(\boldsymbol{\iota}_{3}(\boldsymbol{Y})\right)_{12323}=Y^{1},~~~\left(\boldsymbol{\iota}_{3}(\boldsymbol{Y})\right)_{21313}=Y^{2},~~~\left(\boldsymbol{\iota}_{3}(\boldsymbol{Y})\right)_{31212}=Y^{3}. \nonumber
\end{equation}
Simple calculations show that
\begin{equation}
 \boldsymbol{\iota}_{1}\circ\mathbf{grad}^{s}=\mathrm{D}_{0}\circ\boldsymbol{\iota}_{0},~~
 \boldsymbol{\iota}_{2}\circ\mathbf{curl}\circ\mathbf{curl}=\mathrm{D}_{1}\circ\boldsymbol{\iota}_{1}, ~~
 \boldsymbol{\iota}_{3}\circ\mathbf{div}=\mathrm{D}_{2}\circ\boldsymbol{\iota}_{2}, \nonumber 
\end{equation}
and therefore, the following diagram commutes.
\begin{equation}
\begin{gathered}\label{3DElasCal}
\scalebox{1}{\xymatrix@C=3ex{0 \ar[r] &\mathfrak{X}(\mathcal{B}) \ar[r]^-{\mathbf{grad}^{s}} \ar[d]^{\boldsymbol{\iota}_{0}} &\Gamma(S^{2}T\mathcal{B}) \ar[r]^-{\mathbf{curl}\circ\mathbf{curl}} \ar[d]^{\boldsymbol{\iota}_{1}} &\Gamma(S^{2}T\mathcal{B}) \ar[r]^-{\mathbf{div}} \ar[d]^{\boldsymbol{\iota}_{2}} &\mathfrak{X}(\mathcal{B}) \ar[r] \ar[d]^{\boldsymbol{\iota}_{3}} &0  \\
0 \ar[r] &\mathfrak{X}(\mathcal{B}) \ar[r]^-{\mathrm{D}_{0}} &\Gamma(S^{2}T^{\ast}\mathcal{B}) \ar[r]^-{\mathrm{D}_{1}} &\Gamma(S^{2}(\Lambda^{2}T^{\ast}\mathcal{B})) \ar[r]^-{\mathrm{D}_{2}} & \Gamma(V^{5}T^{\ast}\mathcal{B})  \ar[r] &0 } }  
\end{gathered}
\end{equation}
The first row of (\ref{3DElasCal}) is the linear elasticity complex. Therefore, we observe that useful properties of this complex follow from those of the Calabi complex. In particular, (\ref{CalCohom}) implies that the dimensions of the cohomology groups $\mathrm{H}^{k}_{E3}(\mathcal{B})$ of the linear elasticity complex are given by 
\begin{equation}\label{E3CalCohom}
\begin{aligned}
\dim \mathrm{H}^{1}_{E3}(\mathcal{B})&:= \dim\left(\ker \mathbf{curl}\circ\mathbf{curl}/\mathrm{im}\,\mathbf{grad}^{s}\right)=6 b_{1}(\mathcal{B}), \nonumber \\ 
\dim \mathrm{H}^{2}_{E3}(\mathcal{B})&:= \dim\left(\ker \mathbf{div}/\mathrm{im}\,\mathbf{curl}\circ\mathbf{curl}\right)=6 b_{2}(\mathcal{B}).
\end{aligned}
\end{equation}
We should mention that it is possible to calculate the cohomology groups of the linear elasticity complex without explicitly using its relation with the Calabi complex \citep{HacklZastrow1988}. Note also that the Calabi complex is more general than the linear elasticity complex in the sense that the Calabi complex is valid on any Riemannian manifold with constant curvature. However, the linear elasticity complex is only valid on flat manifolds that admit a global orthonormal coordinate system. \citet[Proposition 2.8]{Yavari2013} showed that for $\boldsymbol{T}\in\Gamma(S^{2}T\mathcal{B})$, there exists $\boldsymbol{Y}\in\mathfrak{X}(\mathcal{B})$ such that $\boldsymbol{T}=\mathbf{grad}^{s}\boldsymbol{Y}$, if and only if
\begin{equation}\label{E3ExactCom}
\begin{aligned}
&\mathbf{curl}\circ\mathbf{curl}\,\boldsymbol{T}=0, \\ &
\int_{\ell} \left[ T^{IJ} - X^{k}(T^{IJ}{}_{,K} - T^{JK}{}_{,I})   \right]dX^{J}=0,  \\ &\int_{\ell} \left( T^{IK}{}_{,J}-T^{JK}{}_{,I} \right)dX^{K}=0, ~\forall\ell\subset\mathcal{B}.
\end{aligned}
\end{equation}
Let $\boldsymbol{N}_{\!\mathcal{C}}$ be the unit outward normal vector field of an arbitrary closed surface $\mathcal{C}\subset \mathcal{B}$. \citet{Gurtin1963} showed that the necessary and sufficient conditions for the existence of $\mathbf{curl}\circ\mathbf{curl}$-potentials for $\boldsymbol{T}$ are
\begin{equation}\label{ExactcurlcurlP}
\mathbf{div}\,\boldsymbol{T}=0, \!\int_{\mathcal{C}}\langle\boldsymbol{T},\boldsymbol{N}_{\!\mathcal{C}}\rangle dA=0, \!\int_{\mathcal{C}}\varepsilon_{KIJ}X^{I} T^{JL} (\boldsymbol{N}_{\!\mathcal{C}})^{L} dA=0, ~\forall \mathcal{C} \subset\mathcal{B}. 
\end{equation}

\subsubsection{The linear elasticity complex for $2$-manifolds}

Next, suppose $\mathcal{B}\subset\mathbb{R}^{2}$ is a $2$-manifold and let $\{X^{I}\}$ be the Cartesian coordinates of $\mathbb{R}^{2}$. Let 
\begin{equation}
\mathrm{D}_{c}:\Gamma(S^{2}T\mathcal{B})\rightarrow C^{\infty}(\mathcal{B}),~\mathrm{D}_{c}\boldsymbol{T}=T^{11}{}_{,22}-2T^{12}{}_{,12} +T^{22}{}_{,11}. \nonumber
\end{equation}
Then, we have $\mathrm{D}_{c}\circ\mathbf{grad}^{s}=0$. Also consider isomorphisms $\text{\textgamma}_{0}$, $\text{\textgamma}_{1}$, and $\text{\textgamma}_{2}$ that are defined as follows: $\text{\textgamma}_{0}:\mathfrak{X}(\mathcal{B})\rightarrow\mathfrak{X}(\mathcal{B})$ and $\text{\textgamma}_{1}:\Gamma(S^{2}T\mathcal{B})\rightarrow\Gamma(S^{2}T^{\ast}\mathcal{B})$
are defined similarly to $\boldsymbol{\iota}_0$ and $\boldsymbol{\iota}_1$ for $3$-manifolds and $\text{\textgamma}_{2}:C^{\infty}(\mathcal{B})\rightarrow\Gamma(S^{2}(\Lambda^{2}T^{\ast}\mathcal{B}))$, $\left(\text{\textgamma}_{2}(f)\right)_{1212}=f$. Using these operators, one obtains the following diagram.   
\begin{equation}
\begin{gathered}\label{2DElasCal}
\scalebox{1}{\xymatrix@C=3ex{0 \ar[r] &\mathfrak{X}(\mathcal{B}) \ar[r]^-{\mathbf{grad}^{s}} \ar[d]^{\text{\textgamma}_{0}} &\Gamma(S^{2}T\mathcal{B}) \ar[r]^-{\mathrm{D}_{c}} \ar[d]^{\text{\textgamma}_{1}} &C^{\infty}(\mathcal{B}) \ar[r] \ar[d]^{\text{\textgamma}_{2}}  &0  \\
0 \ar[r] &\mathfrak{X}(\mathcal{B}) \ar[r]^-{\mathrm{D}_{0}} &\Gamma(S^{2}T^{\ast}\mathcal{B}) \ar[r]^-{\mathrm{D}_{1}} &\Gamma(S^{2}(\Lambda^{2}T^{\ast}\mathcal{B}))   \ar[r] &0 } } 
\end{gathered}
\end{equation}
Therefore, (\ref{CalCohom}) implies that the dimension of the cohomology group $\mathrm{H}^{1}_{E2}(\mathcal{B}):= \ker \mathrm{D}_{c}/\mathrm{im}\,\mathbf{grad}^{s}$ is $3 b_{1}(\mathcal{B})$. Moreover, the necessary and sufficient conditions for the existence of potentials induced by $\mathbf{grad}^{s}$ for $\boldsymbol{T}\in\Gamma(S^{2}T\mathcal{B})$ is $\mathrm{D}_{c}\boldsymbol{T}=0$, together with the integral conditions in (\ref{E3ExactCom}). For $2$-manifolds, it is also possible to write the complex
\begin{equation}
\begin{gathered}\label{2DDivElasCal}
\scalebox{1}{\xymatrix@C=3ex{0 \ar[r] &C^{\infty}(\mathcal{B}) \ar[r]^-{\mathrm{D}_{s}} &\Gamma(S^{2}T\mathcal{B}) \ar[r]^-{\mathbf{div}} &\mathfrak{X}(\mathcal{B}) \ar[r]  &0, } }  
\end{gathered}
\end{equation}
where $(\mathrm{D}_{s}f)^{11}=f_{,22}$, $(\mathrm{D}_{s}f)^{12}=-f_{,12}$, and $(\mathrm{D}_{s}f)^{22}=f_{,11}$. The kernel of $\mathrm{D}_{s}$ is $3$-dimensional, which suggests that the dimension of $\mathrm{H}^{1}_{E2'}(\mathcal{B}):= \ker \mathbf{div}/\mathrm{im}\,\mathrm{D}_{s}$, is $3 b_{1}(\mathcal{B})$. By replacing $\mathcal{C}$ with arbitrary closed curves $\ell$ in (\ref{ExactcurlcurlP}), one obtains the necessary and sufficient conditions for the existence of $\mathrm{D}_{s}$-potentials.

\section{Some Applications in Continuum Mechanics}

Let $\mathcal{B}\subset\mathbb{R}^{n}$, $n=2,3$, be a smooth $n$-manifold. Note that $\mathcal{B}$ can be unbounded as well. For $3$-manifolds, the linear elasticity complex (\ref{3DElasCal}) describes both the kinematics and the kinetics of deformations in the following sense \citep{Kroner1959}: If one considers $\mathfrak{X}(\mathcal{B})$ as the space of displacements, then $\mathbf{grad}^{s}$ associates linear strains to displacements, $\Gamma(S^{2}T\mathcal{B})$ is the space of linear strains, and $\mathbf{curl}\circ\mathbf{curl}$ expresses the compatibility equations for the linear strain. On the other hand, one can consider $\Gamma(S^{2}T\mathcal{B})$ as the space of Beltrami stress functions and consequently, $\mathbf{curl}\circ\mathbf{curl}$ associates symmetric Cauchy stress tensors to Beltrami stress functions, and $\mathbf{div}$ expresses the equilibrium equations. We observed that for $2$-manifolds, the kinematics and the kinetics of deformation are described by two separate complexes: The former is addressed by the complex (\ref{2DElasCal}) and the latter by the complex (\ref{2DDivElasCal}). 

Let a smooth embedding $\varphi:\mathcal{B}\rightarrow\mathcal{S}=\mathbb{R}^{3}$ be a motion of $\mathcal{B}$ in $\mathcal{S}$. Let $\boldsymbol{C}:=\varphi^{\ast}\boldsymbol{g}\in\Gamma(S^{2}T^{\ast}\mathcal{B})$, and $\boldsymbol{F}:=T\varphi$ be the Green deformation tensor and the deformation gradient of $\varphi$, respectively. Also suppose $\boldsymbol{\sigma}\in\Gamma(\text{\large{$\otimes$}}^{2}T\varphi(\mathcal{B}))$, $\boldsymbol{P}\in\Gamma(T\varphi(\mathcal{B})\otimes T\mathcal{B})$, and $\boldsymbol{S}\in\Gamma(\text{\large{$\otimes$}}^{2}T\mathcal{B})$ are the Cauchy, the first, and the second Piola-Kirchhoff stress tensors, respectively. If $\boldsymbol{\sigma}$ is symmetric, then the last two operators of the linear elasticity complex on $(\varphi(\mathcal{B}),\boldsymbol{g})$ address existence of Beltrami stress functions and the equilibrium equations for $\boldsymbol{\sigma}$. The first operator in this complex does not have any apparent physical interpretation. If $\boldsymbol{\sigma}$ is non-symmetric, then the last two operators of the $\boldsymbol{\mathsf{gcd}}$ complex on $(\varphi(\mathcal{B}),\boldsymbol{g})$ describe the kinetics of $\varphi$: The operator $\mathbf{curl}^{\mathsf{T}}$ associates stress functions induced by $\mathbf{curl}^{\mathsf{T}}$ to $\boldsymbol{\sigma}$ and $\mathbf{div}$ is related to the equilibrium equations. Similar conclusions also hold for $\boldsymbol{S}$ if one considers the linear elasticity complex and the $\boldsymbol{\mathsf{gcd}}$ complex on the flat manifold $(\mathcal{B},\boldsymbol{C})$. 

On the other hand, by using $\boldsymbol{P}$, one can write a complex that describes both the kinematics and the kinetics of motion. Let $\boldsymbol{U}\in\Gamma(T\varphi(\mathcal{B}))$ be the displacement field defined as $\boldsymbol{U}(X)=\varphi(X)-X\in T_{\varphi(X)}\mathcal{S}$, $\forall X\in\mathcal{B}$.\footnote{Displacement fields are usually assumed to be vector fields on $\mathcal{B}$. The choice of $\Gamma(T\varphi(\mathcal{B}))$ instead of $\mathfrak{X}(\mathcal{B})$ is equivalent to applying the shifter $T_{X}\mathcal{B}\rightarrow T_{\varphi(X)}\mathcal{S}$ to elements of $\mathfrak{X}(\mathcal{B})$, see \citep[Box 3.1]{MarsdenHughes1994}.} Then, $\mathbf{Grad}\,\boldsymbol{U}$ is the displacement gradient and $\mathbf{Curl}^{\mathsf{T}}$ expresses the compatibility of the displacement gradient. On the other hand, we can assume that $\mathbf{Curl}^{\mathsf{T}}$ associates stress functions induced by $\mathbf{Curl}^{\mathsf{T}}$ to the first Piola-Kirchhoff stress tensor, and that $\mathbf{Div}$ expresses the equilibrium equations. Hence, the $\boldsymbol{\mathsf{GCD}}$ complex is the nonlinear analogue of the linear elasticity complex in the sense that both contain the kinematics and the kinetics of motion simultaneously. Note that the linear elasticity complex is not the linearization of the $\boldsymbol{\mathsf{GCD}}$ complex. In particular, the operator $\mathbf{curl}\circ\mathbf{curl}$ is obtained by linearizing the curvature operator, which is related to the compatibility equations in terms of $\boldsymbol{C}$ and not the displacement gradient.

In the following, we study the applications of the above complexes to the nonlinear compatibility equations and the existence of stress functions in more detail. Classically, the linear and nonlinear compatibility equations are written for flat ambient spaces. We study these equations on ambient spaces with constant curvatures as well.

\subsection{Compatibility Equations}

We study the nonlinear compatibility equations for the cases $\dim\mathcal{B}=\dim\mathcal{S}$, and $\dim\mathcal{B}<\dim\mathcal{S}$ (shells), separately. It is well-known that compatibility equations depend on the topological properties of bodies, see \citet{Yavari2013} and references therein for more details. More specifically, both linear and nonlinear compatibility equations are closely related to $b_{1}(\mathcal{B})$. The nonlinear compatibility equations in terms of the displacement gradient (or equivalently $\boldsymbol{F}$) directly follow from the complexes we introduced earlier for second-order tensors. 

\subsubsection{Bodies with the same dimensions as the ambient space}

Suppose $\dim\mathcal{B}=\dim\mathcal{S}$. Since motion $\varphi:\mathcal{B}\rightarrow\mathcal{S}$ is an embedding, it is easy to observe that the Green deformation tensor $\boldsymbol{C}=\varphi^{\ast}\boldsymbol{g}$ is a Riemannian metric on $\mathcal{B}$. The mapping $\varphi$ is an isometry between $(\mathcal{B},\boldsymbol{C})$ and $(\varphi(\mathcal{B}),\boldsymbol{g})$. Thus, the compatibility problem in terms of $\boldsymbol{C}$ reads: Given a metric $\boldsymbol{C}$ on $\mathcal{B}$, is there any isometry between $(\mathcal{B},\boldsymbol{C})$ and an open subset of $\mathcal{S}$? Note that a priori we do not know which part of $\mathcal{S}$ would be occupied by $\mathcal{B}$. This suggests that a useful compatibility equation should be written only on $\mathcal{B}$. Let $\boldsymbol{R}^{\boldsymbol{g}}$ and $\boldsymbol{\mathcal{R}}^{\boldsymbol{g}}$ be the curvature and the Riemannian curvature of $(\mathcal{S},\boldsymbol{g})$ that are induced by the Levi-Civita connection $\nabla^{\boldsymbol{g}}$. Let $\boldsymbol{X}_{\!1},\dots,\boldsymbol{X}_{\!4}\in\mathfrak{X}(\mathcal{B})$. By using the pull-back $\varphi^{\ast}$ and the push-forward $\varphi_{\ast}$, one concludes that the linear connection $\nabla^{\boldsymbol{g}}$ on $T\mathcal{S}$ induces a linear connection $\varphi^{\ast}\nabla^{\boldsymbol{g}}$ on $T\mathcal{B}$ given by $(\varphi^{\ast}\nabla^{\boldsymbol{g}})_{\boldsymbol{X}_{1}}\!\boldsymbol{X}_{2}=\varphi^{\ast}\big(\nabla^{\boldsymbol{g}}_{\!\varphi_{\ast}\boldsymbol{X}_{1}}\!\varphi_{\ast}\boldsymbol{X}_{2}\big)$. The definition of the Levi-Civita connection $\nabla^{\boldsymbol{g}}$ implies that  
\begin{align}
 \boldsymbol{C}(&\boldsymbol{X}_3,(\varphi^{\ast}\nabla^{\boldsymbol{g}})_{\!\boldsymbol{X}_1}\boldsymbol{X}_2)
= \boldsymbol{g}(\varphi_{\ast}\boldsymbol{X}_3,\nabla^{\boldsymbol{g}}_{\varphi_{\ast}\boldsymbol{X}_1}\,\varphi_{\ast}\boldsymbol{X}_2) \nonumber\\ 
& =\frac{1}{2}\Big\{\boldsymbol{X}_2\left(\boldsymbol{C}(\boldsymbol{X}_1,\boldsymbol{X}_3)\right) + \boldsymbol{X}_1\left(\boldsymbol{C}(\boldsymbol{X}_3,\boldsymbol{X}_2)\right)-\boldsymbol{X}_3\left(\boldsymbol{C}(\boldsymbol{X}_1,\boldsymbol{X}_2)\right) \nonumber \\
& -\boldsymbol{C}\left([\boldsymbol{X}_2,\boldsymbol{X}_3],\boldsymbol{X}_1\right)-\boldsymbol{C}\left([\boldsymbol{X}_1,\boldsymbol{X}_3],\boldsymbol{X}_2\right)-\boldsymbol{C}\left([\boldsymbol{X}_2,\boldsymbol{X}_1],\boldsymbol{X}_3\right) \Big\}, \nonumber
\end{align}
and therefore, $\varphi^{\ast}\nabla^{\boldsymbol{g}}$ coincides with the Levi-Civita connection $\nabla^{\boldsymbol{C}}$ on $(\mathcal{B},\boldsymbol{C})$. Since
\begin{align}
	 (\!\varphi^{\ast}\boldsymbol{R}^{\boldsymbol{g}})(\boldsymbol{X}_1,\boldsymbol{X}_2)\boldsymbol{X}_3 &=\varphi^{\ast}\left(\boldsymbol{R}^{\boldsymbol{g}}(\varphi_{\ast}\boldsymbol{X}_1,\varphi_{\ast}\boldsymbol{X}_2)\varphi_{\ast}\boldsymbol{X}_3\right) \nonumber\\
	&= \nabla^{\boldsymbol{C}}_{\boldsymbol{X}_1}\nabla^{\boldsymbol{C}}_{\boldsymbol{X}_2}\boldsymbol{X}_3-\nabla^{\boldsymbol{C}}_{\boldsymbol{X}_2}\nabla^{\boldsymbol{C}}_{\boldsymbol{X}_1}\boldsymbol{X}_3-\nabla^{\boldsymbol{C}}_{[\boldsymbol{X}_1,\boldsymbol{X}_2]}\boldsymbol{X}_3, \nonumber
\end{align}
we also conclude that $\varphi^{\ast}\boldsymbol{R}^{\boldsymbol{g}}$ is the curvature $\boldsymbol{R}^{\boldsymbol{C}}$ of $(\mathcal{B},\boldsymbol{C})$ induced by $\nabla^{\boldsymbol{C}}$. Therefore, if $\varphi:\mathcal{B}\rightarrow\mathcal{S}$ is an isometry between $(\mathcal{B},\boldsymbol{C})$ and $(\varphi(\mathcal{B}),\boldsymbol{g})$, then we must have 
\begin{equation}\label{Curv_condC}
\boldsymbol{\mathcal{R}}^{\boldsymbol{C}}(\boldsymbol{X}_{\!1},\boldsymbol{X}_{\!2},\boldsymbol{X}_{\!3},\boldsymbol{X}_{\!4})=\boldsymbol{\mathcal{R}}^{\boldsymbol{g}}(\varphi_{\ast}\boldsymbol{X}_{\!1},\varphi_{\ast}\boldsymbol{X}_{\!2},\varphi_{\ast}\boldsymbol{X}_{\!3},\varphi_{\ast}\boldsymbol{X}_{\!4}), 
\end{equation}
where $\boldsymbol{\mathcal{R}}^{\boldsymbol{C}}$ is the Riemannian curvature of $(\mathcal{B},\boldsymbol{C})$. It is hard to check the above condition on arbitrary curved ambient spaces. However, if $\mathcal{S}$ has constant curvature, (\ref{Curv_condC}) admits a simple form. The following theorem states the compatibility equations in terms of $\boldsymbol{C}$ on an ambient space with constant curvature.

\begin{thm}\label{ThmLoc_ConCur} Suppose $\dim\mathcal{B}=\dim\mathcal{S}$, and $(\mathcal{S},\boldsymbol{g})$ has constant curvature $\widehat{\mathrm{k}}$. If $\boldsymbol{C}$ is the Green deformation tensor of a motion $\varphi:\mathcal{B}\rightarrow\mathcal{S}$, then $(\mathcal{B},\boldsymbol{C})$ has constant curvature $\widehat{\mathrm{k}}$ as well, i.e. $\boldsymbol{C}$ satisfies
\begin{equation}\label{com_con_sc}
\boldsymbol{R}^{\boldsymbol{C}}(\boldsymbol{X}_{\!1},\boldsymbol{X}_{\!2})\boldsymbol{X}_{\!3}=\widehat{\mathrm{k}}\boldsymbol{C}(\boldsymbol{X}_{\!3},\boldsymbol{X}_{\!2})\boldsymbol{X}_{\!1} - \widehat{\mathrm{k}}\boldsymbol{C}(\boldsymbol{X}_{\!3},\boldsymbol{X}_{\!1})\boldsymbol{X}_{\!2}.
\end{equation}
Conversely, if $\boldsymbol{C}$ satisfies (\ref{com_con_sc}), then for each $X\in\mathcal{B}$, there is a neighborhood $\mathcal{U}_{X}\subset\mathcal{B}$ of $X$ and a motion $\varphi_{X}:\mathcal{U}_{X}\rightarrow\mathcal{S}$, with $\boldsymbol{C}|_{\mathcal{U}_{X}}$ being its Green deformation tensor. Motion $\varphi_{X}$ is unique up to isometries of $\mathcal{S}$.    
\end{thm}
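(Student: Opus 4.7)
The plan is to split the theorem into three pieces: the forward direction, the local converse, and the uniqueness clause. The forward direction is essentially already contained in the computations preceding the statement, while the converse is the classical local rigidity result for spaces of constant curvature and will require the main work.

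For the forward direction, I would invoke the identities $\varphi^{\ast}\nabla^{\boldsymbol{g}}=\nabla^{\boldsymbol{C}}$ and $\varphi^{\ast}\boldsymbol{R}^{\boldsymbol{g}}=\boldsymbol{R}^{\boldsymbol{C}}$ established just before the theorem. Since $(\mathcal{S},\boldsymbol{g})$ has constant curvature $\widehat{\mathrm{k}}$, equation (\ref{com_con_sct}) holds for $\boldsymbol{R}^{\boldsymbol{g}}$. Applying that identity to the pushed-forward vector fields $\varphi_{\ast}\boldsymbol{X}_{1},\varphi_{\ast}\boldsymbol{X}_{2},\varphi_{\ast}\boldsymbol{X}_{3}$, then pulling back via $\varphi^{\ast}$ and using $\varphi^{\ast}\boldsymbol{g}=\boldsymbol{C}$ together with (\ref{Curv_condC}), immediately produces (\ref{com_con_sc}). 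This step is a few lines of bookkeeping and uses nothing new.

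For the local converse, I would construct $\varphi_{X}$ explicitly via exponential maps. Fix any $x_{0}\in\mathcal{S}$ and any linear isometry $L:(T_{X}\mathcal{B},\boldsymbol{C}_{X})\to(T_{x_{0}}\mathcal{S},\boldsymbol{g}_{x_{0}})$. Choose a normal neighborhood $\mathcal{U}_{X}\subset\mathcal{B}$ of $X$ small enough that $\exp^{\boldsymbol{C}}_{X}$ is a diffeomorphism onto its image and that $L$ maps this image into a normal neighborhood of $x_{0}$, and set
\[
\varphi_{X}:=\exp^{\boldsymbol{g}}_{x_{0}}\circ L\circ(\exp^{\boldsymbol{C}}_{X})^{-1}.
\]
The task is to show $\varphi_{X}^{\ast}\boldsymbol{g}=\boldsymbol{C}$ on $\mathcal{U}_{X}$. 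I would do this by computing both sides in geodesic normal coordinates. On a manifold of constant curvature $\widehat{\mathrm{k}}$, the Jacobi fields along a radial geodesic that vanish at the origin have the explicit closed form $s_{\widehat{\mathrm{k}}}(t)\,\boldsymbol{E}(t)$, where $\boldsymbol{E}(t)$ is a parallel orthonormal frame and $s_{\widehat{\mathrm{k}}}$ is the generalized sine function (i.e.\ $\sin(\sqrt{\widehat{\mathrm{k}}}\,t)/\sqrt{\widehat{\mathrm{k}}}$, $t$, or $\sinh(\sqrt{-\widehat{\mathrm{k}}}\,t)/\sqrt{-\widehat{\mathrm{k}}}$ depending on the sign). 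Consequently the metric components in normal coordinates depend only on $\widehat{\mathrm{k}}$ and on the inner product on the tangent space. Since $L$ is a tangent-space isometry and $\exp^{\boldsymbol{C}}_{X}$ and $\exp^{\boldsymbol{g}}_{x_{0}}$ are radial isometries onto their images, the components of $\boldsymbol{C}$ and of $\varphi_{X}^{\ast}\boldsymbol{g}$ coincide pointwise in these coordinates. This Jacobi-field comparison is the main technical obstacle; it is the only place where the constant-curvature hypothesis (\ref{com_con_sc}) is used in an essential way.

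For uniqueness, suppose $\varphi_{X}$ and $\widetilde{\varphi}_{X}$ both realize $\boldsymbol{C}|_{\mathcal{U}_{X}}$. Then $\psi:=\widetilde{\varphi}_{X}\circ\varphi_{X}^{-1}$ is a local isometry of $(\mathcal{S},\boldsymbol{g})$ defined on an open subset. A local isometry between connected Riemannian manifolds of the same constant curvature is uniquely determined by its $1$-jet at any single point, and (after passing if necessary to the universal cover, which is one of the three model spaces listed in the excerpt) it extends to a global isometry of $\mathcal{S}$. This yields the stated uniqueness up to $\mathrm{Isom}(\mathcal{S})$. Altogether, the only nontrivial analytic ingredient is the Jacobi-field computation in the converse; the rest of the argument is pullback bookkeeping and the standard rigidity property of local isometries.
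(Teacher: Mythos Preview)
Your proposal is correct and follows essentially the same route as the paper. The forward direction is identical: pull back the constant-curvature identity (\ref{com_con_sct}) via $\varphi^{\ast}\boldsymbol{R}^{\boldsymbol{g}}=\boldsymbol{R}^{\boldsymbol{C}}$ and $\varphi^{\ast}\boldsymbol{g}=\boldsymbol{C}$. For the converse, the paper simply fixes a linear isometry $\mathrm{i}:T_{X}\mathcal{B}\to T_{x}\mathcal{S}$ between orthonormal frames and invokes Cartan's theorem (do~Carmo, p.~157) to produce the local isometry $\varphi_{X}$ with $T_{X}\varphi_{X}=\mathrm{i}$; your construction $\varphi_{X}=\exp^{\boldsymbol{g}}_{x_{0}}\circ L\circ(\exp^{\boldsymbol{C}}_{X})^{-1}$ and the Jacobi-field comparison are precisely the content of Cartan's theorem in the constant-curvature case, so you are unpacking what the paper cites as a black box rather than taking a different approach. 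Your separate treatment of uniqueness via $\psi=\widetilde{\varphi}_{X}\circ\varphi_{X}^{-1}$ is also standard and consistent with what Cartan's theorem already gives; just be aware that the extension of a local isometry to an element of $\mathrm{Isom}(\mathcal{S})$ tacitly uses that $\mathcal{S}$ is a complete space form (the same tacit assumption is present in the paper's statement).
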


\begin{proof} If $\boldsymbol{C}=\varphi^{\ast}\boldsymbol{g}$, then by using $\boldsymbol{R}^{\boldsymbol{C}}=\varphi^{\ast}\boldsymbol{R}^{\boldsymbol{g}}$, and (\ref{com_con_sct}), one obtains (\ref{com_con_sc}). Conversely, consider arbitrary points $X\in\mathcal{B}$ and $x\in\mathcal{S}$ and let $\{\mathbf{E}'_{i}\}$ and $\{\mathbf{e}'_{i}\}$ be arbitrary orthonormal bases for $T_{X}\mathcal{B}$ and $T_{x}\mathcal{S}$, respectively. Choose the isometry $\mathrm{i}:T_{X}\mathcal{B}\rightarrow T_{x}\mathcal{S}$ such that $\mathrm{i}(\mathbf{E}'_{i})=\mathbf{e}'_{i}$. Then, by using a theorem due to Cartan \citep[page 157]{Docarmo1992} and (\ref{com_con_sc}), one can construct an isometry $\varphi_{X}:\mathcal{U}_{X}\rightarrow \varphi_{X}(\mathcal{U}_{X})\subset\mathcal{S}$, in a neighborhood $\mathcal{U}_{X}$ of $X$ such that $T_{X}\varphi_{X}=\mathrm{i}$. This concludes the proof.  \end{proof}

\begin{rem}
Theorem \ref{ThmLoc_ConCur} implies that there are many local isometries between manifolds with the same constant sectional curvatures. Formulating sufficient conditions for the existence of global isometries between arbitrary Riemannian manifolds is a hard problem. \citet{Ambrose1956} derived such a condition by using the parallel translation of Riemannian curvature along curves made up of geodesic segments. In particular, his result implies that (\ref{com_con_sc}) is also a sufficient condition for the existence of a global motion $\varphi:\mathcal{B}\rightarrow\mathcal{S}$, if $\mathcal{B}$ is complete and simply-connected. For the flat case $\mathcal{B}\subset\mathcal{S}=\mathbb{R}^{n}$, \citet{Yavari2013} derived the necessary and sufficient conditions for the compatibility of $\boldsymbol{C}$ when $\mathcal{B}$ is non-simply-connected.
\end{rem}

\begin{rem} 
The symmetries of the Riemannian curvature determine the number of compatibility equations induced by (\ref{com_con_sc}), i.e. the number of independent equations that we obtain by writing (\ref{com_con_sc}) in a local coordinate system. Thus, the number of compatibility equations in terms of $\boldsymbol{C}$ only depends on the dimension of the ambient space and is the same as the number of linear compatibility equations induced by the operator $\mathrm{D}_{1}$ in the Calabi complex.
\end{rem}

Next, suppose $\mathcal{B}\subset\mathcal{S}=\mathbb{R}^{n}$, $n=2,3$, and let $\{X^{I}\}$ and $\{x^{i}\}$ be the Cartesian coordinates of $\mathbb{R}^{n}$. Any smooth mapping $\varphi:\mathcal{B}\rightarrow\mathbb{R}^{n}$ induces a displacement field $\boldsymbol{U}\in\Gamma(T\varphi(\mathcal{B}))$ given by $\boldsymbol{U}(X)=\varphi(X)-X$. One can use the $\boldsymbol{\mathsf{GCD}}$ and the $\boldsymbol{\mathsf{GC}}$ complexes for writing the compatibility equations in terms of the displacement gradient. Note that $\varphi$ is assumed to be specified for writing the above complexes. Let $\boldsymbol{\Upsilon}\in\Omega^{0}(\mathcal{B};\mathbb{R}^{3})$ and $\boldsymbol{\kappa}\in\Omega^{1}(\mathcal{B};\mathbb{R}^{3})$. If $\boldsymbol{\kappa}=\boldsymbol{d}\boldsymbol{\Upsilon}$, then $\boldsymbol{I}^{-1}_{1}(\boldsymbol{\kappa})=\mathbf{Grad}\,\boldsymbol{I}^{-1}_{0}(\boldsymbol{\Upsilon})$, where $\boldsymbol{I}^{-1}_{0}(\boldsymbol{\Upsilon})$ and $\boldsymbol{I}^{-1}_{1}(\boldsymbol{\kappa})$ are two-point tensors over any arbitrary smooth mapping $\varphi$. In particular, by using the linear structure of $\mathbb{R}^{3}$, one can choose $\varphi$ to be $\varphi(X)=X+\boldsymbol{\Upsilon}(X)$. Thus, we obtain the following theorem, cf. Theorems \ref{ExactGradCurlTPTen} and \ref{ExactGradS2pTen2D}.

\begin{thm}\label{CompF23D} Given $\boldsymbol{\kappa}=(\boldsymbol{\kappa}^{1},\dots,\boldsymbol{\kappa}^{n})\in\Omega^{1}(\mathcal{B};\mathbb{R}^{n})$ on a connected $n$-manifold $\mathcal{B}\subset\mathbb{R}^{n}$, there exists a smooth mapping $\varphi:\mathcal{B}\rightarrow\mathbb{R}^{n}$ with displacement gradient $\boldsymbol{I}^{-1}_{1}(\boldsymbol{\kappa})$ (or $\boldsymbol{J}^{-1}_{1}(\boldsymbol{\kappa})$ if $n=2$) if and only if 
\begin{equation}\label{CompF23DEq}
\boldsymbol{d}\boldsymbol{\kappa}=0, \text{ and }~~~\int_{\ell}\boldsymbol{\kappa}(\boldsymbol{t}_{\ell})dS=0, ~\forall\ell\subset\mathcal{B}. \nonumber 
\end{equation}
The mapping $\varphi$ is unique up to rigid body translations in $\mathbb{R}^{n}$.   
\end{thm}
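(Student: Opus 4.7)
The plan is to reduce the problem to the $\mathbb{R}^{n}$-valued de Rham theorem applied componentwise, and then exploit the affine structure of $\mathbb{R}^{n}$ to upgrade a scalar potential to a mapping $\varphi$. Write $\boldsymbol{\kappa}=(\boldsymbol{\kappa}^{1},\dots,\boldsymbol{\kappa}^{n})$ with $\boldsymbol{\kappa}^{i}\in\Omega^{1}(\mathcal{B})$, and let $\{\mathbf{e}_{i}\}$ be the standard basis of $\mathbb{R}^{n}$.

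For necessity, suppose $\varphi:\mathcal{B}\rightarrow\mathbb{R}^{n}$ has displacement field $\boldsymbol{U}(X)=\varphi(X)-X\in T_{\varphi(X)}\mathbb{R}^{n}$ with $\mathbf{Grad}\,\boldsymbol{U}=\boldsymbol{I}_{1}^{-1}(\boldsymbol{\kappa})$ (or the corresponding identity via $\boldsymbol{J}_{1}^{-1}$ when $n=2$). The commutative diagram for the $\boldsymbol{\mathsf{GCD}}$ complex (respectively the $\boldsymbol{\mathsf{GC}}$ complex of Theorem~\ref{ExactGradS2pTen2D}) yields $\boldsymbol{\kappa}=\boldsymbol{I}_{1}(\mathbf{Grad}\,\boldsymbol{U})=\boldsymbol{d}\boldsymbol{I}_{0}(\boldsymbol{U})$, so $\boldsymbol{d}\boldsymbol{\kappa}=0$ follows at once from $\boldsymbol{d}\circ\boldsymbol{d}=0$, and each $\boldsymbol{\kappa}^{i}$ is exact on $\mathcal{B}$. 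Applying \eqref{ExactForm} to the exact 1-forms $\boldsymbol{\kappa}^{i}$ yields the vanishing line integrals.

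For sufficiency, I would apply \eqref{ExactForm} componentwise: the hypotheses $d\boldsymbol{\kappa}^{i}=0$ and $\int_{\ell}\boldsymbol{\kappa}^{i}=0$ for every closed curve $\ell\subset\mathcal{B}$ give, by the de Rham theorem, functions $\Upsilon^{i}\in C^{\infty}(\mathcal{B})$ with $d\Upsilon^{i}=\boldsymbol{\kappa}^{i}$. Assemble them into $\boldsymbol{\Upsilon}=(\Upsilon^{1},\dots,\Upsilon^{n})\in\Omega^{0}(\mathcal{B};\mathbb{R}^{n})$, satisfying $\boldsymbol{d}\boldsymbol{\Upsilon}=\boldsymbol{\kappa}$. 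The key step is then to use the affine linear structure of $\mathbb{R}^{n}$ (which canonically identifies $T_{x}\mathbb{R}^{n}$ with $\mathbb{R}^{n}$ for every $x$) to define the smooth mapping
\begin{equation}
\varphi:\mathcal{B}\rightarrow\mathbb{R}^{n}, \qquad \varphi(X):=X+\Upsilon^{i}(X)\mathbf{e}_{i}. \nonumber
\end{equation}
With this choice, the displacement field is $\boldsymbol{U}(X)=\Upsilon^{i}(X)\mathbf{e}_{i}\in T_{\varphi(X)}\mathbb{R}^{n}$, so $\boldsymbol{I}_{0}(\boldsymbol{U})=\boldsymbol{\Upsilon}$, and commutativity of the diagram preceding Theorem~\ref{ExactGradCurlTPTen} gives $\boldsymbol{I}_{1}(\mathbf{Grad}\,\boldsymbol{U})=\boldsymbol{d}\boldsymbol{I}_{0}(\boldsymbol{U})=\boldsymbol{d}\boldsymbol{\Upsilon}=\boldsymbol{\kappa}$, i.e.\ $\mathbf{Grad}\,\boldsymbol{U}=\boldsymbol{I}_{1}^{-1}(\boldsymbol{\kappa})$. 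The case $n=2$ is identical after replacing $\boldsymbol{I}_{k}$ by $\boldsymbol{J}_{k}$.

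For uniqueness, if $\varphi_{1},\varphi_{2}$ both yield displacement gradient $\boldsymbol{I}_{1}^{-1}(\boldsymbol{\kappa})$, then each Cartesian component of $\varphi_{1}-\varphi_{2}$ is a smooth function on $\mathcal{B}$ whose exterior derivative vanishes, hence is constant on the connected manifold $\mathcal{B}$. Thus $\varphi_{1}-\varphi_{2}$ is a constant vector in $\mathbb{R}^{n}$, i.e.\ a rigid body translation. The only subtlety in the whole argument, and the main conceptual point rather than a technical obstacle, is that the isomorphisms $\boldsymbol{I}_{k}$, $\boldsymbol{J}_{k}$ are defined relative to a prescribed $\varphi$; the construction $\varphi(X)=X+\boldsymbol{\Upsilon}(X)$ works precisely because the target is flat and possesses a global parallelism, so the would-be components $F^{iJ}$ read off from $\boldsymbol{\kappa}$ can be integrated into an actual map without any compatibility obstruction beyond the closedness and period conditions already imposed.
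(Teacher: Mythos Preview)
Your proof is correct and follows essentially the same approach as the paper: the paper's argument (contained in the paragraph preceding the theorem) reduces to the componentwise de Rham theorem to obtain $\boldsymbol{\Upsilon}$ with $\boldsymbol{\kappa}=\boldsymbol{d}\boldsymbol{\Upsilon}$, then uses the linear structure of $\mathbb{R}^{n}$ to set $\varphi(X)=X+\boldsymbol{\Upsilon}(X)$, exactly as you do. Your treatment is in fact more complete, since you spell out necessity, sufficiency, and uniqueness, and you correctly flag the point the paper makes about the isomorphisms $\boldsymbol{I}_{k}$ being definable over any smooth $\varphi$.
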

\begin{figure}[t]
\begin{center}
\includegraphics[scale=.5,angle=0]{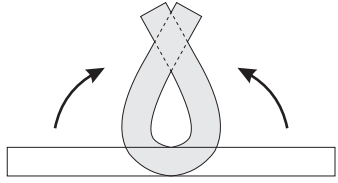}
\end{center}
\caption{\footnotesize A mapping with a compatible displacement gradient, which is not an embedding.}
\label{LocDif}
\end{figure}

\begin{rem}\label{NotEmb} This theorem does not guarantee that the displacement gradient is induced by a motion of $\mathcal{B}$, i.e. $\varphi$ is not an embedding, in general. For example, consider the mapping depicted in Fig. \ref{LocDif} which is not injective. This mapping is a local diffeomorphism, its tangent map is bijective at all points, and its displacement gradient satisfies the above condition. Also note that in contrary to Theorem \ref{ThmLoc_ConCur}, $\varphi$ is unique only up to rigid body translations and not rigid body rotations. This is a direct consequence of the fact that $\mathrm{H}^{0}_{dR}(\mathcal{B})\approx\mathbb{R}$, for any connected manifold $\mathcal{B}$.
\end{rem}

If $\mathrm{H}^{1}_{dR}(\mathcal{B})$ is finite dimensional, then the integral condition in the above theorem merely needs to be checked for a finite number of closed curves and Theorem \ref{CompF23D} is equivalent to Proposition 2.1 of \citep{Yavari2013}. In contrary to the compatibility equations in terms of $\boldsymbol{C}$, by using the notion of displacement, we are explicitly using the linear structure of $\mathbb{R}^{n}$ for writing the compatibility equations in terms of the displacement gradient.

\begin{rem}
The Green deformation tensor does not induce any linear complex for describing the kinematics of $\varphi$. Let $(\mathcal{S},\boldsymbol{g})$ have constant curvature $\widehat{\mathrm{k}}$ and let $C(\mathcal{B},\mathcal{S})$ and $\Gamma_{M}(S^{2}T^{\ast}\mathcal{B})$ be the spaces of smooth embeddings of $\mathcal{B}$ into $\mathcal{S}$ and Riemannian metrics on $\mathcal{B}$, respectively. Consider the operators $\mathrm{D}_{M}:C(\mathcal{B},\mathcal{S})\rightarrow\Gamma_{M}(S^{2}T^{\ast}\mathcal{B})$, $\mathrm{D}_{M}(\varphi):=\varphi^{\ast}\boldsymbol{g}$, and $\mathrm{D}_{R}:\Gamma_{M}(S^{2}T^{\ast}\mathcal{B})\rightarrow\Gamma(S^{2}(\Lambda^{2}T^{\ast}\mathcal{B}))$ given by
\begin{equation}
\begin{aligned}
\big(\mathrm{D}_{R}(\boldsymbol{C})\big)&(\boldsymbol{X}_{\!1},\boldsymbol{X}_{\!2},\boldsymbol{X}_{\!3},\boldsymbol{X}_{\!4})= \boldsymbol{\mathcal{R}}^{\boldsymbol{C}}(\boldsymbol{X}_{\!1},\boldsymbol{X}_{\!2},\boldsymbol{X}_{\!3},\boldsymbol{X}_{\!4}) \\&- \widehat{\mathrm{k}}\boldsymbol{C}(\boldsymbol{X}_{\!3},\boldsymbol{X}_{\!2})\boldsymbol{C}(\boldsymbol{X}_{\!1},\boldsymbol{X}_{\!4}) + \widehat{\mathrm{k}}\boldsymbol{C}(\boldsymbol{X}_{\!3},\boldsymbol{X}_{\!1})\boldsymbol{C}(\boldsymbol{X}_{\!2},\boldsymbol{X}_{\!4}). \nonumber
\end{aligned}
\end{equation}
The compatibility equation (\ref{com_con_sc}) implies that $\mathrm{D}_{R}\circ \mathrm{D}_{M}=0$. However, note that the sequence of operators
\begin{equation}\label{KinmNonlinS}
\xymatrix{
C(\mathcal{B},\mathcal{S}) \ar[r]^-{\mathrm{D}_{M}} &\Gamma_{M}(S^{2}T^{\ast}\mathcal{B}) \ar[r]^-{\mathrm{D}_{R}} &\Gamma(S^{2}(\Lambda^{2}T^{\ast}\mathcal{B})),  } 
\end{equation}
is not a linear complex as the underlying spaces and operators are not linear.\footnote{The complex \eqref{KinmNonlinS} was suggested to us by Marino Arroyo.} The operator $\mathbf{curl}\circ\mathbf{curl}$ of the linear elasticity complex is related to the nonlinear compatibility equations in terms of $\boldsymbol{C}$. Note that the kinematics part of this complex is not the linearization of the kinematics part of the $\boldsymbol{\mathsf{GCD}}$ complex.
\end{rem}

\subsubsection{Shells}

Let $(\mathcal{S},\boldsymbol{g})$ be an orientable $n$-manifold with constant curvature $\widehat{\mathrm{k}}$. We will derive the compatibility equations for motions of hypersurfaces in $\mathcal{S}$, i.e. motions of $(n-1)$-dimensional submanifolds of $\mathcal{S}$. We first tersely review some preliminaries of submanifold theory, see \citep{Docarmo1992,Ten1971} for more details.

Suppose $(\mathcal{B},\boldsymbol{G})$ is a connected orientable submanifold of $\mathcal{S}$, where $\boldsymbol{G}$ is induced by $\boldsymbol{g}$. Let $\nabla$ and $\nabla^{\boldsymbol{g}}$ be the associated Levi-Civita connections of $\mathcal{B}$ and $\mathcal{S}$, respectively, and let $\boldsymbol{X}_{\!1},\dots,\boldsymbol{X}_{\!4}\in\mathfrak{X}(\mathcal{B})$. We have the decomposition $T_{X}\mathcal{S}=T_{X}\mathcal{B}\oplus (T_{X}\mathcal{B})^{\perp}$, $\forall X\in\mathcal{B}$, where $(T_{X}\mathcal{B})^{\perp}$ is the normal complement of $T_{X}\mathcal{B}$ in $T\mathcal{S}$. Any vector field $\boldsymbol{X}_{\!1}$ on $\mathcal{B}$ can be locally extended to a vector field $\widetilde{\boldsymbol{X}}_{\!1}$ on $\mathcal{S}$ and we have $\nabla_{\!\boldsymbol{X}_{\!1}}\!\boldsymbol{X}_{\!2}=(\nabla^{\boldsymbol{g}}_{\!\widetilde{\boldsymbol{X}}_{\!1}}\! \widetilde{\boldsymbol{X}}_{\!2}  )^{\mathrm{T}}$, where $\mathrm{T}$ denotes the tangent component.

The second fundamental form $\boldsymbol{B}$ of $\mathcal{B}$ is defined as $\boldsymbol{B}(\boldsymbol{X}_{\!1},\boldsymbol{X}_{\!2})=\nabla^{\boldsymbol{g}}_{\!\widetilde{\boldsymbol{X}}_{\!1}}\!\widetilde{\boldsymbol{X}}_{\!2}-\nabla_{\!\boldsymbol{X}_{\!1}}\!\boldsymbol{X}_{\!2}$. Let $\boldsymbol{\EuScript{X}}\in\Gamma(T\mathcal{B}^{\perp})=:\mathfrak{X}(\mathcal{B})^{\perp}$. The shape operator of $\mathcal{B}$ is a linear self-adjoint operator $\mathsf{S}_{\boldsymbol{\EuScript{X}}}:T\mathcal{B}\rightarrow T\mathcal{B}$ defined as $\boldsymbol{G}(\mathsf{S}_{\boldsymbol{\EuScript{X}}}(\boldsymbol{X}_{\!1}),\boldsymbol{X}_{\!2})=\boldsymbol{g}(\boldsymbol{B}(\boldsymbol{X}_{\!1},\boldsymbol{X}_{\!2}),\boldsymbol{\EuScript{X}})$. One can show that $\big(\nabla^{\boldsymbol{g}}_{\!\boldsymbol{X}_{\!1}}\!\boldsymbol{\EuScript{X}}\big)^{\mathrm{T}}=-\mathsf{S}_{\boldsymbol{\EuScript{X}}}(\boldsymbol{X}_{\!1})$. It is also possible to define a linear connection $\nabla^{\perp}$ on $T\mathcal{B}^{\perp}$ by $\nabla^{\perp}_{\boldsymbol{X}_{\!1}}\boldsymbol{\EuScript{X}}=(\nabla^{\boldsymbol{g}}_{\!\boldsymbol{X}_{\!1}}\boldsymbol{\EuScript{X}})^{\mathrm{N}}$, where $\mathrm{N}$ denotes the normal component. The normal curvature $\boldsymbol{R}^{\perp}:\mathfrak{X}(\mathcal{B})\times\mathfrak{X}(\mathcal{B})\times\mathfrak{X}(\mathcal{B})^{\perp}\rightarrow\mathfrak{X}(\mathcal{B})^{\perp}$ is the curvature of $\nabla^{\perp}$. One can show that the following relations hold:
\begin{align}
\!\boldsymbol{\mathcal{R}}^{\boldsymbol{g}}(\boldsymbol{X}_{1},\boldsymbol{X}_{2},\boldsymbol{X}_{3},\boldsymbol{X}_{4})&=\boldsymbol{\mathcal{R}}(\boldsymbol{X}_{1},\boldsymbol{X}_{2},\boldsymbol{X}_{3},\boldsymbol{X}_{4}) \nonumber \\ 
& \!\!\!\!\!\!\!\!\!\!\!\!\!\!\!\!\!\!\!\!\!\!\!\!\!\!\!\!\!\!\!\!\!\!\! +\!\boldsymbol{g}(\boldsymbol{B}(\boldsymbol{X}_{1},\boldsymbol{X}_{3}),\boldsymbol{B}(\boldsymbol{X}_{2},\boldsymbol{X}_{4}))-\boldsymbol{g}(\boldsymbol{B}(\boldsymbol{X}_{1},\boldsymbol{X}_{4}),\boldsymbol{B}(\boldsymbol{X}_{2},\boldsymbol{X}_{3})), \label{Ga}\\
\boldsymbol{G}([\mathsf{S}_{\boldsymbol{\EuScript{Y}}},\mathsf{S}_{\boldsymbol{\EuScript{X}}}]\boldsymbol{X}_{\!1},\boldsymbol{X}_{\!2})&= \boldsymbol{g}(\boldsymbol{R}^{\boldsymbol{g}}(\boldsymbol{X}_{\!1},\boldsymbol{X}_{\!2})\boldsymbol{\EuScript{X}},\boldsymbol{\EuScript{Y}}) - \boldsymbol{g}(\boldsymbol{R}^{\perp}(\boldsymbol{X}_{\!1},\boldsymbol{X}_{\!2})\boldsymbol{\EuScript{X}},\boldsymbol{\EuScript{Y}}), \label{Ri} \\ 
\boldsymbol{g}(\!\boldsymbol{R}^{\boldsymbol{g}}\!(\!\boldsymbol{X}_{\!1},\!\boldsymbol{X}_{\!2}\!)\!\boldsymbol{X}_{\!3},\!\boldsymbol{\EuScript{X}}\!) &=  \big(\nabla_{\boldsymbol{X}_{\!1}}\boldsymbol{\EuScript{B}}\big)(\boldsymbol{X}_{\!2},\boldsymbol{X}_{\!3},\boldsymbol{\EuScript{X}}) - \big(\nabla_{\boldsymbol{X}_{\!2}}\boldsymbol{\EuScript{B}}\big)(\boldsymbol{X}_{\!1},\boldsymbol{X}_{\!3},\boldsymbol{\EuScript{X}}),\label{Co}
\end{align}
where $\boldsymbol{\EuScript{X}},\boldsymbol{\EuScript{Y}}\in\mathfrak{X}(\mathcal{B})^{\perp}$, $[\mathsf{S}_{\boldsymbol{\EuScript{Y}}},\mathsf{S}_{\boldsymbol{\EuScript{X}}}]=\mathsf{S}_{\boldsymbol{\EuScript{Y}}}\circ\mathsf{S}_{\boldsymbol{\EuScript{X}}}-\mathsf{S}_{\boldsymbol{\EuScript{X}}}\circ \mathsf{S}_{\boldsymbol{\EuScript{Y}}}$, and $\boldsymbol{\EuScript{B}}(\boldsymbol{X}_{\!1},\boldsymbol{X}_{\!2},\boldsymbol{\EuScript{X}})=\boldsymbol{g}(\boldsymbol{B}(\boldsymbol{X}_{\!1},\boldsymbol{X}_{\!2}),\boldsymbol{\EuScript{X}})$, with
\begin{align}
\big(\nabla_{\boldsymbol{X}_1}\boldsymbol{\EuScript{B}}\big)(\boldsymbol{X}_2,\boldsymbol{X}_3,\boldsymbol{\EuScript{X}})
&=\boldsymbol{X}_1\big(\boldsymbol{\EuScript{B}}(\boldsymbol{X}_2,\boldsymbol{X}_3,\boldsymbol{\EuScript{X}})\big) - \boldsymbol{\EuScript{B}}(\nabla_{\boldsymbol{X}_1}\boldsymbol{X}_2,\boldsymbol{X}_3,\boldsymbol{\EuScript{X}}) \nonumber\\
& - \boldsymbol{\EuScript{B}}(\boldsymbol{X}_2,\nabla_{\boldsymbol{X}_1}\boldsymbol{X}_3,\boldsymbol{\EuScript{X}}) - \boldsymbol{\EuScript{B}}(\boldsymbol{X}_2,\boldsymbol{X}_3,\nabla^{\perp}_{\boldsymbol{X}_1}\boldsymbol{\EuScript{X}}). \nonumber
\end{align}
The equations (\ref{Ga}), (\ref{Ri}), and (\ref{Co}) are called the Gauss, Ricci, and Codazzi equations, respectively. These equations generalize the compatibility equations of the local theory of surfaces. Let $\dim\mathcal{S}-\dim\mathcal{B}=1$. By using (\ref{com_con_sct}) and the fact that the second fundamental form of hypersurfaces can be expressed as $\boldsymbol{B}(\boldsymbol{X}_{\!1},\boldsymbol{X}_{\!2})=\boldsymbol{g}(\boldsymbol{B}(\boldsymbol{X}_{\!1},\boldsymbol{X}_{\!2}),\boldsymbol{\EuScript{N}})\boldsymbol{\EuScript{N}}$, where $\boldsymbol{\EuScript{N}}$ is the unit normal vector field of $\mathcal{B}$, the Gauss equation can be written as 
\begin{align}
& \boldsymbol{\mathcal{R}}(\boldsymbol{X}_1,\boldsymbol{X}_2,\boldsymbol{X}_3,\boldsymbol{X}_4)+
\boldsymbol{G}(\mathsf{S}_{\boldsymbol{\EuScript{N}}}\boldsymbol{X}_3,\boldsymbol{X}_1)\boldsymbol{G}(\mathsf{S}_{\boldsymbol{\EuScript{N}}}\boldsymbol{X}_4,\boldsymbol{X}_2) \nonumber\\
& - \boldsymbol{G}(\mathsf{S}_{\boldsymbol{\EuScript{N}}}\boldsymbol{X}_4,\boldsymbol{X}_1)\boldsymbol{G}(\mathsf{S}_{\boldsymbol{\EuScript{N}}}\boldsymbol{X}_3,\boldsymbol{X}_2)+ \widehat{\mathrm{k}}\boldsymbol{G}(\boldsymbol{X}_1,\boldsymbol{X}_3)\boldsymbol{G}(\boldsymbol{X}_2,\boldsymbol{X}_4) \nonumber  \\
&  - \widehat{\mathrm{k}}\boldsymbol{G}(\boldsymbol{X}_1,\boldsymbol{X}_4)\boldsymbol{G}(\boldsymbol{X}_2,\!\boldsymbol{X}_3) = 0. \nonumber 
\end{align}
For hypersurfaces in an ambient space with constant curvature the Ricci equation becomes vacuous and the Codazzi equation simplifies to read     
\begin{equation}\label{Co2}
\nabla_{\boldsymbol{X}_{\!1}}\!\!\left(\mathsf{S}_{\boldsymbol{\EuScript{N}}}(\boldsymbol{X}_{\!2})\right) - \nabla_{\boldsymbol{X}_{\!2}}\!\!\left(\mathsf{S}_{\boldsymbol{\EuScript{N}}}(\boldsymbol{X}_{\!1})\right) = \mathsf{S}_{\boldsymbol{\EuScript{N}}}([\boldsymbol{X}_{\!1},\boldsymbol{X}_{\!2}]). \nonumber
\end{equation}
Let $\varphi:\mathcal{B}\rightarrow\mathcal{S}$ be an orientation-preserving isometric embedding and let $\bar{\boldsymbol{X}}_{\!1}=\varphi_{\ast}\boldsymbol{X}_{\!1}\in\mathfrak{X}(\varphi(\mathcal{B}))$. The extrinsic deformation tensor $\boldsymbol{\theta}\in\Gamma(S^{2}T^{\ast}\mathcal{B})$ is defined as $\boldsymbol{\theta}(\boldsymbol{X}_{\!1},\boldsymbol{X}_{\!2}):=\boldsymbol{g}(\bar{\boldsymbol{B}}(\bar{\boldsymbol{X}}_{\!1},\bar{\boldsymbol{X}}_{\!2}),\bar{\boldsymbol{\EuScript{N}}} )$, where $\bar{\boldsymbol{B}}$ is the second fundamental form of the hypersurface $\varphi(\mathcal{B})\subset\mathcal{S}$ with the unit normal vector field $\bar{\boldsymbol{\EuScript{N}}}$ and the induced metric $\bar{\boldsymbol{g}}:=\boldsymbol{g}|_{\varphi(\mathcal{B})}$. Let $\boldsymbol{C}=\varphi^{\ast}\bar{\boldsymbol{g}}$ be the Green deformation tensor. The pull-back of the Gauss equation on $\left(\varphi(\mathcal{B}), \bar{\boldsymbol{g}} \right)$ by $\varphi$ reads
\begin{equation}
\begin{aligned}\label{Ga3}
&\boldsymbol{\mathcal{R}}^{\boldsymbol{C}}(\boldsymbol{X}_{\!1},\boldsymbol{X}_{\!2},\boldsymbol{X}_{\!3},\boldsymbol{X}_{\!4})+
\boldsymbol{\theta}(\boldsymbol{X}_{\!1},\boldsymbol{X}_{\!3})\boldsymbol{\theta}(\boldsymbol{X}_{\!2},\boldsymbol{X}_{\!4}) - \boldsymbol{\theta}(\boldsymbol{X}_{\!1},\boldsymbol{X}_{\!4})\boldsymbol{\theta}(\boldsymbol{X}_{\!2},\boldsymbol{X}_{\!3})  \\&+
\widehat{\mathrm{k}}\boldsymbol{C}(\boldsymbol{X}_{\!1},\boldsymbol{X}_{\!3})\boldsymbol{C}(\boldsymbol{X}_{\!2},\boldsymbol{X}_{\!4}) - \widehat{\mathrm{k}}\boldsymbol{C}(\boldsymbol{X}_{\!1},\boldsymbol{X}_{\!4})\boldsymbol{C}(\boldsymbol{X}_{\!2},\boldsymbol{X}_{\!3}) =0. 
\end{aligned}
\end{equation}
The pull-back of the Codazzi equation on $\left(\varphi(\mathcal{B}), \bar{\boldsymbol{g}} \right)$ by $\varphi$ reads
\begin{equation}\label{Co3}
\left(\nabla^{\boldsymbol{C}}_{\boldsymbol{X}_{\!1}}\boldsymbol{\theta}\right)(\boldsymbol{X}_{\!2},\boldsymbol{X}_{\!3}) = \left(\nabla^{\boldsymbol{C}}_{\boldsymbol{X}_{\!2}}\boldsymbol{\theta}\right)(\boldsymbol{X}_{\!1},\boldsymbol{X}_{\!3}),
\end{equation}
i.e. the $({}^{0}_{3})$-tensor $\nabla^{\boldsymbol{C}}\boldsymbol{\theta}$ defined by $\left(\nabla^{\boldsymbol{C}}\boldsymbol{\theta}\right)(\boldsymbol{X}_{\!1},\boldsymbol{X}_{\!2},\boldsymbol{X}_{\!3}):= \left(\nabla^{\boldsymbol{C}}_{\boldsymbol{X}_{\!1}}\boldsymbol{\theta}\right)(\boldsymbol{X}_{\!2},\boldsymbol{X}_{\!3})$, is completely symmetric.

The compatibility problem for motions of hypersurfaces in terms of $\boldsymbol{C}$ and $\boldsymbol{\theta}$ can be stated as follows: Given a metric $\boldsymbol{C}\in\Gamma(S^{2}T^{\ast}\mathcal{B})$ on $\mathcal{B}$ and a symmetric tensor $\boldsymbol{\theta}\in\Gamma(S^{2}T^{\ast}\mathcal{B})$, determine the necessary and sufficient conditions for the existence of an isometric embedding $\varphi:\mathcal{B}\rightarrow\mathcal{S}$ such that $\boldsymbol{C}=\varphi^{\ast}\bar{\boldsymbol{g}}$, and $\boldsymbol{\theta}(\boldsymbol{X}_{\!1},\boldsymbol{X}_{\!2})=\boldsymbol{g}(\bar{\boldsymbol{B}}(\bar{\boldsymbol{X}}_{\!1},\bar{\boldsymbol{X}}_{\!2}),\bar{\boldsymbol{\EuScript{N}}} )$. The reason for including $\boldsymbol{\theta}$ in the compatibility problem is that we want surfaces with identical deformation tenors to be unique up to isometries of the ambient space. This criterion cannot be satisfied if we only consider $\boldsymbol{C}$. For example, consider isometric deformations of a plane in $\mathbb{R}^{3}$ into portions of cylinders with different radii as shown in Fig. \ref{IsoSamp}. All these motions induce the same $\boldsymbol{C}$, but obviously cylinders with different radii cannot be mapped into each other using rigid body motions in $\mathbb{R}^{3}$. The above discussion together with some standard results of submanifold theory (e.g. see \citet[Chapter 2]{IvLa2003} or \citet{KN1969}) give us the following theorem. 
\begin{figure}[t]
\begin{center}
\includegraphics[scale=.5,angle=0]{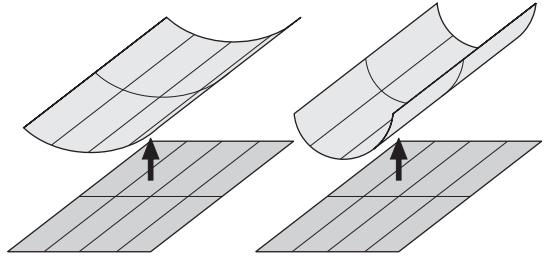}
\end{center}
\caption{\footnotesize Two isometric embeddings of a plane into $\mathbb{R}^{3}$. The resulting surfaces are cylinders with different radii and both motions have the same Green deformation tensor $\boldsymbol{C}$.}
\label{IsoSamp}
\end{figure}

\begin{thm}\label{ShellComp}Suppose $\mathcal{S}$ is a complete, simply-connected $n$-manifold with constant curvature and $\mathcal{B}$ is a connected hypersurface in $\mathcal{S}$. The deformation tensors $\boldsymbol{C}$ and $\boldsymbol{\theta}$ induced by an embedding $\varphi:\mathcal{B}\rightarrow\mathcal{S}$ satisfy (\ref{Ga3}) and (\ref{Co3}). Conversely, if a Riemannian metric $\boldsymbol{C}$ on $\mathcal{B}$ and a symmetric tensor $\boldsymbol{\theta}\in\Gamma(S^{2}T^{\ast}\mathcal{B})$ satisfy (\ref{Ga3}) and (\ref{Co3}), for each $X\in\mathcal{B}$, there is an open neighborhood $\mathcal{U}_{X}\subset\mathcal{B}$ of $X$ and a local embedding $\varphi_{X}:\mathcal{U}_{X}\rightarrow\mathcal{S}$, such that $\boldsymbol{C}|_{\mathcal{U}_{X}}$ and $\boldsymbol{\theta}|_{\mathcal{U}_{X}}$ are the deformation tensors of $\varphi_{X}$. The embedding $\varphi_{X}$ is unique up to isometries of $\mathcal{S}$.    
\end{thm}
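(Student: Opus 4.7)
The plan is to prove the two directions separately: necessity by pulling back the standard Gauss--Codazzi--Ricci apparatus from the image $\varphi(\mathcal{B})$ to $\mathcal{B}$, and sufficiency by constructing $\varphi_X$ via Frobenius applied to the Gauss--Weingarten system, whose integrability conditions are exactly (\ref{Ga3}) and (\ref{Co3}).

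For necessity, since $\varphi:\mathcal{B}\to\mathcal{S}$ is an isometric embedding with $\boldsymbol{C}=\varphi^{\ast}\bar{\boldsymbol{g}}$, the same argument used in Section 3.1.1 for the equidimensional case shows $\varphi^{\ast}\nabla^{\bar{\boldsymbol{g}}}=\nabla^{\boldsymbol{C}}$ and hence $\varphi^{\ast}\boldsymbol{\mathcal{R}}^{\bar{\boldsymbol{g}}}=\boldsymbol{\mathcal{R}}^{\boldsymbol{C}}$. I would then apply the ambient Gauss equation (\ref{Ga}) to the hypersurface $\varphi(\mathcal{B})\subset\mathcal{S}$, replace $\boldsymbol{\mathcal{R}}^{\boldsymbol{g}}$ by the constant-curvature expression derived from (\ref{com_con_sct}), decompose $\bar{\boldsymbol{B}}$ along $\bar{\boldsymbol{\EuScript{N}}}$ as in the paper's hypersurface Gauss formula, and pull the resulting identity back by $\varphi$. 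The scalar component of $\bar{\boldsymbol{B}}$ becomes $\boldsymbol{\theta}$ by definition, and the rewritten identity is exactly (\ref{Ga3}). An analogous pull-back of the simplified Codazzi equation yields (\ref{Co3}); the Ricci equation contributes nothing because the normal bundle of a hypersurface is a line bundle.

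For sufficiency, I would construct $\varphi_X$ locally by solving the Gauss--Weingarten system. Fix $X_0\in\mathcal{B}$ with a $\boldsymbol{C}$-orthonormal frame $\{\mathbf{E}'_I\}$ of $T_{X_0}\mathcal{B}$, and choose a point $x_0\in\mathcal{S}$, a $\boldsymbol{g}$-orthonormal frame $\{\mathbf{e}'_I\}$ spanning a hyperplane in $T_{x_0}\mathcal{S}$, and a unit normal $\mathbf{e}'_n$ completing it. Seek simultaneously an embedding $\varphi$ together with an adapted orthonormal frame $(\mathbf{e}_I,\boldsymbol{\EuScript{N}})$ along it, satisfying $\varphi_{\ast}\mathbf{E}_I=\mathbf{e}_I$, the Gauss formula $\nabla^{\boldsymbol{g}}_{\mathbf{e}_I}\mathbf{e}_J=\Gamma^{K}_{IJ}\mathbf{e}_K+\theta_{IJ}\boldsymbol{\EuScript{N}}$, and the Weingarten relation $\nabla^{\boldsymbol{g}}_{\mathbf{e}_I}\boldsymbol{\EuScript{N}}=-\theta_{I}{}^{K}\mathbf{e}_K$, where $\Gamma^{K}_{IJ}$ and $\theta_{I}{}^{K}$ are built from $\boldsymbol{C}$ and $\boldsymbol{\theta}$. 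Using (\ref{com_con_sct}) to expand the ambient curvature, one checks that the integrability conditions of this first-order Pfaffian system on the orthonormal frame bundle of $\mathcal{S}$ reduce exactly to (\ref{Ga3}) and (\ref{Co3}), with the Ricci side automatic in codimension one. Applying Frobenius---in the form used in the proof of Theorem \ref{ThmLoc_ConCur}---yields a local solution on a neighborhood $\mathcal{U}_X$ of $X_0$, and the induced deformation tensors agree with $\boldsymbol{C}|_{\mathcal{U}_X}$ and $\boldsymbol{\theta}|_{\mathcal{U}_X}$ by construction.

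Uniqueness up to isometries of $\mathcal{S}$ follows in two steps: two local solutions with the same initial frame data coincide by Frobenius uniqueness, and any prescribed discrepancy between initial frames can be absorbed by a global isometry of $\mathcal{S}$, because a complete simply-connected space of constant curvature has an isometry group acting transitively on orthonormal frames \citep{KN1963,Wolf2011}. The main obstacle is the integrability computation in the sufficiency step: tracking how each component of the ambient curvature, reduced along the Gauss--Weingarten ansatz, reproduces (\ref{Ga3}) and (\ref{Co3}) and nothing more. Rather than reproduce this classical but lengthy calculation, I would invoke the fundamental theorem of hypersurfaces from \citep[Chapter 2]{IvLa2003} or \citep{KN1969}, which packages exactly this result for hypersurfaces in a constant-curvature ambient space.
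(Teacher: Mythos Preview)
Your proposal is correct and follows essentially the same approach as the paper. In fact, the paper does not give a standalone proof of this theorem: the preceding discussion derives (\ref{Ga3}) and (\ref{Co3}) by pulling back the Gauss and Codazzi equations exactly as you outline, and for the converse and uniqueness the paper simply invokes the fundamental theorem of hypersurfaces from \citet[Chapter 2]{IvLa2003} and \citet{KN1969}---the same references you ultimately appeal to.
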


If $\mathcal{B}$ is also simply-connected, then (\ref{Ga3}) and (\ref{Co3}) imply that there exists a global embedding $\varphi:\mathcal{B}\rightarrow\mathcal{S}$, which is unique up to isometries of $\mathcal{S}$ \citep{Ten1971}. The relations (\ref{Ga3}) and (\ref{Co3}) generalize the classical compatibility equations of $2$D surfaces in $\mathbb{R}^{3}$ discussed in \citet{CiarletMardare2008}.

One can exploit the \emph{\textsf{GC}} complex for writing the compatibility equations for motions of shells in $\mathbb{R}^{3}$ in terms of the displacement gradient. Using the same notation used in Theorem \ref{ExactGradSSurf2pTen2D}, the upshot can be stated as follows.

\begin{thm}\label{CompF2DSurfaces} Suppose $\mathcal{B}\subset\mathbb{R}^{3}$ is a connected $2$D surface. Given $\boldsymbol{\kappa}=(\boldsymbol{\kappa}^{1},\boldsymbol{\kappa}^{2},\boldsymbol{\kappa}^{3})\in\Omega^{1}(\mathcal{B};\mathbb{R}^{3})$, there is a smooth mapping $\varphi:\mathcal{B}\rightarrow\mathbb{R}^{3}$ with displacement gradient $\mathbf{J}^{-1}_{1}(\boldsymbol{\kappa})$ if and only if 
\begin{equation}\label{CompF2DSurfacesEq}
\boldsymbol{d}\boldsymbol{\kappa}=0, \text{ and }~~~\int_{\ell}\boldsymbol{\kappa}(\boldsymbol{t}_{\ell})dS=0, ~\forall\ell\subset\mathcal{B}. \nonumber 
\end{equation}
The mapping $\varphi$ is unique up to rigid body translations in $\mathbb{R}^{3}$.   
\end{thm}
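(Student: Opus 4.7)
The plan is to reduce the statement to the $\mathbb{R}^3$-valued de Rham theorem on $\mathcal{B}$ via the commutative diagram (\ref{GCSerdeRh}) for the $\mathsf{GC}$ complex of two-point tensors over a $2$D surface, mirroring exactly the argument sketched for Theorem \ref{CompF23D}. First I would observe that, setting $\boldsymbol{F}:=\mathbf{J}_{1}^{-1}(\boldsymbol{\kappa})$, the condition $\boldsymbol{F}=\mathsf{Grad}\,\boldsymbol{U}$ for some $\boldsymbol{U}\in\Gamma(T\varphi(\mathcal{B}))$ is equivalent, by commutativity of (\ref{GCSerdeRh}), to the equation $\boldsymbol{\kappa}=\boldsymbol{d}\boldsymbol{\Upsilon}$ in $\Omega^{1}(\mathcal{B};\mathbb{R}^{3})$, where $\boldsymbol{\Upsilon}:=\mathbf{J}_{0}(\boldsymbol{U})\in\Omega^{0}(\mathcal{B};\mathbb{R}^{3})$.

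Next I would apply the de Rham theorem componentwise. Writing $\boldsymbol{\kappa}=(\boldsymbol{\kappa}^{1},\boldsymbol{\kappa}^{2},\boldsymbol{\kappa}^{3})$, the existence of $\boldsymbol{\Upsilon}$ with $\boldsymbol{d}\boldsymbol{\Upsilon}=\boldsymbol{\kappa}$ is equivalent to $d\boldsymbol{\kappa}^{i}=0$ and $\int_{\ell}\boldsymbol{\kappa}^{i}=0$ for every closed curve $\ell\subset\mathcal{B}$ and every $i=1,2,3$, which is precisely $\boldsymbol{d}\boldsymbol{\kappa}=0$ together with the vector-valued integral condition $\int_{\ell}\boldsymbol{\kappa}(\boldsymbol{t}_{\ell})\,dS=0$ for all closed $\ell\subset\mathcal{B}$. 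Once such $\boldsymbol{\Upsilon}$ is produced, I would use the affine structure of $\mathbb{R}^{3}$ to define
\begin{equation}
\varphi(X):=X+\boldsymbol{U}(X), \qquad \boldsymbol{U}:=\mathbf{J}_{0}^{-1}(\boldsymbol{\Upsilon}), \nonumber
\end{equation}
using the canonical identification $T_{\varphi(X)}\mathbb{R}^{3}\cong\mathbb{R}^{3}$. A direct verification in the chart $\{X^{I}\}$ on $\mathcal{B}$ shows that the displacement gradient of this $\varphi$ is $\mathsf{Grad}\,\boldsymbol{U}=\mathbf{J}_{1}^{-1}(\boldsymbol{\kappa})$, as desired.

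For uniqueness, if two smooth mappings $\varphi_{1},\varphi_{2}:\mathcal{B}\to\mathbb{R}^{3}$ produce the same displacement gradient, then their difference $\boldsymbol{W}:=\varphi_{1}-\varphi_{2}\in\Gamma(T\varphi(\mathcal{B}))$ satisfies $\mathsf{Grad}\,\boldsymbol{W}=0$. Commutativity of (\ref{GCSerdeRh}) and the natural isomorphism $\mathbf{J}_{0}$ then imply $\boldsymbol{d}(\mathbf{J}_{0}(\boldsymbol{W}))=0$, and connectedness of $\mathcal{B}$ (i.e.\ $\mathrm{H}^{0}_{dR}(\mathcal{B})\approx\mathbb{R}$, cf.\ Remark \ref{NotEmb}) forces each component of $\boldsymbol{W}$ to be constant. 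Hence $\varphi_{1}$ and $\varphi_{2}$ differ by a fixed vector in $\mathbb{R}^{3}$, which is exactly a rigid body translation.

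The only point that requires genuine care (rather than pure transcription of the argument for Theorem \ref{CompF23D}) is the commutativity $\mathbf{J}_{1}\circ\mathsf{Grad}=\boldsymbol{d}\circ\mathbf{J}_{0}$ and $\mathbf{J}_{2}\circ\mathsf{C}=\boldsymbol{d}\circ\mathbf{J}_{1}$ in an \emph{arbitrary} (non-orthonormal) chart $\{X^{I}\}$ on the curved surface $\mathcal{B}$, since the definitions of $\mathsf{Grad}$ and $\mathsf{C}$ now involve the metric components $G_{IJ}$ and $\sqrt{\det G_{IJ}}$ whereas $\boldsymbol{d}$ does not. This is the content of the relations used implicitly in assembling diagram (\ref{GCSerdeRh}); it reduces to a direct coordinate computation expressing $d$ of a $1$-form with components $G_{JI}F^{iI}$ and then matching against the stated formula for $\mathsf{C}$. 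Once this routine verification is in hand, the theorem follows immediately.
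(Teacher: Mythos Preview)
Your proposal is correct and follows essentially the same approach as the paper: the paper does not give an explicit proof of this theorem but indicates that it follows from Theorem \ref{ExactGradSSurf2pTen2D} (the $\mathsf{GC}$ complex diagram (\ref{GCSerdeRh})) together with the argument sketched immediately before Theorem \ref{CompF23D}, namely solving $\boldsymbol{\kappa}=\boldsymbol{d}\boldsymbol{\Upsilon}$ via the componentwise de Rham theorem and then setting $\varphi(X)=X+\boldsymbol{\Upsilon}(X)$ using the linear structure of $\mathbb{R}^{3}$. Your write-up is in fact a faithful expansion of precisely this implied argument, including the uniqueness step via $\mathrm{H}^{0}_{dR}(\mathcal{B})\approx\mathbb{R}$ alluded to in Remark \ref{NotEmb}.
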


It is straightforward to extend the above theorem to hypersurfaces in $\mathbb{R}^{n}$. Also note that a discussion similar to Remark \ref{NotEmb} shows that the mapping $\varphi$ in Theorem \ref{CompF2DSurfaces} is not necessarily an embedding and unlike Theorem \ref{ShellComp}, $\varphi$ is unique only up to rigid body translations in $\mathbb{R}^{3}$.

\subsubsection{Linearized elasticity on curved manifolds}

The operator $\mathrm{D}_{1}:\Gamma(S^{2}T^{\ast}\mathcal{B})\rightarrow \Gamma(S^{2}(\Lambda^{2}T^{\ast}\mathcal{B}))$ in the Calabi complex expresses the compatibility equations for the linear strain on the $n$-manifold $(\mathcal{B},\boldsymbol{G})$ with constant curvature $\mathrm{k}$. Note that $\mathrm{D}_{1}$ is obtained by linearizing the Riemannian curvature, and therefore, it is related to the compatibility equations for $\boldsymbol{C}$. Next, we write $\mathrm{D}_{1}$ in a local coordinate system. To simplify the calculations, we use the normal coordinate system $\{X^{I}\}$ in the following sense: At any point $X$ of $\mathcal{B}$, there is a local coordinate system $\{X^{I}\}$ centered at $X$ such that $\nabla_{\boldsymbol{E}_{I}}\boldsymbol{E}_{J}=0$, at $X$, where $\nabla$ is the Levi-Civita connection and $\{\boldsymbol{E}_{I}\}$ is the local basis induced by $\{X^{I}\}$ for $T\mathcal{B}$, which is orthonormal at $X$ \citep{KN1963}. The Cartesian coordinate system of $\mathbb{R}^{n}$ is a global normal coordinate system for the Euclidean space. Let $\boldsymbol{e}\in\Gamma(S^{2}T^{\ast}\mathcal{B})$. In a normal coordinate system $\{X^{I}\}$ at $X$, it is straightforward to verify that
\begin{equation}\label{LComp_temp1}
\begin{aligned}
& \left(\nabla_{\boldsymbol{E}_{I}} \nabla_{\boldsymbol{E}_{K}}\boldsymbol{e}\right)(\boldsymbol{E}_{J},\boldsymbol{E}_{L}) =  \boldsymbol{E}_{I} \left(\boldsymbol{E}_{K} \left(\boldsymbol{e}(\boldsymbol{E}_{J},\boldsymbol{E}_{L}) \right)\right)  -  \boldsymbol{E}_{I}\big(\boldsymbol{e}(\nabla_{\boldsymbol{E}_{K}}\boldsymbol{E}_{J},\boldsymbol{E}_{L}) \\ 
&+ \boldsymbol{e}(\boldsymbol{E}_{J},\nabla_{\boldsymbol{E}_{K}}\boldsymbol{E}_{L})\big). \nonumber
\end{aligned}
\end{equation}
We have $\nabla_{\boldsymbol{E}_{I}}\boldsymbol{E}_{J}=\Gamma^{K}_{IJ}\boldsymbol{E}_{K}$, where $\Gamma^{K}_{IJ}$'s are the Christoffel symbols of $\nabla$. The linear compatibility equations can be written as $\mathrm{D}_{1}(\boldsymbol{e})=0$. By using the above relation, the compatibility equation at $X$ corresponding to the component $(\mathrm{D}_{1}(\boldsymbol{e}))_{IJKL}$ reads
\begin{equation}\label{LComp_temp2}
\begin{aligned}
&\frac{\partial^{2}e_{JL}}{\partial X^{I}\partial X^{K}} \!+\! \frac{\partial^{2}e_{IK}}{\partial X^{J}\partial X^{L}}\! -\! \frac{\partial^{2}e_{JK}}{\partial X^{I}\partial X^{L}} \!-\! \frac{\partial^{2}e_{IL}}{\partial X^{J}\partial X^{K}}\! +\! \left( \frac{\partial \Gamma^{M}_{LJ}}{\partial X^{I}}\! -\! \frac{\partial \Gamma^{M}_{LI}}{\partial X^{J}} \right) e_{MK} \\ &+\! \left( \frac{\partial \Gamma^{M}_{KI}}{\partial X^{J}} \!-\! \frac{\partial \Gamma^{M}_{KJ}}{\partial X^{I}} \right) e_{ML} \!+\! \mathrm{k}\big\{\text{\textdelta}_{JK}e_{IL}\!-\!\text{\textdelta}_{IK}e_{JL} \!-\! \text{\textdelta}_{JL}e_{IK}\!+\! \text{\textdelta}_{IL}e_{JK}\big\}\!=\!0. \nonumber
\end{aligned}
\end{equation}
If $\mathcal{B}\subset\mathbb{R}^{n}$ and $\{X^{I}\}$ is the Cartesian coordinate system, one recovers the classical expression $\mathbf{curl}\circ\mathbf{curl}\;\boldsymbol{e}=0$. For $n=2$, there is only one compatibility equation corresponding to $(\mathrm{D}_{1}(\boldsymbol{e}))_{1212}$:
\begin{align}
& \frac{\partial^{2}e_{11}}{\partial\! X^{2}\partial X^{2}} - 2 \frac{\partial^{2}e_{12}}{\partial  X^{1}\partial X^{2}} + \frac{\partial^{2}e_{22}}{\partial  X^{1}\partial X^{1}} + \left(\frac{\partial \Gamma^{M}_{11}}{\partial X^{2}} - \frac{\partial \Gamma^{M}_{12}}{\partial X^{1}} \right) e_{M2} \nonumber\\
& + \left(\frac{\partial \Gamma^{M}_{22}}{\partial X^{1}} - \frac{\partial \Gamma^{M}_{21}}{\partial X^{2}} \right) e_{M1} -\mathrm{k}(e_{11}+e_{22})=0.\nonumber
\end{align}
For $n=3$, we have $6$ compatibility equations corresponding to $(\mathrm{D}_{1}(\boldsymbol{e}))_{1212}$, $(\mathrm{D}_{1}(\boldsymbol{e}))_{1223}$, $(\mathrm{D}_{1}(\boldsymbol{e}))_{1313}$, $(\mathrm{D}_{1}(\boldsymbol{e}))_{2113}$, $(\mathrm{D}_{1}(\boldsymbol{e}))_{2323}$, and $(\mathrm{D}_{1}(\boldsymbol{e}))_{3123}$.

As an example, let us write the compatibility equation on the $2$-sphere with radius $\EuScript{R}$ and $\mathrm{k}=1/\EuScript{R}^{2}$. We choose the spherical coordinate system with $(X^{1},X^{2}):=(\theta,\phi)$, with $G_{11}=\EuScript{R}^{2}\sin^{2} \phi$, $G_{12}=0$, and $G_{22}=\EuScript{R}^{2}$. The nonzero Christoffel symbols are $\Gamma^{2}_{11}=-\frac{1}{2}\sin 2\phi$, and $\Gamma^{1}_{12}=\Gamma^{1}_{21}=\cot\phi$. Note that $(\theta,\phi)$ is an orthogonal coordinate system but it is not a normal coordinate system at any point. Therefore, we must use the general form of the compatibility equations given in (\ref{Lin_Comp_thm}). Using the relations $\nabla_{\boldsymbol{E}_{1}}\boldsymbol{E}_{1}=\Gamma^{2}_{11}\boldsymbol{E}_{2}$, $\nabla_{\boldsymbol{E}_{1}}\boldsymbol{E}_{2}=\nabla_{\boldsymbol{E}_{2}}\boldsymbol{E}_{1} = \Gamma^{1}_{12}\boldsymbol{E}_{1}$, and $\nabla_{\boldsymbol{E}_{2}}\boldsymbol{E}_{2}=0$, one obtains the following compatibility equation:
\begin{equation}\label{ExTT1}
\begin{aligned}
&\frac{\partial^{2}e_{11}}{\partial X^{\!2} \partial X^{\!2}} -2 \frac{\partial^{2}e_{12}}{\partial X^{1}\partial X^{2}} + \frac{\partial^{2}e_{22}}{\partial X^{1}\partial X^{1}}-\left(\cot X^{2}\right) \frac{\partial e_{11}}{\partial X^{2}}  \\
& -\frac{1}{2}\left(\sin 2X^{2}\right) \frac{\partial e_{22}}{\partial X^{2}} + 2\left(\cot^{2}X^{2}\right) e_{11}=0.\nonumber
\end{aligned}
\end{equation}
The components $e_{IJ}$ are not the conventional components $e_{\theta\theta}$, $e_{\theta\phi}$, and $e_{\phi\phi}$ of the linear strain in the spherical coordinate system as $\boldsymbol{E}_{1}$ and $\boldsymbol{E}_{2}$ are not unit vector fields. Since $e_{11}= \EuScript{R}^{2}\sin^{2}\! \phi\; e_{\theta\theta}$, $e_{12}= \EuScript{R}^{2}\sin \phi\; e_{\theta\phi}$, and $e_{22}= \EuScript{R}^{2} e_{\phi\phi}$, the above equation can be written as 
\begin{equation}
\begin{aligned}
	& \sin^{2}\phi\frac{\partial^{2}e_{\theta\theta}}{\partial \phi^{2} } -2\frac{\partial^{2}\left(e_{\theta\phi}\sin\phi \right)}{\partial \theta\partial \phi} + \frac{\partial^{2}e_{\phi\phi}}{\partial\theta^{2}} +\frac{3}{2}\sin2\phi \frac{\partial e_{\theta\theta}}{\partial \phi} \\
	& -\frac{1}{2}\sin 2\phi\frac{\partial e_{\phi\phi}}{\partial \phi} + (\sin2\phi-1)e_{\theta\theta}=0. \nonumber
\end{aligned}
\end{equation}

Note that instead of using the Calabi complex and the linearization of the Riemannian curvature, it is possible to derive the compatibility equations of the linear strain on manifolds with constant curvature by a less-systematic elimination approach discussed in \citep[\S11]{TRUESDELL1959}.

\subsection{Stress Functions}

Next, we study the applications of the complexes we derived earlier to the existence of stress functions. Let $\varphi:\mathcal{B}\rightarrow\mathcal{S}=\mathbb{R}^{3}$ be a motion of a $3$-manifold $\mathcal{B}\subset{\mathbb{R}^{3}}$ and suppose $\boldsymbol{\sigma}\in\Gamma(S^{2}T\varphi(\mathcal{B}))$ is the associated symmetric Cauchy stress tensor. Since $(\varphi(\mathcal{B}),\boldsymbol{g})$ is a flat manifold, one obtains a diagram similar to (\ref{3DElasCal}) on $(\varphi(\mathcal{B}),\boldsymbol{g})$. Potentials induced by the operator $\mathbf{curl}\circ\mathbf{curl}$ for $\boldsymbol{\sigma}$ are the Beltrami stress functions. The necessary and sufficient conditions for the existence of Beltrami stress functions are given by (\ref{ExactcurlcurlP}): $\boldsymbol{\sigma}$ must be equilibrated and the resultant forces and moments on any closed surface in $\varphi(\mathcal{B})$ must vanish. Such a stress tensor is called totally self-equilibrated \citep{Gurtin1963}. Note that the operator $\mathbf{grad}^{s}$ in the linear elasticity complex on $(\varphi(\mathcal{B}),\boldsymbol{g})$ does not have any obvious physical interpretation.

If $\boldsymbol{\sigma}$ is not symmetric, one can obtain $\mathbf{curl}^{\mathsf{T}}$-stress functions for $\boldsymbol{\sigma}$ by considering the $\boldsymbol{\mathsf{gcd}}$ complex on $(\varphi(\mathcal{B}),\boldsymbol{g})$, i.e. $\mathbf{curl}^{\mathsf{T}}$-stress functions are potentials induced by $\mathbf{curl}^{\mathsf{T}}$. Theorem \ref{ExactcurlTTen} implies that $\boldsymbol{\sigma}$ admits a $\mathbf{curl}^{\mathsf{T}}$-stress function if and only if $\mathbf{div}\,\boldsymbol{\sigma}=0$, and the resultant force on any closed surface in $\varphi(\mathcal{B})$ vanishes. If the closure $\bar{\mathcal{B}}$ of $\mathcal{B}$ is a compact subset of $\mathbb{R}^{3}$, then Theorem \ref{ExactcurlTTen} would be identical to Theorem 2.2 of \citep{Carlson1967}. If $\boldsymbol{\sigma}$ admits a Beltrami stress function, then it also admits a $\mathbf{curl}^{\mathsf{T}}$-stress function. Unlike Beltrami stress functions that are symmetric tensors, even if $\boldsymbol{\sigma}$ is symmetric, $\mathbf{curl}^{\mathsf{T}}$-stress functions are not necessarily symmetric.

For the $2$D case $\mathcal{B}\subset\mathbb{R}^{2}$, Airy stress functions and \textbf{s}-stress functions for symmetric and non-symmetric Cauchy stress tensors are induced by the complex (\ref{2DDivElasCal}) and the $\boldsymbol{\mathsf{sd}}$ complex, respectively. Note that if $\bar{\mathcal{B}}$ is compact, then the complex (\ref{2DDivElasCal}) and the $\boldsymbol{\mathsf{sd}}$ complex are the dual complexes of the $2D$ linear elasticity complex and the $\boldsymbol{\mathsf{gc}}$ complex with respect to the proper $L^{2}$-inner products \citep{AngoshtariYavari2014II}.

In the above discussions, by replacing $(\varphi(\mathcal{B}),\boldsymbol{g})$ with the flat manifold $(\mathcal{B},\boldsymbol{C})$ together with its global orthonormal coordinate system endowed with the Cartesian coordinate system of $\varphi(\mathcal{B})$, one obtains stress functions for the second Piola-Kirchhoff stress tensor $\boldsymbol{S}$ as well. In summary, we observe that the complexes for $\boldsymbol{\sigma}$ and $\boldsymbol{S}$ only describe the kinetics of motion. 

The $\boldsymbol{\mathsf{GCD}}$ complex allows one to introduce $\mathbf{Curl}^{\mathsf{T}}$-stress functions for the first Piola-Kirchhoff stress tensor $\boldsymbol{P}\in\Gamma(T\varphi(\mathcal{B})\otimes T\mathcal{B})$. More specifically, Theorem \ref{ExactGradCurlTPTen} implies that $\boldsymbol{P}$ admits a $\mathbf{Curl}^{\mathsf{T}}$-stress function if and only if $\mathbf{Div}\,\boldsymbol{P}=0$, and the resultant force induced by $\boldsymbol{P}$ on any closed surfaces is zero. This together with Theorem \ref{CompF23D} show that the $\boldsymbol{\mathsf{GCD}}$ complex describes both the kinematics and the kinetics of motion. Similarly, one can define $\mathbf{S}$-stress functions for $2$-manifolds  by using the $\boldsymbol{\mathsf{SD}}$ complex.     

As mentioned in Remark \ref{IndNumCo}, the dimensions of the cohomology groups of the de Rham complex determine the number of independent closed curves and surfaces that one requires in the integral conditions for the existence of potentials. Suppose $\mathrm{H}^{k}_{dR}(\mathcal{B})$ is finite dimensional. Then, for the linear and nonlinear compatibility problems for both $2$- and $3$-manifolds, one merely needs to use $b_{1}(\mathcal{B})$ independent closed curves. The same is true for the existence of stress functions on $2$-manifolds. For $3$-manifolds, we need to consider $b_{2}(\mathcal{B})$ independent closed surfaces. Here, independent closed curves and surfaces are those that induce distinct cohomology classes in $\mathrm{H}^{k}_{dR}(\mathcal{B})$.

\subsection{Further Applications}

In this paper, we assumed that a body $\mathcal{B}$ is an arbitrary submanifold of $\mathbb{R}^{n}$, e.g. it can be unbounded or has infinite-dimensional de Rham cohomologies. We also assumed that all sections and mappings on $\mathcal{B}$ are $C^{\infty}$. One way to relax this smoothness assumption is to impose certain restrictions on the topology of $\mathcal{B}$. In particular, suppose $\mathcal{B}$ is the interior of a compact manifold $\bar{\mathcal{B}}$. Then, $\bar{\mathcal{B}}$ is a compact manifold with boundary and hence all $\mathrm{H}^{k}_{dR}(\bar{\mathcal{B}})$'s are finite dimensional. The compactness of $\bar{\mathcal{B}}$ allows one to define $L^{2}$-inner products for smooth tensors on $\bar{\mathcal{B}}$ and the completion of smooth tensors with respect to these inner products gives us some Sobolev spaces that contain less smooth tensors as well. Using these Sobolev spaces, one can extend smooth complexes discussed here to more general Hilbert complexes.

As was discussed in \citep{AngoshtariYavari2014II}, one can use the corresponding Hilbert complexes to introduce Hodge-type and Helmholtz-type orthogonal decompositions for second-order tensors. Moreover, one can also include Dirichlet boundary conditions in the compatibility problem. On the other hand, these Hilbert complexes can provide suitable solution spaces for mixed formulations for nonlinear elasticity and inelasticity in terms of the displacement gradient and the first Piola-Kirchhoff stress tensor. Similar to the numerical schemes developed in \citep{Arnold2006, Arnold2010} for the Laplace and the linear elasticity equations, such mixed formulations may provide numerical schemes that are compatible with the topology of the underlying bodies.

\paragraph{Acknowledgments.} We benefited from discussions with Marino Arroyo. AA benefited from discussions with Andreas \v{C}ap and Mohammad Ghomi. We are grateful to anonymous reviewers whose useful comments significantly improved the presentation of this paper. This research was partially supported by AFOSR -- Grant No. FA9550-12-1-0290 and NSF -- Grant No. CMMI 1042559 and CMMI 1130856.

\bibliographystyle{plainnat}
\bibliography{biblio}

\end{document}